\newtheorem{theorem}{Theorem}
\newtheorem{corollary}[theorem]{Corollary}
\newtheorem{lemma}[theorem]{Lemma}
\newcommand{\opt}{{\mbox{\textsc{opt}}}}
\newcommand{\pp}{{\mbox{{$\mathfrak{p}$}}}}
\newcommand{\lp}{{\mbox{{$\ell_{\pp}$}}}}
\newcommand{\pra}{\mbox{\textsc{Pr-makespan}}}
\newcommand{\prb}{\mbox{\textsc{Pr-SantaClaus}}}
\newcommand{\prc}{\mbox{\textsc{Pr-norm}}}
\newcommand{\B}{{\cal{B}}}
\newcommand{\J}{{\cal{J}}}
\newcommand{\Z}{{\mathbb{Z}}}
\newcommand{\eps}{{\upvarepsilon}}
\newcommand{\R}{{\cal{R}}}
\newcommand{\Xomit}[1]{ }
\newenvironment{proof}[1][Proof]{\textbf{#1.} }{\ \rule{0.5em}{0.5em}}
\mathchardef\mhyphen="2D
\begin{document}

\title{Efficient approximation schemes for scheduling \\ on a stochastic number of machines}
\date{}
\author{Leah
Epstein\thanks{ Department of Mathematics, University of Haifa,
Haifa, Israel. \texttt{lea@math.haifa.ac.il}. } \and Asaf
Levin\thanks{Faculty of Data and Decision Sciences, The Technion, Haifa, Israel. \texttt{levinas@ie.technion.ac.il}. Partially supported by ISF - Israel Science Foundation grant number 1467/22.}}

\maketitle

\begin{abstract}
We study three two-stage optimization problems with a similar structure and different objectives.  In the first stage of each problem, the goal is to assign input jobs of positive sizes to unsplittable bags. After this assignment is decided, the realization of the number of identical machines that will be available is revealed.  Then, in the second stage, the bags are assigned to machines.  The probability vector of the number of machines in the second stage is known to the algorithm  as part of the input before making the decisions of the first stage.  Thus, the vector of machine completion times is a random variable.  The goal of the first problem is to minimize the expected value of the makespan of the second stage schedule, while the goal of the second problem is to maximize the expected value of the minimum completion time of the machines in the second stage solution.
The goal of the third problem is to minimize the $\lp$ norm for a fixed $\pp>1$, where the norm is applied on machines' completion times vectors.
Each one of the first two problems admits a PTAS as Buchem et al. showed recently. Here we significantly improve all their results by designing an EPTAS for each one of these problems. We also design an EPTAS for $\lp$ norm minimization for any $\pp>1$.
\end{abstract}

{Keywords: Approximation algorithms; Approximation schemes; Two-stage stochastic optimization problems; Multiprocessor scheduling.}

\section{Introduction}
We consider scheduling problems where the goal is to  assign jobs non-preemptively to a set of identical machines.  Unlike traditional scheduling problems \cite{Gr66}, the number of identical machines available to the scheduler, denoted as $k$, is not a part of the input, but it is drawn from a known probability distribution on the set of integers $\{ 1,2,\ldots ,m\}$ for an integer $m\geq 2$ that is a part of the input, and it is given together with the probabilities. As a first stage, the decision maker completes an initial set of decisions, namely it assigns the jobs to $m$ bags, forming a partition of all input jobs.  Later, once the realization of the number of machines becomes known, it packs the bags to the machines, where the packing of a bag to a machine means that all its jobs are scheduled together to this machine. That is, in this later step, every pair of jobs that were assigned to a common bag will be assigned to a common machine, and the decision of the first stage (that is, the partition of input jobs into $m$ bags) is irrecoverable. This two-stage stochastic scheduling problem was recently introduced by Buchem et al. \cite{BEKRSW}.

Formally, the input consists of a set of jobs $\J=\{ 1,2,\ldots ,n\}$ where each job $j\in \J$ has a positive rational size $p_j$ associated with it.  We are given an integer $m\geq 2$ and let $\B =\{ 1,2,\ldots ,m\}$ denote the set of bags. We are given also a probability measure $q$ over the set of integers $\{ 1,2,\ldots, m\}$, where $q_k$ is a rational non-negative number denoting the probability that the number of identical machines in the resulting second stage instance is $k$.  Here, we will use the property that $\sum_{k=1}^m q_k=1$.   In the first stage of our problem, the jobs are split into $m$ bags by an algorithm.  Namely, a feasible solution of the first stage is a function $\sigma_1:\J \rightarrow \B$.  In the second stage, after the value of $k$ has been revealed, the bags are assigned to machines, such that all jobs of each bag are assigned together. Namely, the algorithm also computes assignment functions $\sigma_2^{k}$ defined for every realization $k$ in the support of $q$ of the bags to the set of integers $\{ 1,2,\ldots ,k\}$ denoting the indexes of machines in the instance of the second stage problem (corresponding to the realization of $q$ that is equal to $k$).  So if the realization of $q$ is $k$, a job $j\in \J$ will be assigned to the machine of index $\sigma_2^{(k)} (\sigma_1 (j))$.
The first function $\sigma_1$ maps jobs to bags, and the second function which is based on the value of $k$, $\sigma_2^{k}$, maps bags to machines. The number of bags is $m$ (and $\sigma_1$ is independent of $k$, so the partition into bags does not depend on $k$ which is not known yet at the time of assignment into bags), and the number of machines is $k$, so for every realization of $k$ the schedule of the jobs to the $k$ machines is a feasible (non-preemptive) schedule for $k$ identical machines.

We use the terminology of such scheduling problems and let $W_i$ be the work (also called load) of machine $i$ which is the total size of jobs assigned to machine $i$ in a schedule.  This is also the completion time of a unit speed machine processing continuously the set of jobs assigned to machine $i$ starting at time $0$ (in some order). The {\em makespan of the schedule in realization $k$} is the maximum work of a machine in the second stage solution in the realization $k$ of $q$, and the {\em Santa Claus value of the schedule of realization $k$} is the minimum work of a machine (among the $k$ machines) in the second stage solution in the realization $k$ of $q$.

The expected value of the makespan is the expected value of the random variable of the makespan of the schedule in realization $k$ (the expectation is computed based on all possible values of $k$).
The first problem that we consider is the problem of minimizing the expected value of this random variable.  We denote it as $\pra$ and refer to it as the {\em makespan minimization problem}.  The expected value of the Santa Claus value is the  expected value of the random variable of the Santa Claus value of the schedule in realization $k$. The second problem we consider is the problem of maximizing the expected value of this random variable.  We denote it as $\prb$ and refer to it as the {\em Santa Claus maximization problem}.   While the term makespan is used in the scheduling literature in this meaning the term Santa Claus is not traditional, and it is usually referred to as the minimum work of a machine in the schedule.
Given the length of the resulting terminology for our settings, we prefer to use the non-traditional terminology of Santa Claus value.
The expected value of the random variable is called cost or value for minimization problems, and it is called profit or value for maximization problems.
In the $\lp$ norm minimization problem, the objective for $k$ machines is $(\sum_{i=1}^k W_i^{\pp})^{1/\pp}$. The cost of a solution of the third problem is the expected cost based on the random variable. This last problem is denoted by \prc.

Since the stochastic nature of the problem results only from the different options for $k$ (and once $k$ is known, the problem is deterministic), we say that if the realization of $q$ is $k$, then this is the scenario of index $k$ or simply scenario $k$. We let $\opt_k$ denote the objective function value of the second stage problem in scenario $k$ for an optimal solution for the studied problem, and let $\opt=\sum_{k=1}^m q_k \cdot \opt_k$ denote the optimal value of an optimal solution to our problem.
An optimal solution needs to balance all scenarios, and we denote this optimal solution by $\opt$ as well (that is, we use the same notation as the cost or the value). We stress that $\opt_k$ is not necessarily the optimal objective function value for the input jobs and $k$ machines, since the solution with a fixed set of bags may be inferior.
We will assume that $\frac{1}{\eps}$ is a positive integer such that $\frac{1}{\eps} > 100$ (this can be done without loss of generality as we can always decrease the value of $\eps$ by a factor below $2$ to satisfy this condition). The assumptions on $\eps$ will be used for all schemes (and in particular we will use $\eps<\frac 1{100}$). Let $p_{\max}=\max_{j\in \J} p_j$.

Other models where the relations between schedules with different numbers of machines and a fixed set of jobs can be found in several articles (for example, see \cite{azar2000resource,brehob2000applying,RustogiS13}). See \cite{dye2003stochastic} for a discussion regarding the combination of two-stage stochastic problems and approximation algorithms.

\paragraph{An overview of our results.}
Here, we study both minimization problems and a maximization problem. All types of algorithms defined here are required to have polynomial running times (and our algorithms will in fact have strongly polynomial running times). For a minimization problem, an $\R$-approximation algorithm is an algorithm that always finds a solution that is feasible and it has a
cost at most $\R$ times the cost of an optimal solution.  For a maximization problem, an
$\R$-approximation algorithm is an
algorithm that always finds a feasible solution of
value at least $\frac{1}{\R}$ times the value of an optimal
solution.  The approximation ratio of a given algorithm is the infimum value of the parameter $\R$ such that this algorithm is an $\R$-approximation algorithm.

A polynomial time approximation scheme (PTAS) is a class of
approximation algorithms such that for any $\eps >0$ the class has a
$(1+\eps)$-approximation algorithm. An efficient
polynomial time approximation scheme (EPTAS) is a
stronger concept. This is a PTAS whose running time
is of the form $h(\eps) \cdot poly(n)$ where
$h$ is some (not necessarily polynomial but necessarily computable) function and $poly(n)$ is
a polynomial function of the length of the (binary) encoding of the input. We will justify the assumption $m\leq n$ later, and therefore the running time can be rewritten in this form when the polynomial is on $m$ and $n$.
A fully polynomial time approximation scheme (FPTAS) is an even
stronger concept, defined like an EPTAS, but the function $h$ must
be a polynomial in $\frac 1{\eps}$. In this paper, we are
interested in EPTAS's. Note
that a PTAS may have time complexity of the form
$n^{g(\eps)}$, where $g$ can be any function of $\frac 1{\eps}$ and even a power tower function. However, an EPTAS cannot have such a running time, which makes it more efficient.  The concept of an EPTAS is related to fixed parameter
tractable (FPT) algorithms (see \cite{DF99} for additional information on FPT algorithms).

In this paper, we focus on finding EPTAS's for the three objectives. We develop an EPTAS for the makespan minimization problem in Section \ref{sec:eptas1}.  Then, in Section \ref{sec:eptas2} we establish an EPTAS for the Santa Claus maximization problem \prb.  Last, in Section \ref{sec:eptas3} we modify our scheme for \pra\ in order to obtain an EPTAS for \prc. In the recent work \cite{BEKRSW}, a PTAS was designed for \pra, and another PTAS was designed for \prb. Those PTAS's are of forms that do not allow a simple modification of the PTAS into an EPTAS, and in particular for \prb\ a dynamic program over a state space whose size is exponential is used (a function of $\eps$ appears in the exponent of the input size), and for both schemes enumeration steps of such sizes are applied.

For all objectives, we assume that $n>m$ holds. If $n\leq m$, this means that every job can be assigned to its own bag. For each scenario, it is possible to apply an EPTAS for the corresponding problem (see for example \cite{alon1998approximation}, the details for previous work are discussed further below) and thereby get an EPTAS for the required problem.

The problems studied here generalize strongly NP-hard problems, and therefore one cannot expect to obtain an FPTAS (unless $P=NP$), and thus our results are the best possible. Specifically, the special case of each one of the three problems where the number of machines is known, the probability function has a single value $q_m$ that is equal to $1$, and other probabilities are equal to zero, is known to be NP-hard in the strong sense.
The three objectives studied here are the three main objectives previously studied for scheduling on identical machines.

\paragraph{Related work.}
Stein and Zhong \cite{stein2019scheduling} introduced the scheduling problem with an unknown number of machines. The problem was mostly studied with respect to makespan minimization.
In the deterministic variant \cite{stein2019scheduling}, the goal is to assign a set of jobs with known properties to identical machines. However, only an upper bound $m$ on the number of machines is given. Jobs have to be partitioned into $m$ subsets, such that every subset will act as an inseparable bag. Then, the number of machines $k$ (where $2\leq k \leq m$) becomes known, and the schedule is created using the bags, without unpacking them, as in the problem that we study here, and after the number of machines is revealed, the bags are assigned to the machines, as in the problem that we study.
Thus, this problem also has two steps or levels of computation, but the worst case out of all possible values of $k$ is analyzed, where the comparison for each value of $m$ is to an optimal offline solution for $k$ identical machines and arbitrary bags. It is not hard to see that a constant approximation ratio (of at most $2$) can be obtained using a round-robin approach even for jobs of arbitrary sizes (via pre-sorting). An improved upper bound of $\frac 53+\eps$ was shown using a much more careful assignment to bags \cite{stein2019scheduling}.

A variant where machines have speeds (similar to uniformly related machines) was defined and studied by \cite{eberle2023speed} with respect to makespan. In that version, the number of machines $m$ is known, but not their speeds. The number of required bags is equal to the number of machines, but some machines may have speed zero, so the case of identical machines and the makespan objective is seen as binary speeds in this work.
There are several input types of interest, which are arbitrary jobs, unit size jobs, sand (one large job that can be cut into any required parts), and pebbles (jobs that have arbitrary sizes, but they are relatively small) \cite{eberle2023speed,MS24}.  Tight bounds were proved for the case of sand and makespan with and without speeds \cite{stein2019scheduling,eberle2023speed}, and for the Santa Claus objective without speeds \cite{stein2019scheduling}. All these values are strictly smaller than $1.6$.
For sand, since any partition of the input jobs is allowed, linear programming can be used to find the best partition.
In the case with speeds  for arbitrary sizes of jobs the algorithm of \cite{eberle2023speed} has an approximation ratio of at most $2-\frac 1m$, while special cases allow smaller ratios  \cite{eberle2023speed,MS24}.

In \cite{BOSW23}, Balkanski et al. relate the problem of scheduling with an unknown number of machines and an unknown set of speeds to online algorithms with predictions. In online problems with predictions \cite{mitzenmacher2022algorithms}, the algorithm receives information on the input, where such information could have been computed via machine learning. This model takes into account both the situation where the input can be exactly according to the prediction, but also the option that the input is not according to the prediction. It is required for the algorithm to have a relatively good performance for every input, but the performance has to be better if the input was predicted correctly. In many algorithms the performance improves as the input gets closer to the predicted input. The work of \cite{BOSW23} provides results of this flavor.

As we mentioned, the most relevant work to our work is  \cite{BEKRSW}, where two PTAS's are provided. It is also mentioned in that work that FPTAS's for the case where $m$ is seen as a constant can be designed using standard methods. The problem is called {\it stochastic} due to the probability distribution on scenarios. However, since this distribution is simply a convex combination of the costs for different scenarios, the methods are related to those often used for deterministic algorithms and worst case analysis. The convex combination complicates the problem and it requires carefully tailored methods for the algorithmic design.
We follow the standard worst case studies of two-stage stochastic optimization problems, and we study the optimization problems of optimizing the expected value of the random variable.
However, optimizing the worst scenario is pessimistic and does not allow one to take information learned from previous data into account, while the study of the expected value allows us to prefer the typical scenarios in the bag assignment algorithm.

All three objectives were studied initially for identical machines, such that simple greedy approximation were presented first, some time later PTAS's were designed, and finally EPTAS's were designed or it was observed that one of the known PTAS's is an EPTAS or can be converted into an EPTAS, which is the best possible result for each one of the three cases unless $P=NP$.
We provide additional details for each one of the objectives.
The makespan objective was introduced by Graham \cite{Gr66,Gr69}, where greedy algorithms were studied (see also \cite{coffman1978application,friesen1984tighter}). Twenty years later a PTAS was designed \cite{HS87}. Hochbaum \cite{HocBook} mentions that one of the approaches of assigning large jobs of rounded sizes (namely, solving an integer linear program in fixed dimension) gives an EPTAS, and attributes this approach to D. Shmoys (see \cite{JKV20,ChenJZ18} for recent results).
The Santa Claus objective (where the name was introduced much later \cite{BansalS06}) was studied with respect to greedy heuristics
\cite{friesen1981analysis,Deuermeyer82,CKW92}. A PTAS which is actually an EPTAS was designed by Woeginger \cite{Woe97}. The $\lp$ norm objective function was studied with respect to greedy heuristics \cite{ChW75,CC76,LW95}, and a PTAS and an EPTAS were designed \cite{alon1997approximation}. The same authors generalized their results for other objectives that satisfy natural conditions \cite{alon1998approximation}. The $\lp$ norm of a vector is seen as a more suitable measure of fairness of a schedule compared to bottleneck measures  \cite{AAGKKV95,BP10}, and therefore we study this objective additionally to those studied in \cite{BEKRSW}.

\section{An EPTAS for the makespan minimization problem \label{sec:eptas1}}
We start with the makespan minimization problem, and continue to the other objectives in other sections. The first step will be to apply a discretization on bag sizes, and bound (from above) the set of relevant bag sizes as a function of $\opt$, where we enumerate possible values of $\opt$ and use rounding for this value as well. Our second step is to round the instance of jobs (which will be assigned to bags). We use standard guessing steps and rounding methods for these steps.
Then, we are going to approximate the rounded instance by using templates of the assignment of jobs to bags, and configurations of assigning templates to machines in every realization of $q$.  This will allow us to formulate an integer linear program that has a special structure, namely, it is a {\em two-stage stochastic integer linear program}.  This special structure, as well as trivial bounds on the parameters of this formulation, provides us with an FPT algorithm to solve the problem.  This algorithm has a running time dominated by a function of $\frac{1}{\eps}$ times a polynomial in $n$, and therefore it is an EPTAS.

We will be able to reduce the number of variables such that we can apply Lenstra's algorithm \cite{lenstra1983integer,Kan83} on the integer linear program, and the running time for solving this program will be smaller. This is the case also for the other two EPTAS's that we show, that is, we will apply Lenstra's algorithm for all objectives. The most difficult step in the current section is to reduce the number of different values of makespan for different scenarios. Without this step, the running time will not satisfy the requirements of an EPTAS. The reduction is based on linear grouping, which is typically used for bin packing \cite{FVL81} and not for scheduling. The other EPTAS's that we design require additional non-trivial steps, and we will discuss them later.

The optimal solution of this integer program is transformed into a set of decisions defining the functions $\sigma_1$ and $\sigma_2^{(k)}\ \forall k$.  Next, we present the details of the scheme together with its analysis. In what follows we will present steps where in each of those steps the resulting approximation ratio of the scheme increases by at most a multiplicative factor of $1+\Theta(\eps)$.  By scaling $\eps$ prior to applying the scheme, we get that the approximation ratio of the entire scheme will be at most $1+\eps$.   We will use the next lemma for our proof.

\begin{lemma}\label{boundss}
It holds that $\opt_k \in [p_{\max},n \cdot p_{\max}]$ for any $k \leq m$. Thus, $\opt \in [p_{\max},n \cdot p_{\max}]$.
\end{lemma}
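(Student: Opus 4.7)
The plan is to verify both bounds directly from the definition of the makespan objective, since the claim is largely a bookkeeping observation, and then lift them to $\opt$ by convexity.

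For the lower bound $\opt_k \geq p_{\max}$, I would argue as follows. Fix the bag assignment $\sigma_1$ from the optimal two-stage solution, and fix any second-stage assignment $\sigma_2^{(k)}$ under scenario $k$. Let $j^\star \in \J$ be a job with $p_{j^\star} = p_{\max}$. Then $j^\star$ lies in the bag $\sigma_1(j^\star)$, which is placed on the machine with index $\sigma_2^{(k)}(\sigma_1(j^\star))$; hence the load of that machine is at least $p_{\max}$, and thus the makespan under scenario $k$ is at least $p_{\max}$. This holds in particular for the optimal second-stage assignment, so $\opt_k \geq p_{\max}$.

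For the upper bound $\opt_k \leq n \cdot p_{\max}$, I would use that the makespan of any feasible assignment of jobs to machines is bounded above by the total work. Concretely, the optimal second-stage makespan cannot exceed $\sum_{j \in \J} p_j$, since the largest machine load is certainly at most the sum of all loads; and $\sum_{j \in \J} p_j \leq n \cdot p_{\max}$ because there are $n$ jobs each of size at most $p_{\max}$. Combining the two bounds gives $\opt_k \in [p_{\max}, n \cdot p_{\max}]$ for every $k \in \{1,\dots,m\}$.

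To conclude the bound on $\opt$, I would note that $\opt = \sum_{k=1}^m q_k \cdot \opt_k$ is a convex combination of the numbers $\opt_k$ (recall $q_k \geq 0$ and $\sum_{k=1}^m q_k = 1$). Since all $\opt_k$ lie in the interval $[p_{\max}, n \cdot p_{\max}]$, their convex combination lies in the same interval, giving $\opt \in [p_{\max}, n \cdot p_{\max}]$. There is no real obstacle here: the lemma is essentially a definition-unpacking step whose only purpose is to give the scheme a polynomial-width window in which to enumerate (a rounded version of) $\opt$.
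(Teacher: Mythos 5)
Your proposal is correct and follows essentially the same argument as the paper: the lower bound comes from the machine receiving a job of size $p_{\max}$, the upper bound from bounding the bottleneck load by at most $n$ jobs each of size at most $p_{\max}$ (you bound by the total work, the paper by the jobs on the bottleneck machine, which is the same trivial estimate), and the claim for $\opt$ follows since it is a convex combination of the $\opt_k$.
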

\begin{proof}
In every scenario $k$, we have $\opt_k \geq p_{\max}$, as a job of size $p_{\max}$ needs to be assigned to a machine in every scenario.  Furthermore, $\opt_k \leq n \cdot p_{\max}$ because $\opt_k$ is total size of some job subset with at most $n$ jobs each of which of size at most $p_{\max}$.
Since $\opt$ is a convex combination of the values $\opt_k$, the property for $\opt$ is proved.
\end{proof}

\paragraph{Guessing $\opt$.}  Our first step is to guess the value of $\opt$ within a multiplicative factor of $1+\eps$.  That is, we would like to enumerate a set of polynomial number of candidate values that contains (at least one) value that is in the interval $[\opt, (1+\eps)\cdot \opt)$.  We will show that for a guess in this interval we indeed find a solution of cost $(1+\Theta(\eps))\cdot \opt$. The next lemma shows that it is possible to enumerate such a set of candidate values since $\log_{1+\eps} n \leq \frac n{\eps}$. Our algorithm performs this enumeration.

\begin{lemma}
There is a set consisting of $O(\log_{1+\eps} n)$ values such that this set of values has at least one value in the interval $[\opt, (1+\eps)\opt)$.
\end{lemma}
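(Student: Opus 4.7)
The plan is to use Lemma \ref{boundss} to restrict the range in which $\opt$ lives and then place a geometric grid of ratio $1+\eps$ on this interval. Specifically, by Lemma \ref{boundss} we know $\opt \in [p_{\max}, n\cdot p_{\max}]$, so the interval in which we must cover $\opt$ has multiplicative width $n$.

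Concretely, I would define the candidate set
\[
S = \{\, (1+\eps)^i \cdot p_{\max} : i = 0, 1, 2, \ldots, \lceil \log_{1+\eps} n \rceil \,\}.
\]
The cardinality of $S$ is at most $\lceil \log_{1+\eps} n \rceil + 1 = O(\log_{1+\eps} n)$, which is the bound demanded. Each member of $S$ can be computed in polynomial time from the input.

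To verify the covering property, fix the true value $\opt$. If $\opt = p_{\max}$, then $p_{\max} = (1+\eps)^0 p_{\max} \in S$ already lies in $[\opt, (1+\eps)\opt)$. Otherwise, let $i$ be the smallest non-negative integer with $(1+\eps)^i \cdot p_{\max} \geq \opt$; such $i$ exists because $(1+\eps)^{\lceil \log_{1+\eps} n \rceil} \cdot p_{\max} \geq n \cdot p_{\max} \geq \opt$ by Lemma \ref{boundss}. Setting $v = (1+\eps)^i \cdot p_{\max}$, we have $v \geq \opt$ by construction, while minimality of $i$ gives $(1+\eps)^{i-1} \cdot p_{\max} < \opt$, so $v = (1+\eps) \cdot (1+\eps)^{i-1} \cdot p_{\max} < (1+\eps)\opt$. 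Hence $v \in S \cap [\opt, (1+\eps)\opt)$, as required.

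There is no real obstacle here; the only thing to be careful about is the direction of rounding (rounding up rather than down, since the required interval is $[\opt, (1+\eps)\opt)$ rather than $((1-\eps)\opt, \opt]$) and the trivial verification that $\log_{1+\eps} n = O(\log_{1+\eps} n)$ is polynomial in the input length, which follows because $\log_{1+\eps} n \leq \frac{\ln n}{\ln(1+\eps)} \leq \frac{n}{\eps}$ as already noted in the text preceding the lemma.
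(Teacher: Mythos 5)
Your proposal is correct and follows essentially the same route as the paper: both use Lemma \ref{boundss} to confine $\opt$ to $[p_{\max}, n\cdot p_{\max}]$ and then enumerate a geometric grid of ratio $1+\eps$ over that interval, of size $O(\log_{1+\eps} n)$, one of whose points must land in $[\opt,(1+\eps)\opt)$. Your write-up merely spells out the choice of the minimal index $i$ explicitly, which the paper leaves implicit.
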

\begin{proof}
 We have that the number of integer powers of $1+\eps$ in the interval $[ p_{\max},n \cdot p_{\max}]$ is $O(\log_{1+\eps} n)$ and we can enumerate all of these values in linear time.  One of these values must be in the interval $[\opt, (1+\eps)\opt)$.
\end{proof}

In what follows, with a slight abuse of notation, we let $\opt$ be the value of this guessed candidate value (which is not smaller than $\opt$ and it is larger by at most a factor of $1+\eps$ if the guess is correct).  We will show that if there is a feasible solution to \pra\ whose expected value of the cost is at most $\opt$, then we will construct a feasible solution with expected value of the cost being at most $(1+\Theta(\eps))\cdot \opt$.  By the above lemma, this means that the problem admits an EPTAS as desired.

\paragraph{Rounding bag sizes.}
Now, we  provide a method to gain structure on the set of feasible solutions that we need to consider for finding a near optimal solution.  Given a feasible solution, the size of bag $i$ denoted as $P(i)$ is the total size of jobs assigned to this bag, that is, $P(i)=\sum_{j: \sigma_1(j)=i} p_j$.

\begin{lemma}\label{bigbag}
There exists a solution of cost at most $(1+3\eps)\cdot \opt$ in which the size of every bag is in the interval $[\eps \cdot \opt, (1+\eps) \cdot \opt]$.
\end{lemma}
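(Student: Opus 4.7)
The plan is to start from an optimal solution of cost $\opt$ and transform it into a solution in which every bag has size in $[\eps\cdot\opt,(1+\eps)\cdot\opt]$, paying at most a $(1+3\eps)$ factor in cost. The upper bound is automatic: if a bag $B$ in an optimal solution has total size $P(B)$, then in every realization $k$ the machine receiving $B$ has load at least $P(B)$, so $\opt_k\geq P(B)$ for every $k$ in the support of $q$, and hence $P(B)\leq\sum_k q_k\,\opt_k=\opt\leq(1+\eps)\cdot\opt$. Thus the bags of an optimal solution already satisfy the upper bound, and only the lower bound $\geq\eps\cdot\opt$ needs to be enforced.

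To enforce the lower bound, I would sort the bags of the optimal solution by non-decreasing size and group them greedily: open a new merged bag and keep adding original bags to it in sorted order until its running total reaches $\eps\cdot\opt$, at which point close it; then open the next merged bag and repeat. If the procedure ends with a running total strictly below $\eps\cdot\opt$ (a leftover), absorb it into an already-closed merged bag. A short case analysis on what closes a merged bag shows that each one has size in $[\eps\cdot\opt,(1+\eps)\cdot\opt]$: if it closes while still in the ``small regime'' (the last original bag added has size below $\eps\cdot\opt$), its total is below $2\eps\cdot\opt$, which becomes below $3\eps\cdot\opt\leq(1+\eps)\cdot\opt$ after absorbing a leftover (using $\eps<1/100$); if it closes by including an original bag of size $\geq\eps\cdot\opt$, its total is bounded by its small prefix (below $\eps\cdot\opt$) plus that bag (at most $\opt$), i.e., below $(1+\eps)\cdot\opt$. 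The degenerate case in which the total work is below $\eps\cdot\opt$ cannot arise, because $\opt\leq\opt_1$ equals the total work.

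The main technical step is to bound the expected cost after the merging. The key claim is that in every realization $k$ there exists an assignment of the merged bags to the $k$ machines with makespan at most $\opt_k+3\eps\cdot\opt$. To construct such an assignment, I would use the optimal scenario-$k$ assignment $\sigma_2^{(k)}$ of the original bags: first, place every ``big'' merged bag (one containing an original bag of size $\geq\eps\cdot\opt$) on the machine to which that big original bag was assigned by $\sigma_2^{(k)}$; second, list-schedule the remaining ``purely-small'' merged bags on top of this partial schedule. A standard list scheduling argument, together with the global bound total work $T\leq k\cdot\opt_k$, then yields a maximum final load of at most $T/k+\max(\text{purely-small merged bag size})\leq\opt_k+3\eps\cdot\opt$. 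Averaging with the weights $q_k$ gives expected cost at most $\opt+3\eps\cdot\opt=(1+3\eps)\cdot\opt$, as required. The main obstacle is the per-scenario analysis: verifying that the list scheduling bound still goes through on top of highly non-uniform initial loads coming from the pre-placed big merged bags (including the one that absorbs the leftover), and handling the edge case in which no big merged bag exists so the leftover must instead be absorbed into a purely-small merged bag.
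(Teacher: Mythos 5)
Your proposal is correct and follows the paper's high-level plan (the upper bound $P(i)\leq\opt$ is automatic; small bags are merged into groups of size in $[\eps\cdot\opt,2\eps\cdot\opt)$ with a possible leftover; per scenario the repacking loses only $O(\eps)\cdot\opt$ additively), but your per-scenario repacking argument is genuinely different from the paper's. The paper removes the small bags from each machine of the optimal scenario-$k$ schedule, records the freed amount $s(i)$ on each machine $i$, and greedily refills machines with merged bags under per-machine budgets $s(i)+2\eps\cdot\opt$; this local capacity-matching argument never needs the bound $T/k\leq \opt_k$ and keeps every machine's load within $2\eps\cdot\opt$ of its original value. You instead pin each big merged bag to the machine that its big original bag occupied under $\sigma_2^{(k)}$ and list-schedule the purely-small merged bags on top, invoking the averaging bound $T/k\leq\opt_k$; this buys a shorter argument that does not track the $s(i)$, at the price of the two-case check you yourself flag: the machine attaining the makespan either ends with a purely-small merged bag (so its load is at most $T/k+3\eps\cdot\opt\leq\opt_k+3\eps\cdot\opt$) or carries only pre-placed big merged bags (so its load is at most $\opt_k+\eps\cdot\opt$, since its original load was at most $\opt_k$ and at most one small prefix, of size below $\eps\cdot\opt$, is attached to any big bag). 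Your intermediate claim that the final maximum load is at most $T/k$ plus the largest purely-small merged bag is literally false for machines of the second kind, so the statement should be the maximum of the two cases; the final bound $\opt_k+3\eps\cdot\opt$ is nevertheless correct. Two small points worth making explicit: since you process bags in non-decreasing order of size, a leftover (a final group below $\eps\cdot\opt$) can occur only when every original bag is small, so it is always absorbed into a purely-small merged bag and never into the group consisting of a small prefix plus a big bag, where it could push the size to $(1+2\eps)\cdot\opt$ and violate the upper bound; and in the degenerate-case argument, $\opt$ here denotes the guessed value, which may slightly exceed the true optimum and hence exceed $\opt_1=T$, but $T\geq\opt/(1+\eps)>\eps\cdot\opt$ still rules the case out.
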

\begin{proof}
We note that by our guessing, we have that in a feasible solution it holds that $P(i) \leq \opt$. This holds as every scenario has at least one machine with work of at least $P(i)$ (the machine that receives this bag), and thus $\opt_k \geq P(i)$ for every scenario $k$.
Let $S$ be the bag set in an optimal solution (whose cost is at most $\opt$) of sizes at most $\eps\cdot\opt$.  In order to prove the claim, we apply a bag merging process, and then modify the optimal solution based on the modification in the set of bags.  As long as there exists a pair of bags (of indexes) $i$ and $i'$ of sizes below $\eps\cdot \opt$, we unite such a pair of bags, by reassigning the jobs assigned to $i'$ to the bag $i$, that is, changing $\sigma_1$ so that every job $j$ for which it used to be that $\sigma_1(j)=i'$, its assignment is modified to $\sigma_1(j)=i$. The new set of bags resulting from $S$ is called $S'$. After applying the process of converting $S$ to $S'$, there might be at most one bag of size smaller than $\eps\cdot\opt$. This bag is kept separately and it is not a part of $S'$. Each bag of $S'$ has a total size of at least $\eps\cdot \opt$ and below $2\eps\cdot \opt < \opt$ since $\eps<\frac{1}{100}$.

After applying this change with respect to all such pairs (one pair at a time), in order to obtain a solution where $S$ was replaced with $S'$, the following modifications to $\sigma_2^{(k)}$ are applied for every value of $k$.
For a given scenario $k$, we apply the following process on every machine $i\leq k$.  We compute the total size $s(i)$ of the bags in $S$ assigned to $i$. At this time, the assignment is the one of the original optimal solutions, and we will modify it by replacing bags of $S$ with bags of $S'$ and modifying $\sigma_2^{(k)}$ accordingly. First, all bags of $S$ are removed from the machines (for scenario $k$). Then, the bags are assigned back to the machines, but this time the modified bags are used (that is, the bags of $S'$). Recall that the resulting size of any bag created by uniting the bags from $S$ into bags of $S'$ is at most $2\eps\cdot \opt$. Machine $i$ will get a subset of these bags of total size at most $s(i)+2\eps\cdot\opt$.
Bags are assigned such that one bag is added to the machine at each time, and this is done until the next bag in the list of bags in $S'$ does not fit the upper bound on the total size, in which case it has a total bag size of bags of $S'$ of at least $s(i)$.
Thus, since the bags of $S$ were assigned, all bags in $S'$ are assigned.  The bags not in $S$ are assigned exactly as they used to be assigned in $\sigma_2^{(k)}$ (they were not removed), so the resulting makespan in scenario $k$ is increased by at most $2\eps\cdot\opt$ with respect to $\opt_k$.

After this process is completed for all scenarios, consider the bag that was kept separately, if it exists. All the jobs of this bag are reassigned arbitrarily to one non-empty bag. This last step may increase the size of one bag by at most $\eps\cdot\opt$ and this increases the cost of the solution of each scenario by at most $\eps\cdot\opt$, so this bounds also the increase of the expected value of the cost of the resulting solution after the second stage. The last claim for the cost of an optimal solution holds by linearity of the expected value.
As a result of the modification, there are no bag sizes below $\eps\cdot\opt$, and since one bag may have a size that was increased by an additive term below $\eps\cdot\opt$, while previously no bag had a size above $\opt$, no bag has a size above $(1+\eps)\cdot\opt$.
\end{proof}

In what follows, we will increase the size of a bag to be the next value of the form $(\eps + r\eps^2)\cdot\opt$ for a non-negative integer $r$ (except for empty bags for which the allowed size remains zero). Thus, we will not use precise sizes for bags but upper bounds on sizes, and we call them {\it allowed sizes}.
We let $P'(i)$ be this increased (allowed) size of bag $i$, that is, $P'(i) = \min_{r: (\eps + r\eps^2)\opt \geq P(i)} (\eps + r\eps^2)\opt$.  Since for every subset of bags, the total allowed size of the bags in the set is at most $1+\eps$ times the total size of the bags in the subset, we conclude that this rounding of the bag sizes will increase the cost of any solution by a multiplicative factor of at most $1+\eps$ (and therefore the expected value also increases by at most this factor).
Thus, in what follows, we will consider only bag sizes that belong to the set $B=\{(\eps + r\eps^2)\opt : r=0,1,\ldots ,\frac{1}{\eps^2}\}\cup\{0\}$.  We use this set by Lemma \ref{bigbag}.
We conclude that the following holds.

\begin{corollary}
If the guessed value (\opt) is at least the optimal expected value of the makespan, then there is a solution of expected value of the makespan of at most $(1+\eps)(1+3\eps) \cdot \opt$ that uses only bags with allowed sizes in $B$.
\end{corollary}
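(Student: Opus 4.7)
The plan is to derive the corollary directly from Lemma~\ref{bigbag} together with the allowed-size rounding observation made in the paragraph immediately above its statement. First, invoke Lemma~\ref{bigbag} to obtain a first-stage assignment $\sigma_1$ and second-stage assignments $\sigma_2^{(k)}$ whose expected makespan is at most $(1+3\eps)\opt$ and in which every nonempty bag has actual size $P(i)\in[\eps\opt,(1+\eps)\opt]$. Keep these assignment functions exactly as they are; the only change is to reinterpret each bag's size as its allowed size $P'(i)\in B$.

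Next, I would verify that this reinterpretation is well-defined and costs only a multiplicative $1+\eps$ on each machine. Well-definedness is immediate: the largest element of $B$ equals $(\eps + \frac{1}{\eps^2}\cdot\eps^2)\opt = (1+\eps)\opt \geq P(i)$, so the set over which the minimum defining $P'(i)$ is taken is non-empty with the witnessing index $r$ lying in $\{0,1,\ldots,1/\eps^2\}$; for empty bags the convention $P'(i)=0\in B$ already applies. For the multiplicative bound, I would combine the spacing $\eps^2\opt$ between consecutive allowed sizes with the lower bound $P(i)\geq \eps\opt$ from Lemma~\ref{bigbag} to conclude $P'(i)-P(i)\leq \eps^2\opt \leq \eps\cdot P(i)$, i.e.\ $P'(i)\leq (1+\eps)P(i)$ for every nonempty bag.

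Finally, I would lift this per-bag bound to per-machine, per-scenario, and then to the expected value. For any scenario $k$ and any machine $j\leq k$, summing $P'(i)\leq (1+\eps)P(i)$ over the bags $i$ with $\sigma_2^{(k)}(i)=j$ shows that the work of $j$ measured with allowed sizes is at most $(1+\eps)$ times its actual work. Taking the maximum over $j$ gives a scenario-$k$ makespan under allowed sizes at most $(1+\eps)$ times the scenario-$k$ makespan of the solution produced by Lemma~\ref{bigbag}, and then forming the convex combination $\sum_k q_k(\cdot)$ and using linearity yields an expected makespan under allowed sizes of at most $(1+\eps)(1+3\eps)\opt$, which is exactly the claim.

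I expect no real obstacle here; the content is already packaged in the two cited ingredients. The subtle point — and the reason Lemma~\ref{bigbag} had to do the bag-merging work that it did — is that the additive rounding error $\eps^2\opt$ converts to the multiplicative factor $1+\eps$ only because every nonempty bag has actual size at least $\eps\opt$. Without that lower bound, arbitrarily small bags could be inflated by unbounded multiplicative factors, and the argument would fail.
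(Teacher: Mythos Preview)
Your proposal is correct and follows the same approach as the paper: invoke Lemma~\ref{bigbag} to get a solution of cost at most $(1+3\eps)\opt$ with every nonempty bag size in $[\eps\opt,(1+\eps)\opt]$, then round each bag size up to its allowed size in $B$ and use the lower bound $P(i)\ge\eps\opt$ to convert the additive rounding error $\eps^2\opt$ into the multiplicative factor $1+\eps$. The paper states this reasoning more tersely in the paragraph preceding the corollary (``since for every subset of bags, the total allowed size \ldots\ is at most $1+\eps$ times the total size''), while you spell out the well-definedness check and the per-bag, per-machine, per-scenario lifting explicitly; the content is the same.
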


Note that the allowed size of a bag may be slightly larger than the actual total size of jobs assigned to the bag. Later, the allowed size acts as an upper bound (without a lower bound), and we take the allowed size into account in the calculation of machine completion times in some cases (instead of the size).

\paragraph{Rounding job sizes.}
We apply a similar rounding method for job sizes.  Recall that by our guess, every job $j$ has size at most $\opt$ (see Lemma \ref{boundss}).  Next, we apply the following modification to the jobs of sizes at most $\eps^2 \opt$.  Whenever there is a pair of jobs, each of which has size at most $\eps^2 \opt$, we unite the two jobs (i.e., we delete the two jobs, adding a new job whose size is the sum of the two sizes of the two deleted jobs).  This is equivalent to restricting our attention to solutions of the first stage where we add the constraint that the two (deleted) jobs must be assigned to a common bag.  We repeat this process as long as there is such a pair.  If there is an additional job of size at most $\eps^2\opt$, we delete it from the instance, and in the resulting solution (after applying the steps below on the resulting instance) we add the deleted job to bag $1$. This is done after the entire algorithm completes running, so it may increase the makespan of every scenario and therefore the expected value of the makespan (by a multiplicative factor of $1+\eps^2$), but we do not take it into account in the algorithm or in calculating allowed sizes of bags.  This addition of the deleted job increases the makespan of every scenario by at most $\eps^2\opt$, so the resulting expected value of the makespan will be increased by at most $\eps^2\opt$.  We consider the instance without this possible deleted job, and we prove the following.
\begin{lemma}
The optimal expected value of the makespan of the instance resulting from the above modification of the job sizes among all solutions with allowed bag sizes in the following modified set
$B' = \{(\eps + r\eps^2)\opt : r=0,1,\ldots ,\frac{1}{\eps^2}+2\}\cup \{0\}$    is at most
$(1+2\eps) (1+\eps) (1+3\eps) \cdot \opt$.
\end{lemma}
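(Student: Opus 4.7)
The plan is to start from the near-optimal solution guaranteed by the preceding corollary---with expected makespan at most $(1+\eps)(1+3\eps)\opt$, allowed bag sizes in $B$, and every non-empty bag of size in $[\eps\opt,(1+\eps)\opt]$---and then reshuffle its small jobs so that the merged groups from the modified instance end up together in single bags, without inflating the bag sizes beyond $B'$ or the per-scenario makespan by more than a factor of $1+2\eps$. I would call a job \emph{large} if its size exceeds $\eps^2\opt$ and \emph{small} otherwise; for each bag $i$, let $L_i$ and $s_i$ denote the total sizes of its large and small jobs, so that $P(i)=L_i+s_i$. Two facts I would use are that every merged job in the modified instance has size in $(\eps^2\opt,2\eps^2\opt]$, and that the total size of merged jobs is at most $\sum_i s_i$ (merging preserves mass up to at most one deleted small job, which is not part of this lemma's instance).

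To build the new solution I would keep every large job in its original bag, evacuate all small jobs, and re-insert the merged jobs via a next-fit sweep restricted to the indices $i$ with $s_i>0$: traverse those bags in index order, putting merged jobs into the current bag until its accumulated merged content first exceeds $s_i$, and then advance. Since each merged job has size at most $2\eps^2\opt$, the merged mass placed into any such bag is at most $s_i+2\eps^2\opt$, so the new actual size of bag $i$ is at most $P(i)+2\eps^2\opt\le P'(i)+2\eps^2\opt$. Writing $P'(i)=(\eps+r\eps^2)\opt$ with $r\le 1/\eps^2$, the value $P'(i)+2\eps^2\opt=(\eps+(r+2)\eps^2)\opt$ lies in $B'$, so I can legally declare the new allowed size to be $P''(i):=P'(i)+2\eps^2\opt$ on modified bags, while bags with $s_i=0$ retain $P''(i)=P'(i)\in B\subseteq B'$. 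Feasibility of the sweep (that all merged jobs fit) follows because each bag we advance from absorbs at least $s_i$ of merged mass, while the total mass to place is at most $\sum_i s_i$.

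For the second stage I would keep $\sigma_2^{(k)}$ identical to the corollary's assignment in every realization $k$. In scenario $k$, the new load of any machine exceeds its old load by at most $2\eps^2\opt$ per bag on that machine with $s_i>0$; and each such bag is non-empty, hence its old allowed size is at least $\eps\opt$, so the number of such bags on any single machine is at most (old load)/$(\eps\opt)$. Consequently every machine's new load is at most $(1+2\eps)$ times its old load, so the new per-scenario makespan is at most $(1+2\eps)$ times the old, and taking expectation against $q$ delivers the claimed $(1+2\eps)(1+\eps)(1+3\eps)\cdot\opt$. I expect the main obstacle to be precisely this last step---turning the purely additive $2\eps^2\opt$ per-modified-bag overhead into a multiplicative $(1+2\eps)$ overhead per machine---since without the corollary's lower bound $\eps\opt$ on non-empty bag sizes (to cap the number of modified bags per machine) one would only obtain an additive error of order $m\cdot\eps^2\opt$, which need not be absorbable into $O(\eps)\cdot\opt$.
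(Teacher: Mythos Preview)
Your proposal is correct and follows essentially the same approach as the paper: keep large jobs in place, strip out the small jobs, and redistribute the merged jobs so that bag~$i$ receives at most $s_i+2\eps^2\opt$ of merged mass, then observe that every non-empty bag's allowed size (already $\ge \eps\opt$) grows by at most the additive $2\eps^2\opt$, hence by a multiplicative factor $\le 1+2\eps$. The paper uses a slightly more permissive ``any bag with room'' rule rather than your next-fit sweep, and it reads off the $(1+2\eps)$ factor directly per bag rather than via your per-machine count of modified bags, but these are cosmetic differences rather than substantive ones.
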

\begin{proof}
We apply the following process on the solution $\sigma_1$ of the first stage.  For every non-empty bag $i$, we calculate the total size of jobs of sizes at most $\eps^2\opt$ assigned to $i$, and let $s_i$ be this value.  We delete the assignment of the jobs of size at most $\eps^2\opt$ from $\sigma_1$. We start defining the assignment of the new jobs that we created one by one in the following way.  We compute an upper bound $s_i+2\eps^2\opt$ on the total size of these new jobs that we allow to assign to bag $i$.  For every new job $j$, we assign it to a bag subject to the constraint that after this assignment, the new total size of new jobs in the bag will not exceed the upper bound.  This is possible, because whenever a bag $i$ cannot get a new job, the previously assigned new jobs to $i$ have total size at least $s_i$.  By summing over all $i$, and noting that before the modifications we started with a feasible solution (for the assignment of jobs to bags), we conclude that there is always a feasible bag to use.  In this process all bag sizes were increased by additive terms of at most $2\eps^2\opt$, and hence the modification of the allowed bag sizes.
The claim holds by noting that the total allowed size of each bag was increased by a multiplicative factor of at most $1+2\eps$ since previously all (non-zero) allowed bag sizes were at least $\eps \cdot \opt$.
\end{proof}

We next round up the size of every job $j$ to be the next value that is of the form $(1+\eps)^r \cdot \opt$ for an integer $r$. The size of every job cannot decrease and it may increase by a multiplicative factor of at most $1+\eps$. Job sizes are still larger $\eps^2 \cdot \opt$ and the sizes do not exceed $(1+\eps)\cdot \opt$. Thus, the number of different job sizes is $O(\log_{1+\eps} \frac {1+\eps}{\eps^2}) \leq \frac{2}{\eps^3}-1$.
We increase the allowed size of each non-empty bag by another multiplicative factor of $1+\eps$, and round it up again to the next value of the form $\eps + r\eps^2$.
Thus, the expected value of the makespan of a feasible solution increases by a multiplicative factor of at most $(1+\eps)^2$, where one factor of $1+\eps$ is due to the rounding of jobs, and the second one is due to bringing allowed sizes back to the form $\eps + r\eps^2$.
So by our guessing step, there is a feasible solution with expected value of the makespan of at most $(1+\eps)^3(1+2\eps)(1+3\eps) \opt$ that uses only bags with allowed size in  $B'' = \{(\eps + r\eps^2)\opt : r=0,1,\ldots ,\frac{1+2\eps}{\eps^2}+2\}\cup\{0\}$.

\medskip

A bag is called {\it tight} if the set of jobs of this bag cannot use a bag of a smaller allowed size.  Note that any solution where some bags are not tight can be converted into one where all bags are tight without increasing the cost. Note that the allowed size of a non-zero bag of allowed size $(\eps + r\eps^2)\opt$ can  be written in the form $(r+\frac{1}{\eps})\cdot \eps^2 \cdot \opt$, and since $\frac{1}{\eps}$ is integral, the allowed size of each bag is an integer multiple of ${\eps^2}\cdot \opt$.

\begin{lemma}\label{PandP1}
For any tight bag it holds that the allowed size of the bag is at most $\frac{1}{\eps}$ times its actual size (the total size of jobs assigned to the bag, such that their rounded sizes are considered).
\end{lemma}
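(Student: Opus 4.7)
The plan is to split into cases according to the index $r$ that determines the allowed size $(\eps + r\eps^2)\opt$ of the tight bag, writing $A = (\eps + r\eps^2)\opt$ for the allowed size and $P$ for its actual size. Since the bag is non-empty (the lemma is vacuous otherwise), $P>0$ and hence $A>0$.

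The main idea is that for $r \geq 1$ the tightness condition forces $P$ to be strictly greater than the next smaller allowed size, so $A/P$ is essentially $1+o(1)$. The only case where $A/P$ can be large is the smallest bag ($r=0$), and there we exploit the job-rounding step to guarantee a non-trivial lower bound on $P$.

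First I would handle the case $r \geq 1$. By the definition of tightness, the jobs of the bag do not fit into a bag of allowed size $(\eps + (r-1)\eps^2)\opt$, which is the previous element of $B''$. Hence $P > (\eps + (r-1)\eps^2)\opt$, and therefore
\[
\frac{A}{P} \;<\; \frac{\eps + r\eps^2}{\eps + (r-1)\eps^2} \;=\; \frac{1/\eps + r}{1/\eps + r - 1} \;\leq\; \frac{1/\eps + 1}{1/\eps} \;=\; 1+\eps,
\]
which is well below $1/\eps$ since $\eps < \frac{1}{100}$.

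Then I would handle the remaining case $r = 0$, where $A = \eps\cdot\opt$. Here I invoke the preceding rounding of job sizes: after the merging step, every undeleted job has size greater than $\eps^2\cdot\opt$, and after the subsequent rounding-up to the grid $(1+\eps)^r\cdot\opt$ this lower bound can only grow. Since the bag is non-empty, it contains at least one job of rounded size greater than $\eps^2\cdot\opt$, so $P > \eps^2 \cdot \opt$, and hence
\[
\frac{A}{P} \;<\; \frac{\eps\cdot\opt}{\eps^2\cdot\opt} \;=\; \frac{1}{\eps}.
\]
Combining both cases, every tight bag satisfies $A \leq \frac{1}{\eps}\cdot P$, as claimed. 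The only point requiring care is the $r=0$ case, where the bound $1/\eps$ is essentially tight and it is crucial to use the job-size lower bound produced by the small-job merging step; everything else follows from the definition of tightness and arithmetic on the geometric grid.
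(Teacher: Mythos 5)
Your proof is correct and follows essentially the same route as the paper: a case split on the allowed size, using tightness against the next-smaller allowed size when $r\geq 1$, and using the lower bound $\eps^2\cdot\opt$ on rounded job sizes (from the small-job merging) when the allowed size is $\eps\cdot\opt$, with the zero-allowed-size case trivial. The only cosmetic difference is that you make the arithmetic in the $r\geq 1$ case explicit (obtaining the sharper factor $1+\eps$), which the paper leaves implicit.
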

\begin{proof}
For a tight bag with allowed size at least $(\eps+\eps^2)\cdot \opt$ the claim is obvious since the total size is larger than the allowed size minus $\eps^2\cdot \opt$.
For a tight bag with allowed size $\eps\cdot \opt$ this holds since an empty bag cannot be used for the contents of this bag (since it is tight), and no job has a rounded size below $\eps^2 \cdot \opt$. For tight bags of allowed size zero the property holds as well.
\end{proof}

\paragraph{Computing the maximum number of machines for which the assignment to bags does not matter.}
In our algorithm, we would like to focus on values of $k$ for which the structure of bags is crucial.  Now, we will detect values of $k$ for which there always exists a good assignment of bags to machines (for reasonable sets of bags). Let $P$ be the total size of all jobs (after the transformation applied on jobs for a fixed value of $\opt$). The total size (not the allowed size) of all bags is $P$, and we can compute the makespan based on the actual sizes of bags which are total sizes of jobs (after merging and rounding) assigned to the bags.
The motivation is that for very small values of $k$ the maximum bag size is so small compared to $\frac Pk$ that bags can be seen as very small jobs, and one can apply a greedy algorithm for assigning the bags, and still obtain a solution with a small makespan (based on bag sizes).

\begin{lemma}\label{eas}
If $k \leq \frac{ \eps \cdot P}{2\cdot\opt}$, then any set of bags (such that every job is assigned to a bag) where every bag has an allowed size (and size) not exceeding $2\opt$, leads to at least one schedule for $k$ machines with makespan in $[\frac{P}k,(1+\eps)\cdot \frac{P}k)$. For other values of $k$, the makespan of an optimal schedule using tight bags (based on their allowed sizes) is at most $({3}/{\eps^2})\cdot\opt$.
\end{lemma}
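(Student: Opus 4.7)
The plan is to handle the two regimes of $k$ separately. In both cases I would reduce the problem to the standard list-scheduling bound $\mathrm{makespan} \le (\mathrm{total\ load})/k + \max\mathrm{\ single\ bag}$. For the small-$k$ regime the total load will be $P$; for the large-$k$ regime it will instead be the sum of allowed sizes of the tight bags, which by Lemma~\ref{PandP1} is at most $P/\eps$.

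In the first regime ($k\le \eps P/(2\opt)$), the lower bound $P/k$ on the makespan is immediate by averaging: the total size of jobs is $P$, so some machine must receive work of at least $P/k$. For the upper bound I would assign bags one by one greedily, each onto the currently least-loaded machine. Let $b$ be the bag placed last on the machine that realises the final makespan and let $P(b)$ be its size; at the moment $b$ was assigned, that machine had load at most $(P-P(b))/k$ (the average over all other machines), so its final load is at most $P/k + P(b)(1-1/k) \le P/k + 2\opt$. The hypothesis $k\le\eps P/(2\opt)$ rewrites as $2\opt \le \eps P/k$, turning this into $(1+\eps)\cdot P/k$, as required.

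In the second regime ($k > \eps P/(2\opt)$) I would invoke Lemma~\ref{PandP1}: inside a tight bag the allowed size is at most $\eps^{-1}$ times the actual size, so summing over all bags the total allowed size is at most $P/\eps$. Additionally, by direct inspection of $B''$, every single allowed size is at most $\bigl(\eps + ((1+2\eps)/\eps^2 + 2)\eps^2\bigr)\opt = (1+3\eps+2\eps^2)\opt \le 2\opt$. Now I would run the same greedy list-scheduling rule, but with loads measured according to allowed sizes; the same analysis as in the first regime yields a schedule whose (allowed-size) makespan is at most $(P/\eps)/k + 2\opt$. Substituting $k > \eps P/(2\opt)$ into the first term bounds it by $2\opt/\eps^2$, and since $\eps<1/100$ the total is at most $(2/\eps^2 + 2)\opt \le 3\opt/\eps^2$.

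I do not foresee a real obstacle: both halves reduce to standard list-scheduling estimates. The only care needed is to track which notion of ``size'' (actual versus allowed) is being summed in each regime, and to invoke Lemma~\ref{PandP1} at precisely the point where the total allowed size must be bounded by the actual total size $P$.
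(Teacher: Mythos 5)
Your proposal matches the paper's own proof: both regimes are handled by Graham's list scheduling, with actual sizes and the trivial lower bound $P/k$ in the small-$k$ case, and with allowed sizes plus Lemma~\ref{PandP1} (total allowed size at most $P/\eps$, single allowed size at most $2\opt$) in the large-$k$ case, yielding $P/(\eps k)+2\opt\le 3\opt/\eps^2$. The only difference is that you spell out the folklore list-scheduling bound and the $2\opt$ bound on allowed sizes explicitly, which the paper simply cites or asserts; there is no gap.
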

\begin{proof}
Consider a fixed value of $k$ and a fixed set of bags. We apply the so-called list scheduling algorithm that assigns one bag at a time. The makespan of the output is an upper bound on the optimal makespan in this scenario. Each bag is assigned to a machine such that the total size of the bags already assigned to that machine is minimized (breaking ties arbitrarily).  This can be applied for actual bag sizes or for allowed bag sizes.
Using the folklore analysis of list scheduling (see Graham \cite{Gr66} for the original proof), the makespan of the solution obtained by list scheduling on actual sizes is below $\frac{P}{k}+2\opt$ since the allowed size and the size of a bag is at most $2\opt$.
For $k \leq \frac{ \eps \cdot P}{2\cdot\opt}$, it holds that $2\opt \leq \eps \cdot \frac{P}k$ and since no solution (for rounded job sizes) exists in which the makespan is smaller than $\frac{P}{k}$, the claim holds.

Next, consider a value of $k$ such that $k > \frac{ \eps \cdot P}{2\cdot\opt}$.
Let $P'$ be the total allowed size of all bags, where $P'\leq \frac{P}{\eps}$ by Lemma \ref{PandP1}. By applying List Scheduling on bags with their allowed sizes, we have the following upper bound on the makespan:
$\frac{P'}{k}+2\opt \leq \frac{P}{\eps\cdot k}+2\opt  \leq 2 \opt\cdot (1+\frac 1{\eps^2})<3\cdot \frac{\opt}{\eps^2}$ by $\eps \leq \frac 1{100}$ and the lower bound on $k$.
\end{proof}

We use the property that no bag has an allowed size above $2 \cdot \opt$.
Our algorithm computes the maximum value of $k$ for which $2\opt \leq \eps \cdot \frac{P}{k}$ holds, and afterwards it excludes values of $k$ for which $2\opt \leq \eps \cdot \frac{P}{k}$ holds. We will have that the cost of the solution obtained for the scenario is at most $(1+\eps) \cdot \frac{P}{k}$, since bag sizes satisfy the condition for bags in the statement of Lemma \ref{eas}, and therefore we can use the first part of this lemma. This is a valid approach as adding a scenario where $2\opt \leq \eps \cdot \frac{P}{k}$ to the calculation of the approximation ratio will not increase the approximation ratio beyond the value $1+\eps$, and the running time for computing a good solution for such a value of $k$ is polynomial in $m,n$. Thus, we assume that $2\opt > \eps \cdot \frac{P}{k}$ holds for every $k$.

We consider the remaining scenarios. In what follows, we consider the assignment of jobs to bags and the assignment of the bags to machines in each scenario $k$ subject to the condition that $2\opt > \eps\cdot \frac{P}{k}$ and the optimal makespan of the scenario is at most $({3}/{\eps^2})\cdot\opt$.  In particular, it means that the number of bags of positive allowed sizes assigned to a machine in such a scenario is at most ${3}/{\eps^3}$, since we consider allowed bag sizes not smaller than $\eps\cdot \opt$ (if we exclude bags of allowed size zero).

\paragraph{Guessing an approximated histogram of the optimal makespan on all scenarios.}  Our next guessing step is motivated by linear grouping of the $\opt_k$ values.  To illustrate this step, consider a histogram corresponding to the costs of a fixed optimal solution satisfying all above structural claims, where the width of the bar corresponding to scenario $k$ is $q_k$ and its height is the value of the makespan in this scenario (that is, $\opt_k$).  Observe that when $k$ is increased, the optimal makespan does not  increase (without loss of generality, since the same assignment of bags can be used), so by plotting the bars in increasing order of $k$ we get a monotone non-decreasing histogram, where the total area of the bars is the expected value of the makespan of the optimal solution and the total width of all the bars is $1$ (since we assume that all scenarios are included in the histogram and every scenario $\kappa$ satisfies the condition $2\opt > \eps\cdot \frac{P}{\kappa}$ by scaling the probabilities of the remaining scenarios such that the total probability becomes $1$). We sort the indexes of scenarios with (strictly) positive probabilities ($q_k$ values) and we denote by $K$ the resulting sorted list.  For every $k\in K$, we compute the total width of the bars with indexes at most $k$ (using the ordering in $K$, i.e., in increasing indexes), and denote this sum by $Q_k$. We also let $Q'_k=Q_k-q_k$.  Thus, $Q_k$ is the total probability of scenarios in $\{1,2,\ldots,k\}$, and $Q'_k$ is the total probability of scenarios in $\{1,2,\ldots,k-1\}$. The set $K$ does not contain scenarios with probability zero, but if $q_{k-1},q_k>0$, it holds that $Q_{k-1}=Q'_k$. According to the definitions, it holds that $Q_k-Q'_k=q_k$ for $k\in K$, and the interval $[Q'_k,Q_k)$ on the horizontal axis corresponds to scenario $k$.

Next, we compute an upper bound histogram $W$ as follows. We start with selecting a sublist $K'$ of $K$. The motivation is that schedules will be computed later only for scenarios in $K'$, and they will be copied to scenarios of some of the larger indexes.
The set $K'$ acts as a representative set, and we show that it is possible to restrict ourselves to such a set. Since we increase upper bounds on the makespan for some scenarios, the feasibility is not harmed.

An index $k\in K$ belongs to $K'$ if there exists an integer $\ell$ such that $Q'_{k}\leq \eps^3\ell < Q_{k}$. The motivation is that we would like to consider points which are integral multiples of $\eps^3$ and the set $K'$ contains all values of $k \in K$ for which such special values belong to the interval $[Q'_k,Q_k)$. Values of makespan will be increased such that the makespan function will still be piecewise linear, but it will have a smaller number of image values.

For every $k \in K'$, the new upper bound histogram is defined as $\opt_k$ for all points with horizontal value between $Q'_k$ and up to $Q'_{k'}$ where $k'>k$ is the index just after $k$ in the sublist $K'$ or up to the last point where the histogram is defined ($Q_t$ for the maximum value $t\in K$) if $k$ is the largest element of $K'$.  Since the original histogram was monotone non-increasing, the new histogram is pointwise not smaller than the original histogram. The possible modification is in the interval $[Q_k,Q'_{k'})$ if $k$ is not the maximum value of $K'$, and in this case there may be a change for $k+1,\ldots,k'-1$ (that is, if $k'\geq k+2$). If $k$ is the largest element of $K'$ but not of $K$, there may be a change for all $t\in K$ such that $t\geq k+1$.
Thus, an upper bound on the total area below the histogram $W$ is not smaller than the expected value of the makespan of the optimal solution.

We use the modified histogram $W$ to obtain another bound. For that we see $W$ as a step function whose values are the corresponding points in the upper edge of the histogram.  We define a new histogram by letting it be $W(x+\eps^3)$ for all $x\in [0,1]$ (and zero for cases that $W$ is undefined due to an argument above $1$).  In this way we delete a segment of length $\eps^3$ from $W$ and we shift the resulting histogram to the left. Since the makespan for every scenario never exceeds $({3}/{\eps^2})\cdot\opt$, every point in $W$ has a height of at most $\frac{3\opt}{\eps^2}$ and we deleted a segment of width of $\eps^3$, the total area that we delete is at most $3\eps\opt$.  The resulting histogram is pointwise at most the original histogram (and thus also not larger than $W$).  This property holds since every value $\opt_k$ was extended to the right for an interval not longer than $\eps^3$. Thus, if we consider $W$ instead of the original histogram, we increase the cost of the solution by a multiplicative factor not larger than $(1+3\eps)$.

The guessing step that we use is motivated by this function $W$.  We let $W_k$ be the height of the histogram $W$ in the bar corresponding to the scenario $k$.  The relevant values of $k$ are those in $K'$, and the histogram $W$ only has values of the form $\opt_k$ for $k \in K'$. We guess the histogram $W$.  This means to guess the optimal makespan of $O(\frac{1}{\eps^3})$ scenarios, where makespans can be one of at most $\frac{3}{\eps^4}$ different values. This holds since all allowed bag sizes are integer multiples of $\eps^2\cdot \opt$, so makespans are also integer multiples of $\eps^2\cdot \opt$, and the makespan of each scenario is at most $\frac{3\opt}{\eps^2}$.  Thus, the number of possibilities of this guessing step is upper bounded by a function of $\frac{1}{\eps}$ which is $O({{(\frac 1{\eps})}^{O(1/\eps^3)}})$.

In what follows, we consider the iteration of the algorithm defined below when we use the value of the guess corresponding to the optimal solution.  Algorithmically, we will try all possibilities, check the subset of those for which we can provide a feasible solution, and output the best feasible solution obtained in this exhaustive enumeration. The running time of the next algorithm is multiplied by the number of possibilities.

\paragraph{The template-configuration integer program.}  A {\em template} of a bag is a multiset of job sizes assigned to a common bag.  We consider only multisets for which the total size of jobs is at most $(1+6\eps) \cdot \opt$ since the largest allowed size of any bag is smaller.  Since the number of distinct job sizes (in the rounded instance) is upper bounded by $\frac{2}{\eps^3}-1$ and there are at most $\frac{2}{\eps^2}$ jobs assigned to a common bag (since the rounded size of each job is above $\eps^2\cdot \opt$), we conclude that the number of templates is at most $(\frac{2}{\eps^3})^{(2/\eps^2)}$, which is a constant depending only on $\frac{1}{\eps}$.
The reason is that each bag has $\frac{2}{\eps^2}$ slots for jobs, such that each one may be empty or contain a job of one of the sizes. This calculation proves an upper bound on the number of possible templates, and we use a single template for every multiset of job sizes even when one multiset can be found in more than one way in the last calculation.

We are going to have a non-negative counter decision variable $y_t$ for every template $t$, where this variable stands for the number of bags with template $t$.  Let $\tau$ be the set of templates, and assume that a template is represented by a vector. The length of such a vector is the number of different (rounded) job sizes, and for every index $\ell$, the $\ell$-th component of the vector of the template (denoted by $t_{\ell}$) is the number of jobs with size equal to the $\ell$-th job size that are packed into a bag with this template.  In order for such an assignment of the first stage to be feasible, we have the following constraints where $n_{\ell}$ denotes the number of jobs in the rounded instance whose size is the $\ell$-th job size of the (rounded) instance.

$$ \sum_{t\in \tau} y_t \leq m \ , $$
$$ \sum_{t\in \tau} t_{\ell} \cdot y_t = n_{\ell} \  , \forall \ell .$$

The first constraint states that the number of bags is at most $m$. If this number is strictly smaller than $m$, then the remaining bags have allowed sizes of zero and they will not contain jobs.
The second constraint, which is a family of constraints, states that the correct numbers of jobs of each size are assigned to templates, according to the number of copies of each template.
Observe that the number of constraints in this family of constraints is a constant depending on $\eps$ (it is at most $\frac{2}{\eps^3}$), and all coefficients in the constraint matrix are (non-negative) integers not larger than $\frac{2}{\eps^2}$.

We augment $K'$ with the minimum index in $K$ if this index does not already belong to $K'$, in order to satisfy the condition that for every index of $K$ there is some index of $K'$ that is not larger.
Consider a scenario $\kappa \in K'$ (satisfying the condition that $2\opt > \eps\cdot \frac{P}{\kappa}$), and the optimal makespan of scenario $\kappa$, which is at most $\frac{3\opt}{\eps^2}$.  Recall that in scenario $\kappa$  the number of non-empty bags assigned to each machine is at most  $\frac{3}{\eps^3}$. Define a configuration of a machine in scenario $\kappa$ to be a multiset of templates such that the multiset has at most $\frac{3}{\eps^3}$ templates (counting multiple copies of templates according to their multiplicities) and the total size of the templates encoded in the multiset is at most $W_{\kappa}$, which is the guess of a value of the histogram $W$. The number of configurations is at most $((\frac{2}{\eps^3})^{(2/\eps^2)})^{\frac{3}{\eps^3}}$, and this is also an upper bound on the number of suitable configurations (whose total allowed size of bags does not exceed $W_{\kappa}$). Since our mathematical program will not limit the makespan of any scenario, we use an upper bound for it by not allowing configurations whose total allowed sizes exceed the planned makespan for the scenario. We let $C^{(\kappa)}$ denote the set of configurations for scenario $\kappa$, where $c\in C^{(\kappa)}$ is a vector of $|\tau|$ components where component $c_t$ for $t\in \tau$ is the number of copies of template $t$ assigned to configuration $c$.  Components are non-negative integers not larger than $\frac{3}{\eps^3}$.
For scenario $\kappa \in K'$, we will have a family of non-negative decision variables $x_{c,\kappa}$ (for all $c\in C^{(\kappa)}$) counting the number of machines with this configuration.

For each such scenario $\kappa$, we have a set of constraints each of which involves only the template counters (acting as a family of global decision variables) and the family of the $\kappa$-th local family of decision variables, namely the ones corresponding to this scenario.  The family of constraints for the scenario $\kappa \in K'$ are as follows.

$$ \sum_{c\in C^{(\kappa)}} x_{c,\kappa}= \kappa \ ,$$
$$ \sum_{c\in C^{(\kappa)}} c_t\cdot x_{c,\kappa} - y_t = 0 \ , \ \forall t\in \tau \ .$$

Observe that the number of constraints in such a family of constraints is upper bounded by a constant depending only on $\eps$ (which is the number of possible templates plus $1$) and the coefficients in the constraint matrix are again all integers of absolute value at most a constant depending only on $\eps$ (at most $\frac{3}{\eps^3}$.).

All decision variables are forced to be non-negative integers and we would like to solve the feasibility integer program defined above (with all constraints and decision variables).
The right hand side vector consists of integers that are at most $n$ (using $m\leq n$).
Such an integer linear program is a two-stage stochastic IP. Using the property that the number of scenarios in $K'$ is upper bounded by a function of $\frac 1{\eps}$, we conclude that the integer linear program has a fixed dimension. Thus,
we use Lenstra's algorithm to solve it instead of an algorithm for two-stage stochastic IP.

We obtain a feasible solution $(x,y)$ if such a solution exists.  Based on such a feasible solution, we assign jobs to bags using the $y$ variables.  That is, for every $t\in \tau$, we schedule $y_t$ bags using template $t$.  By the constraint $ \sum_{t\in \tau} y_t \leq m$ there are at most $m$ bags, and the other bags will be empty.  Next, for every bag and every size $p$ of jobs in the rounded instance, if the template assigned to the bag has $\alpha$ jobs of this size, we will assign $\alpha$ jobs of this size to this bag.  Doing this for all sizes and all bags is possible by the constraints $ \sum_{t\in \tau} t_{\ell} \cdot y_t = n_{\ell} \  , \forall \ell$, and these constraints ensure that all jobs are assigned to bags.  Note that the modification for jobs consisted of merging small jobs. When such a merged job is assigned, this means that a subset of jobs is assigned. Reverting jobs to their original sizes does not increase any of the costs, since no rounding steps decreased any sizes. Next consider the assignment of bags to machines in each scenario.  Consider a scenario $\kappa' \in K$, and let $\kappa$ be the largest index in $K'$ such that  $\kappa\leq \kappa'$. Assign $x_{c,\kappa}$ machines to configuration $c$ (for all $c\in C^{(\kappa)}$).  It is possible by the constraint  $ \sum_{c\in C^{(\kappa)}} x_{c,\kappa}= \kappa \leq \kappa'$.  If a configuration $c$ assigned to machine $i$ is supposed to pack $c_t$ copies of template $t$, we pick a subset of $c_t$ bags whose assigned template is $t$ and assign these bags to machine $i$.  We do this for all templates and all machines.  In this way we assign all bags by the constraints $ \sum_{c\in C^{(\kappa)}} c_t\cdot x_{c,\kappa} - y_t =0 \ , \ \forall t\in \tau$.
If $\kappa'>\kappa$, at least one machine will not receive any configuration and therefore it will not receive any bags or jobs, and we refer to such a machine as empty.
Since the configurations we used in scenario $\kappa$ have a total size of jobs of at most $W_{\kappa}$, we conclude the following.

\begin{corollary}
For every value of the guessed information for which the integer program has a feasible solution, there is a linear time algorithm that transform the feasible solution to the integer program into a solution to the rounded instance of \pra\ of cost at most $\sum_{q} q_{\kappa} \cdot W_{\kappa}$.
\end{corollary}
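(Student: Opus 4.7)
The plan is to verify that the transformation sketched just above the corollary produces a feasible schedule for the rounded instance and to bound its expected cost. I would carry out three steps: build $\sigma_1$ from the $y$-variables, build $\sigma_2^{(k)}$ for every $k\in K$ from the $x$-variables, and then bound the makespan in each realized scenario by the corresponding $W_\kappa$ before taking expectations.

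For the first stage, for each template $t\in\tau$ open $y_t$ bags and fill each of them with the multiset of rounded jobs encoded by $t$. The constraint $\sum_{t\in\tau}y_t\leq m$ leaves any leftover bags empty, while the equalities $\sum_t t_\ell y_t=n_\ell$ ensure that every job of each rounded size is placed exactly once. Reverting merged small jobs back to their underlying original jobs inside each bag does not alter the bag's total size, so $\sigma_1$ is well defined. For the second stage, given a realized $\kappa'\in K$ let $\kappa$ be the largest element of the augmented $K'$ with $\kappa\leq\kappa'$; such a $\kappa$ exists because $K'$ was augmented with the smallest index of $K$. The constraints $\sum_c x_{c,\kappa}=\kappa$ and $\sum_c c_t x_{c,\kappa}=y_t$ yield a straight greedy dispatch: walk over the first $\kappa$ machines, assign the prescribed configurations, and for each configuration pick the required number of still-unused bags of each template. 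The second equality is precisely the inventory-matching identity that across the $\kappa$ chosen configurations every template $t$ is requested exactly $y_t$ times, matching the first-stage supply; the remaining $\kappa'-\kappa$ machines of scenario $\kappa'$ stay empty.

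For the cost bound, every configuration $c\in C^{(\kappa)}$ was defined so that its total (rounded) bag size does not exceed $W_\kappa$, whence the makespan in scenario $\kappa'$ is at most $W_\kappa$, and by construction of $W$ as a step function extended to the right along the sorted $K'$ this is in turn at most $W_{\kappa'}$. Multiplying by $q_{\kappa'}$ and summing over $\kappa'\in K$ yields the claimed $\sum_\kappa q_\kappa W_\kappa$. Linear running time is immediate: each phase is a single scan over templates, configurations, and jobs of each size, with total work $O(n+m)$ up to factors depending only on $\eps$. The only subtle point is the inventory-matching for the second stage, but this is an immediate consequence of the equality constraints of the IP, so I do not expect a real obstacle here.
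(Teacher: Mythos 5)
Your proposal matches the paper's own argument essentially verbatim: build $\sigma_1$ from the $y$-counters using the job-conservation equalities, dispatch configurations for the representative scenario $\kappa\in K'$ (the largest index $\leq\kappa'$) using the constraints $\sum_c x_{c,\kappa}=\kappa$ and $\sum_c c_t x_{c,\kappa}=y_t$, leave any extra machines empty, and bound the makespan in scenario $\kappa'$ by $W_\kappa=W_{\kappa'}$ via the definition of admissible configurations and the step-function structure of $W$. The cost and running-time bounds follow exactly as in the paper, so this is correct and the same approach.
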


Our scheme is established by noting that an optimal solution satisfying the assumptions of the guessed information provides us a multiset of templates all of which are considered in $\tau$ and a multiset of configurations (and all of them have total size of templates not larger than $W$) for which the corresponding counters satisfy the constraints of the integer program.  Thus, we conclude our first main result.

\begin{theorem}
Problem \pra\ admits an EPTAS.
\end{theorem}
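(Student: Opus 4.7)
The plan is to chain together the reductions developed in this section into a single algorithm and verify that all losses aggregate to a factor of $1+O(\eps)$ while the total running time is of the form $h(\eps)\cdot poly(n,m)$. First, I would enumerate the $O(\log_{1+\eps} n)$ candidate values of $\opt$ and, for each guess, apply the bag-size rounding of Lemma \ref{bigbag} to restrict attention to the set $B''$ of allowed bag sizes, followed by the merging/rounding of job sizes that reduces the number of distinct sizes to $O(1/\eps^3)$ while increasing the objective by a factor of only $1+O(\eps)$. Small scenarios with $k \leq \eps P/(2\opt)$ are handled separately using list scheduling via the first part of Lemma \ref{eas}; for the remaining scenarios, the second part of Lemma \ref{eas} bounds the makespan by $3\opt/\eps^2$, and hence each machine receives at most $3/\eps^3$ non-empty bags of positive allowed size.

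The main obstacle is controlling the number of distinct scenario makespans $\opt_k$ so that the resulting integer program has dimension bounded purely as a function of $\eps$. I would handle this through the linear-grouping construction already set up in the section: select the representative sublist $K'\subseteq K$ of scenarios whose cumulative-probability intervals contain an integer multiple of $\eps^3$, copy the makespan guess of each $k\in K'$ rightwards to the next group of scenarios, and absorb the additional cost of at most $3\eps\cdot\opt$ incurred by shifting the histogram $W$ left by $\eps^3$. Since $|K'|=O(1/\eps^3)$ and each value $W_\kappa$ is an integer multiple of $\eps^2\opt$ no larger than $3\opt/\eps^2$, guessing the entire histogram can be done in $h(\eps)$ trials, each of which is then processed independently.

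For each surviving guess, I would formulate the template-configuration integer program: a global family of counters $y_t$ for the at most $(2/\eps^3)^{2/\eps^2}$ templates, together with, for every $\kappa\in K'$, a local family of counters $x_{c,\kappa}$ for the configurations whose total allowed size is at most $W_\kappa$. The number of variables, the number of constraints, and the magnitudes of all coefficients in the constraint matrix are bounded by a function of $\eps$, while the right-hand side is bounded by $n$, so Lenstra's algorithm \cite{lenstra1983integer,Kan83} decides feasibility and produces an integer solution in time $h(\eps)\cdot poly(n)$. Among all feasible guesses I would retain the one minimizing $\sum_\kappa q_\kappa\cdot W_\kappa$ and use the linear-time conversion to recover $\sigma_1$ and the functions $\sigma_2^{(k)}$, copying the configurations chosen for the largest index $\kappa\in K'$ with $\kappa\leq \kappa'$ to any intermediate scenario $\kappa'\in K\setminus K'$, while scenarios below the $\eps P/(2\opt)$ threshold are scheduled via list scheduling. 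Combining the multiplicative $1+O(\eps)$ losses from all preceding rounding, grouping, and template/configuration steps and rescaling $\eps$ at the outset yields a $(1+\eps)$-approximation in total time $h(\eps)\cdot poly(n)$, establishing the EPTAS.
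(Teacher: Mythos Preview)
Your proposal is correct and follows essentially the same approach as the paper: guess $\opt$, round bag and job sizes, split off the easy scenarios via Lemma~\ref{eas}, reduce the number of distinct makespan targets through the linear-grouping construction of $K'$ and the histogram $W$, and then solve the fixed-dimension template-configuration integer program by Lenstra's algorithm. All the key ingredients and the cost accounting match the paper's argument, so there is nothing substantive to add.
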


\section{An EPTAS for the Santa Claus problem \label{sec:eptas2}}
In this section we apply additional ideas to obtain an EPTAS for the second problem.
We present the details of the scheme together with its analysis. In what follows, and similarly to the scheme to \pra, we will present steps, and in each of those steps the resulting approximation ratio of the scheme increases by at most a multiplicative factor of $1+\Theta(\eps)$.

\subsection{Preprocessing steps}
We consider an optimal solution $\opt$, and analyze the information that one can guess in order to be able to approximate it.
We assume without loss of generality that $\opt_k\leq \opt_{k'}$ for all $k>k'$, since a solution for scenario $k$ can be used for scenario $k'$ (by assigning the bags of $k-k'$ machines in an arbitrary way).
Recall that we can assume that for every value of $k$ in the support of $q$ the instance has at least $k$ non-zero sized jobs, since we assume $n>m$. We will use the next lemma for our proof.

\begin{lemma}\label{boundssanta}
For any scenario $k$ (such that $1 \leq k \leq m$) and an assignment of jobs to bags according to the solution $\opt$, there exists a job $j^k$ for which it holds that $p_{j^k} \leq \opt_k \leq n \cdot p_{j^k}$.
\end{lemma}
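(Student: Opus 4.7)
The plan is to select $j^k$ as a largest-size job lying on the minimum-load machine in the scenario-$k$ schedule of $\opt$. Concretely, I would let $i^*$ denote any machine in the second-stage schedule $\sigma_2^{(k)}$ of $\opt$ whose total load equals $\opt_k$, and take $j^k$ to be a job of maximum size $p_j$ among those assigned to $i^*$ (that is, jobs $j$ with $\sigma_2^{(k)}(\sigma_1(j)) = i^*$).

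With this choice, both inequalities reduce to one-line observations. The upper bound $p_{j^k} \leq \opt_k$ is immediate because $p_{j^k}$ is one of the positive summands forming the load $\opt_k$ of $i^*$. The lower bound $\opt_k \leq n \cdot p_{j^k}$ follows because the number of jobs on $i^*$ is at most $n$, and by maximality each such job has size at most $p_{j^k}$, so $\opt_k = \sum_{j: \sigma_2^{(k)}(\sigma_1(j))=i^*} p_j \leq n \cdot p_{j^k}$.

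The one delicate point — and the main (mild) obstacle — is ensuring that $i^*$ actually carries at least one job, i.e., that $\opt_k > 0$, so that $j^k$ exists. I would handle this using the assumption recalled at the start of Section~\ref{sec:eptas2}: since $n > m \geq k$, the instance contains at least $k$ non-zero-sized jobs, and we may assume without loss of generality that $\opt$ uses all $m$ bags as non-empty bags and that in every scenario $k$ the second-stage assignment places at least one bag on each of the $k$ machines. Any optimal solution violating this normalization can be re-normalized by moving a job from a bag holding $\geq 2$ jobs into an empty bag, or by moving a bag from a machine holding $\geq 2$ bags to an empty machine; neither operation decreases the minimum load in any scenario, so the expected Santa Claus value can only weakly improve. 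Under this normalization, $i^*$ carries at least one job, and both bounds follow as described. Since $\opt$ is a convex combination of the values $\opt_k$, an analogous global statement (if needed later) then follows from the per-scenario bound.
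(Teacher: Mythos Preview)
Your proposal is correct and follows essentially the same approach as the paper: choose $j^k$ as a maximum-size job on the minimum-load machine in scenario $k$, then the two inequalities are immediate, with the only work being the normalization ensuring $\opt_k>0$. The paper's normalization differs only cosmetically (it fills empty bags and co-schedules the two bags, rather than additionally discussing bag-to-empty-machine moves), but the substance of the argument is identical.
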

\begin{proof}
Consider the given assignment of jobs to bags and an optimal assignment of the bags to $k$ machines. If there is an empty bag, then we can move an arbitrary job from another bag that used to contain at least two jobs, to this empty bag without decreasing the value of \opt\ in every scenario (simply by assigning the two corresponding bags to a common machine in each scenario). Since for every scenario $k$, we have $m\geq k$, we conclude that in every scenario $\opt_k >0$. Consider a machine of minimum load, and let $j^k$ be a job of maximum size assigned to this machine.  As observed above such a job exists. By the choice of the machine, it holds that $\opt_k \geq p_{j^k}$. On the other hand, this machine has at most $n$ jobs, and none of the jobs is larger than $p_{j^k}$. The last observation implies that $\opt_k \leq n \cdot p_{j^k}$.
\end{proof}

We would like to split the sequence of scenarios into four parts (where some of the parts may be empty). The suffix (with the largest numbers of machines) will contain scenarios for which the assignment of bags is unimportant since the gain from them in the objective function value $\opt$ is small either because the probability is zero or because the number of machines is large. This suffix may be empty. There will also be a prefix (which could be empty) of machines for which the specific assignment to bags is unimportant because the number of machines is small and any reasonable assignment into bags allows a good schedule.  For this prefix we will use different arguments from the ones we used for \pra. For the remaining scenarios, the prefix will consist of scenarios for which the gain is also small, and a suffix of the most important scenarios.

The first step is to guess the maximum scenario $k_{\max}$ for which $\opt_{k_{\max}}\geq \eps \cdot \opt$ holds, out of scenarios with strictly positive probabilities, that is, such that $q_{k_{\max}}>0$ holds.
This index is well-defined since there exists an index $k$ for which $q_k>0$ and $\opt_k \geq \opt$.
There are at most $m$ possible values for this guess.
 While $\opt$ still denotes an optimal solution, we do not guess or use its value as a part of the algorithm. Let $LB$ be equal to $\opt_{k_{\max}}$ rounded down to the next integer power of $1+\eps$. For a fixed job $j^{k_{\max}}$ (see Lemma \ref{boundssanta}), the number of possible values for $LB$ is $O(\frac{n}{\eps})$. Since the index $j^{k_{\max}}$ is also not known, the number of possible values for $LB$ is $O(\frac{n^2}{\eps})$.

The proof of the next lemma is obtained by selecting a set of consecutive scenarios minimizing the total weighted profit out of $\frac{1}{\eps}$ disjoint sequences of scenarios of similar forms. The proof of the lemma requires the knowledge not only of $\opt$ but of all values of the form $\opt_i$. However, we use the lemma to obtain the information that given the correct value $LB$ (this is a rounded value, so we will be able to guess it), there is a pair of values $k,k'$ and a value $\rho$ that specify the only properties of $\opt$ that we use for our algorithm.

\begin{lemma}\label{yy}
There is a value of $\rho$ which is an integer power of $\frac{1}{\eps}$ that satisfies $\frac{1}{\eps^2} \leq \rho \leq (\frac{1}{\eps})^{1/\eps+1}$ and there exist two indexes $k,k'$ such that $0 \leq k'<k \leq k_{\max}$, where every non-zero index has to be a scenario in the support of $q$, such that the following conditions hold.
\begin{enumerate}
\item $\opt_k \leq \rho \cdot LB$,
\item if $k'\geq 1$, then $\opt_{k'} \geq \frac{\rho}{\eps} \cdot LB$,
\item $\sum_{\kappa=k'+1}^{k-1} q_{\kappa} \cdot \opt_{\kappa} \leq \eps\cdot\opt$.
\end{enumerate}
\end{lemma}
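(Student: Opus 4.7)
The approach is a pigeonhole argument over the admissible values of $\rho$. I let $R=\{(1/\eps)^i : 2\leq i\leq 1/\eps+1\}$, which is exactly the set of integer powers of $1/\eps$ lying in $[1/\eps^2,(1/\eps)^{1/\eps+1}]$ and has cardinality $1/\eps$. For each $\rho\in R$ I consider the open interval $I_\rho=(\rho\cdot LB,(\rho/\eps)\cdot LB)$. Since consecutive elements of $R$ differ by a factor of $1/\eps$, the $1/\eps$ intervals $I_\rho$ are pairwise disjoint subsets of $(0,\infty)$.

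For each $\rho\in R$, I define $k(\rho)$ as the smallest index $\kappa\leq k_{\max}$ in the support of $q$ satisfying $\opt_\kappa\leq\rho\cdot LB$, and $k'(\rho)$ as the largest index $\kappa$ in the support of $q$ with $\kappa<k(\rho)$ and $\opt_\kappa\geq(\rho/\eps)\cdot LB$ (setting $k'(\rho)=0$ if no such $\kappa$ exists). The index $k(\rho)$ is well defined and at most $k_{\max}$, because $\opt_{k_{\max}}<(1+\eps)\cdot LB\leq \rho\cdot LB$, where the second inequality uses $\rho\geq 1/\eps^2\geq 1+\eps$ (which in turn uses $\eps<1/100$), so $k_{\max}$ itself is always a valid candidate. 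Using the monotonicity $\opt_\kappa\geq\opt_{\kappa'}$ for $\kappa\leq\kappa'$ that was already noted in the section, I will check that every index $\kappa$ with $k'(\rho)<\kappa<k(\rho)$ and $q_\kappa>0$ necessarily satisfies $\opt_\kappa\in I_\rho$: if $\opt_\kappa\leq\rho\cdot LB$ then $\kappa$ would contradict the minimality of $k(\rho)$, and if $\opt_\kappa\geq(\rho/\eps)\cdot LB$ then $\kappa$ would contradict the maximality of $k'(\rho)$.

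The pigeonhole step is now essentially immediate. Defining $A_\rho=\{\kappa:\opt_\kappa\in I_\rho,\ q_\kappa>0\}$, the sets $A_\rho$ are pairwise disjoint, so
$$\sum_{\rho\in R}\ \sum_{\kappa\in A_\rho} q_\kappa\cdot\opt_\kappa\ \leq\ \sum_\kappa q_\kappa\cdot\opt_\kappa\ =\ \opt,$$
and averaging over $|R|=1/\eps$ yields some $\rho^\star\in R$ with $\sum_{\kappa\in A_{\rho^\star}} q_\kappa\cdot\opt_\kappa\leq\eps\cdot\opt$. Taking $\rho=\rho^\star$, $k=k(\rho^\star)$ and $k'=k'(\rho^\star)$ then satisfies items 1 and 2 by construction, and item 3 follows because every nonzero term of $\sum_{\kappa=k'+1}^{k-1} q_\kappa\cdot\opt_\kappa$ corresponds to some $\kappa\in A_{\rho^\star}$. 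There is no genuinely hard step in this proof; the care required is in the containment argument that places the middle block $(k',k)$ inside a single interval $I_{\rho^\star}$, and in verifying the well-definedness of $k(\rho)\leq k_{\max}$, which is exactly what the lower bound $\rho\geq 1/\eps^2$ is there to guarantee.
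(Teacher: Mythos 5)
Your proof is correct and follows essentially the same route as the paper: it buckets the scenarios according to which power-of-$\frac{1}{\eps}$ range (between $\rho\cdot LB$ and $\frac{\rho}{\eps}\cdot LB$) their $\opt_\kappa$ falls into, applies averaging over the $\frac{1}{\eps}$ disjoint buckets to find one contributing at most $\eps\cdot\opt$, and takes $k,k'$ as the extreme support indices around that bucket. The only cosmetic difference is that you use disjoint open intervals and an explicit containment check, whereas the paper uses half-open intervals and monotonicity of $\opt_\kappa$; the substance is identical.
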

\begin{proof}
For every value of $\rho$ that is an integer power of $\frac{1}{\eps}$ in the interval given in the statement of the lemma, that is, for $\rho=\frac{1}{\eps^r}$ for $r=2,3,\ldots ,\frac{1}{\eps}+1$, we compute  the total (weighted) profit $L_r$ of scenarios for which the objective value for the solution for this scenario in the optimal solution is within the interval $[LB \cdot \frac{1}{\eps^r},LB \cdot \frac{1}{\eps^{r+1}})$. That is, the value $L_r=\sum_{{\kappa}: LB \cdot \frac{1}{\eps^r} \leq \opt_{\kappa} < LB \cdot \frac{1}{\eps^{r+1}}} q_{\kappa}\cdot \opt_{\kappa}$ is computed for all values of $r$.

Since the values $\opt_{\kappa}$ form a monotonically non-increasing sequence, the scenarios contributing to $L_r$ for a fixed value of $r$ form a consecutive subsequence of scenario indexes, the scenarios for all values of $r$ also form a consecutive subsequence of scenario indexes as those are scenarios ${\kappa}$ with $\opt_{\kappa} \in [LB \cdot \frac{1}{\eps^2},LB \cdot \frac{1}{\eps^{1/\eps+2}})$, and the different values of $r$ form a partition of the last subsequence.

Let $r'$ be a value of $r$ for which $L_r$ is minimized, and define $\rho=\frac{1}{\eps^{r'}}$.  Let $k$ be the minimum index of a scenario in the support of $q$ for which $\opt_k \leq \rho \cdot LB$ and $k'$ be the maximum index of a scenario for which $\opt_{k'} \geq \frac{\rho}{\eps} \cdot LB$. Since $\opt_{k_{\max}} \leq LB \cdot (1+\eps) < \rho\cdot LB$, the integer $k$ is well-defined. If $k'$ does not exist in the sense that there is no scenario $k''$ for which $\opt_{k''} \geq \frac{\rho}{\eps} \cdot LB$, we let $k'=0$. It holds that $k>k'$, but it is possible that $k=k'+1$, in which case the sum of the third condition is empty.

  The sum of $L_r$ for different values of $r$ is taken over a weighted contribution of $\frac 1{\eps}$ disjoint subsets of  scenarios to $\opt$, and thus the values of $L_r$ have a sum not exceeding $\opt$. A minimizer of $L_r$ satisfies the requirement by averaging, and thus the third condition holds by our choice of $r$.
 \end{proof}

\paragraph{The next guessing step.} In this step we guess several additional values. As mentioned above, we guess the value of $LB$ (which is a power of $1+\eps$ and it is equal to $\opt_{k_{\max}}$ within a multiplicative factor of $1+\eps$ since $\opt_{k_{\max}} \in [LB,LB\cdot (1+\eps) )$), the value of $\rho$ (by guessing $r'$ from the proof of Lemma \ref{yy}, that is, we guess the power of $\frac{1}{\eps}$), and two indexes $k'<k\leq k_{\max}$.  That is, we would like to enumerate a set of polynomial number of candidate values of four components vectors that contains at least one vector with first component value that is $LB$, with a second component whose  value is $\rho$, and the two indexes $k'<k$ in the third and fourth components of that vector. The next lemma shows that it is possible to enumerate such a set of candidate values. Our algorithm performs this enumeration.

\begin{lemma}
There is a set consisting of $O((\frac{m\cdot n}{\eps})^2)$ vectors such that this set of vectors contains at least one vector with the required properties. Thus, the number of guesses including the guess of $k_{\max}$ is at most $O((\frac{n}{\eps})^2\cdot m^3)$.
\end{lemma}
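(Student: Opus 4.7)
The plan is to enumerate each of the four components of the target vector independently and multiply the resulting cardinalities; the key observation is that each component is constrained to lie in an easily bounded set, and the product of these bounds is an upper bound on the number of vectors we must try.

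First, for the component $LB$: by Lemma~\ref{boundssanta} applied to scenario $k_{\max}$, there exists a job $j^{k_{\max}}$ such that $p_{j^{k_{\max}}} \leq \opt_{k_{\max}} \leq n \cdot p_{j^{k_{\max}}}$. Since $LB$ is defined to be $\opt_{k_{\max}}$ rounded down to the nearest integer power of $1+\eps$, it is itself an integer power of $1+\eps$ lying in $[p_{j^{k_{\max}}},\, n \cdot p_{j^{k_{\max}}}]$. For a fixed candidate job $j$, the number of integer powers of $1+\eps$ in the interval $[p_j,\, n \cdot p_j]$ is at most $\log_{1+\eps} n = O(n/\eps)$, and since the identity of $j^{k_{\max}}$ is unknown to the algorithm, I would iterate over all $n$ possible jobs, giving $O(n^2/\eps)$ candidate values for $LB$.

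Next, for $\rho$, Lemma~\ref{yy} constrains it to be of the form $\rho = (1/\eps)^r$ with $r \in \{2,3,\ldots,1/\eps+1\}$, hence at most $1/\eps$ possibilities. For the ordered pair $(k',k)$, we have $0 \leq k' < k \leq k_{\max} \leq m$, so there are $O(m^2)$ pairs. Multiplying these three independent enumerations yields $O(n^2/\eps) \cdot O(1/\eps) \cdot O(m^2) = O((mn/\eps)^2)$ candidate vectors, and by Lemma~\ref{yy} at least one of them matches the structure of the optimal solution. Incorporating the outer enumeration of $k_{\max}$, which ranges over $\{1,\ldots,m\}$, gives an additional factor of $m$, for a total of $O((n/\eps)^2 \cdot m^3)$.

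This proof is essentially a direct counting argument and I do not anticipate a genuine obstacle. The only subtlety to flag is that the witness job $j^{k_{\max}}$ in Lemma~\ref{boundssanta} depends on the (unknown) optimal assignment, which forces the extra factor of $n$ in the bound on $LB$; the remaining bounds follow immediately from the explicit forms of $\rho$ and from the trivial range of the scenario indices.
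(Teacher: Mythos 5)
Your proposal is correct and follows essentially the same route as the paper: bounding the candidates for $LB$ by $O(n^2/\eps)$ via Lemma~\ref{boundssanta} with the unknown witness job contributing the extra factor of $n$, counting $1/\eps$ options for $\rho$ and $O(m^2)$ pairs $(k',k)$, and multiplying by $m$ for the earlier guess of $k_{\max}$. No gaps.
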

\begin{proof}
The bound of $m$ on the number of possibilities for the values of $k$ and of $k'$ is based on the number of scenarios. The number of options for $\rho$ is $\frac{1}{\eps}$, and the number of options for the rounded value of $LB$ is $O(\frac{n^2}{\eps})$. The earlier guess of $k_{\max}$ changes the power of $m$ from $2$ to $3$.
\end{proof}

In what follows, we let $LB$ be the first component value of the guessed information, $\rho$ be the second component value of the guessed information, and $k,k'$ be the two guessed scenarios (where it is possible that $k'=0$ is not an actual scenario).  We will show that if there is a feasible solution to \prb\ for which the guessed information describes the solution and its expected value of the objective is \opt, then we will construct a feasible solution with expected value of the objective being at least $(1-\Theta(\eps))\cdot \opt$.  This means that the problem admits an EPTAS as desired.

We let $UB=LB \cdot \rho$.  Furthermore, for every scenario $\kappa$ with $k'<\kappa< k$, we set $q_{\kappa}=0$. From now on we drop the assumption that the sum of all $q$ values is $1$ and instead of that we use the assumption that these are non-negative numbers with sum at most $1$. This modification of the vector ($q$) decreases the expected value of the Santa Claus value of the optimal solution by at most $\eps\cdot\opt$ (and it does not increase the objective function value of any feasible solution to our problem).
The justification for this is as follows. From Lemma \ref{yy} it follows that there is a quadruple of integers $k,k',\rho,\log_{1+\eps}LB$ (where each integer belongs to the set that we test for this integer) for which there is a solution of profit at least $\frac{\opt}{1+\eps} - \eps\cdot \opt \geq (1-\eps)^3\cdot \opt$ with the first two properties stated in the lemma, and for every scenario $\kappa$ such that $k'<\kappa< k$ the profit is at least zero.

\paragraph{Partitioning the input into two independent problems.} Next, we use the above guessing step in order to partition the input into two independent inputs.  First, a job $j\in \J$ is called huge if its size is strictly larger than $UB$, that is, if $p_j> UB$ (and otherwise it is non-huge).

We pack every huge job into its own bag and we let $h$ denote the number of huge jobs (where we will show that $h \leq m-1$ holds).  Every huge job can cover one machine in every scenario $\kappa\geq k$ regardless of its exact size (because for all these scenarios we consider solutions for which $\opt_{\kappa}$ is not larger than $\rho\cdot LB=UB$).  On the other hand, the non-huge jobs (i.e., jobs of sizes at most $UB$) will be packed into bags of size at most $3\cdot UB$.  The packing of the non-huge jobs into bags will be optimized according to the scenarios of indexes at least $k$ (and at most $k_{\max}$), and we will ignore the scenarios of indexes smaller than $k$ when we pack the non-huge jobs into bags.  The resulting bags of the non-huge jobs together with the set of the huge jobs (which are bags consisting of a single job) will be packed into $\kappa\leq k'$ machines (in the corresponding scenario $\kappa$) based on existing EPTAS for the Santa Claus problem on identical machines (seeing bags as jobs).  We will show that if indeed we use bags of sizes at most $3\cdot UB$ when we pack the non-huge jobs into bags, then the scenarios of indexes at most $k'$ can be ignored. We show that this holds for any collection of bags of this form, that is, where every huge job has its own bag and other bags have total sizes of jobs not exceeding $3\cdot UB$. Thus, even though the bags are defined based on scenarios where the number of machines is at least $k$, still the scenarios with at most $k'$ machines have good solutions. We will also show that it is possible to restrict ourselves to these types of collections of bags.

\begin{lemma}\label{abc}
If all guesses are correct, any partition into bags has at least one bag with a total size of jobs no larger than $UB$. In particular, there are at most $m-1$ huge jobs.
\end{lemma}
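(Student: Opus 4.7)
The plan is to use the upper bound $\opt_k \leq \rho \cdot LB = UB$ provided by Lemma \ref{yy} under the correctness of the guesses, together with the WLOG reduction in the proof of Lemma \ref{boundssanta} that every bag in an optimal solution can be taken to be non-empty.

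I would first establish that the optimal partition $\sigma^*$ contains at least one bag of size at most $UB$. Consider the optimal second-stage assignment for scenario $k$. Since $m \geq k$ and each bag of $\sigma^*$ is non-empty, we may arrange this assignment so that every machine receives at least one bag. The minimum-load machine then carries load $\opt_k \leq UB$, and hence any bag on it has total size $\leq UB$. The statement for ``any partition'' follows by noting that a partition in which \emph{every} bag has size $> UB$ would (in scenario $k \leq m$) force each of the $k$ machines to carry at least one bag of size $> UB$, so the Santa Claus value would exceed $UB$—incompatible with achieving the guessed bound $\opt_k \leq UB$.

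For the corollary $h \leq m-1$, I would argue by contradiction. Suppose $h \geq m$. Because any bag containing a huge job has size $> UB$, the small bag guaranteed above must be huge-free, and so all $h \geq m$ huge jobs are concentrated in the remaining at most $m-1$ bags. Starting from $\sigma^*$, I would perform a sequence of swaps, each moving a single huge job from a bag that contains multiple huge jobs into the current huge-free bag. An exchange argument shows that each swap preserves or increases the Santa Claus value in every scenario with $k' \leq m$ machines, because the operation redistributes load toward balance without decreasing the minimum-load machine's load (after an appropriate adjustment of the second-stage assignment). Iterating these swaps, we reach a partition in which every bag contains at least one huge job and hence has size strictly larger than $UB$. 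Under this partition, the Santa Claus value in scenario $k$ exceeds $UB$, strictly improving upon $\opt_k \leq UB$ while not worsening the value in other scenarios, contradicting the optimality of $\sigma^*$. Therefore $h \leq m-1$.

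The main obstacle is to justify rigorously that the huge-job swap is Santa Claus non-decreasing in every relevant scenario. This requires a careful case analysis on the current second-stage assignment and shows how reassigning the affected bags after the swap can only help the minimum-load machine.
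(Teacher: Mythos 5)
There is a genuine gap in your argument for the main claim (``any partition''). Your contradiction rests on the guessed bound $\opt_k\leq UB$, but $\opt_k$ is the value that the \emph{fixed optimal solution} attains in scenario $k$; it is not an upper bound on what an arbitrary partition can attain in that scenario. A partition in which every bag has size above $UB$ would indeed yield a Santa Claus value above $UB$ in scenario $k$, but this is perfectly compatible with $\opt_k\leq UB$: the optimal solution maximizes the \emph{expected} value over all scenarios and may well be dominated, in scenario $k$ alone, by other partitions. So no contradiction follows from the facts you invoke. The paper's proof uses a quantitative ingredient that your proposal never touches: if such a partition existed, it would attain value above $UB$ in \emph{every} scenario $\kappa\leq m$ (each of the $\kappa\leq m$ machines can receive at least one bag), hence its expected value, and therefore $\opt$, exceeds $UB\geq LB/\eps^2$; on the other hand, the definition of $k_{\max}$ and $LB$ gives $LB\geq \eps\cdot\opt/(1+\eps)$, and combining the two yields $LB>LB$, the desired contradiction. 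Without this relation between $LB$, $UB=\rho\cdot LB$ and $\opt$, the ``any partition'' statement does not follow. Your first paragraph does correctly show that the \emph{optimal} partition has a bag of size at most $UB$, but that is strictly weaker than the lemma, and the subsequent lemmas apply Lemma \ref{abc} to partitions other than the optimal one.

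The second part also has a gap and is overcomplicated. Since $h$ is a property of the input, not of $\sigma^*$, the paper simply observes that if $h\geq m$ one could place a huge job in each of the $m$ bags, producing a partition in which every bag has size above $UB$, contradicting the first part (which holds for \emph{any} partition). Your exchange argument instead modifies $\sigma^*$ and hinges on the claim that moving a huge job into the huge-free bag never decreases the Santa Claus value in any scenario; you acknowledge this is unproven, and it is in fact doubtful for scenarios $\kappa\leq k'$, where $\opt_{\kappa}\geq UB/\eps$ and a bag of size just above $UB$ is not large enough to protect the donor machine. For instance, with $UB=9$, bags of contents $\{12,12\}$, $\{13\}$, $\{9\}$ and two machines, the optimal value is $22$ before the move and only $21$ after moving one job of size $12$ into the small bag, so the move can strictly decrease the value for a small fixed number of machines. (The analogous exchange in the paper is performed only for scenarios $\kappa\in[k,k_{\max}]$, precisely because there $\opt_{\kappa}\leq UB$.) As written, the proposal therefore establishes neither part of the lemma.
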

\begin{proof}
Assume by contradiction that there is a partition into bags where every bag has a total size above $UB$.
 For every scenario $\kappa$ such that $1\leq \kappa \leq m$, it is possible to assign the bags such that the objective function value is larger than $UB$. Thus, for each scenario $\kappa$, it holds that $\opt_{\kappa} > UB \geq \frac{LB}{\eps^2}$, so $\opt>UB$, and thus $\opt > \frac{LB}{\eps^2}$. However, using the definition of $LB$ and the last bound, we have $LB \geq \frac{\eps \cdot \opt}{1+\eps}>{\eps/2 \cdot \opt}> \frac{LB}{2\eps}> LB$, a contradiction.
If there  are at least $m$ huge jobs, it is possible to create bags such that every bag has at least one huge job, which is a partition into bags where every bag has a total size above $UB$. Thus, the number of huge jobs does not exceed $m-1$.
\end{proof}

\begin{lemma}
An EPTAS for $\kappa$ machines where $\kappa \leq k'$ which is applied on bags (instead of jobs) such that every huge job has its own bag and any additional bag has total size of jobs not exceeding $3\cdot UB$ finds a solution of profit at least $\frac{\opt_{\kappa}}{(1-3\eps)\cdot(1-\eps)}$.
\end{lemma}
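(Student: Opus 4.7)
Combining property~2 of Lemma~\ref{yy} with the monotonicity of $\{\opt_k\}$ gives $\opt_\kappa \geq \opt_{k'} \geq UB/\eps$, so every non-huge bag has size at most $3UB \leq 3\eps\cdot \opt_\kappa$. Since the EPTAS for Santa Claus on identical machines \cite{Woe97}, applied to the bag instance on $\kappa$ machines (treating each bag as a single job), returns a solution of value at least $(1-\eps)$ times the bag-to-machine Santa Claus optimum for that instance, it suffices to show that this bag-to-machine optimum is at least $(1-3\eps)\cdot \opt_\kappa$.

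To this end I will construct an explicit bag-to-machine assignment of value at least $(1-3\eps)\cdot \opt_\kappa$ guided by the two-stage optimum. Let $\sigma^*$ denote the scenario-$\kappa$ assignment of an optimal two-stage solution, with machine loads $L_i\geq \opt_\kappa$; write $L_i=H_i^*+N_i^*$ where $H_i^*$ (resp.\ $N_i^*$) is the total size of the huge (resp.\ non-huge) jobs assigned to machine $i$ under $\sigma^*$. Send each huge singleton bag to $\sigma^*(h)$, giving machine $i$ an initial load of $H_i^*$. For the non-huge bags, consider the LP with variables $x_{B,i}\in[0,1]$, constraints $\sum_i x_{B,i}=1$ for every non-huge bag $B$ and $\sum_B x_{B,i}s_B \geq \max(\opt_\kappa - H_i^*,\,0)$ for every machine $i$, where $s_B$ is the size of $B$. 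Setting $x_{B,i}$ equal to the fraction of the (non-huge) jobs of $B$ that lie on machine $i$ under $\sigma^*$ is a feasible fractional solution, whose fractional load on machine $i$ equals $N_i^* \geq \opt_\kappa - H_i^*$.

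A basic feasible solution of this LP has at most $\kappa$ bags assigned fractionally to more than one machine, since the number of non-zero coordinates of a BFS is bounded by the number of constraints while every bag already contributes a non-zero coordinate through $\sum_i x_{B,i}=1$. Using the forest structure of the BFS support graph (a bipartite-assignment variant of Shmoys--Tardos rounding, processed leaf by leaf), every split bag can be rounded to one of its fractional machines in a way that reduces each machine's load by at most the size of a single split bag, i.e.\ by at most $3UB \leq 3\eps\cdot \opt_\kappa$. Adding back the huge-job contribution, each machine ends up with load at least $\opt_\kappa - 3UB \geq (1-3\eps)\cdot \opt_\kappa$, as required. The main obstacle is precisely this rounding step: a priori a machine could be incident to all $\kappa$ split bags, and a careless rounding could lose the sum of their sizes (up to $\kappa\cdot 3UB$, which would destroy the bound); the forest structure of the BFS support, traversed from the leaves inward, is what confines the per-machine loss to a single bag and constitutes the key technical ingredient.
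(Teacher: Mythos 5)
Your proposal is correct in substance, but it takes a noticeably different route from the paper. The paper's proof is a short greedy exchange argument: start from the optimal scenario-$\kappa$ schedule of the original jobs (whose minimum load is at least $\opt_{\kappa}$), keep each huge singleton bag on its machine, strip away the non-huge jobs, and refill each machine greedily with non-huge bags, stopping when the next bag would push the machine above its previous load. Since the total size of the bags equals the total size of the removed jobs, either every machine exactly recovers its previous load, or every machine stopped because of the threshold and hence lost less than one bag, i.e.\ at most $3\cdot UB\leq 3\eps\cdot\opt_{\kappa}$ (using $\opt_{\kappa}\geq\opt_{k'}\geq UB/\eps$, exactly as you do); leftover bags are then placed arbitrarily, and the EPTAS loses the final $1-\eps$ factor. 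Your argument replaces this greedy step by an assignment LP, an extreme point, and a forest rounding in the style of Shmoys--Tardos / Bez\'akov\'a--Dani, which indeed yields the same per-machine loss of one bag: the support of a vertex solution is acyclic (a cycle admits a load-preserving perturbation), and rooting each tree at a machine and sending every split bag to its \emph{parent} machine makes each machine lose at most its single parent bag, i.e.\ at most $3\cdot UB$. Be aware that this orientation is the one place where care is needed: your phrase ``processed leaf by leaf'' leaves the direction unspecified, and rounding split bags toward child machines can make one machine lose several bags, breaking the bound; with the parent rule the claim is fine. So both proofs deliver profit at least $(1-3\eps)(1-\eps)\cdot\opt_{\kappa}$ (the fraction in the lemma statement is evidently a typo for this product, and you interpret it as the paper intends); what the paper's greedy buys is elementarity -- no LP solving, no vertex/forest machinery, and a template that is reused almost verbatim for the $\lp$-norm version later -- while your LP rounding is heavier but generalizes to settings where a simple size-matching greedy is unavailable.
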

\begin{proof}
Consider an optimal solution for $\kappa$ machines and the original jobs. We show that this schedule can be modified into a schedule of the bags, such that the profit of the schedule is smaller by a factor not exceeding $1-3\eps$. Thus, an optimal schedule of the bags is not worse, and by using an EPTAS the profit may decrease by another factor of $1-\eps$.
The adaptation of the schedule is as follows. The huge jobs are assigned as before, since each of them has its own bag. All non-huge jobs are removed, and each machine receives bags until it is not possible to add another bag without exceeding the previous load or no unassigned bags remain. Given the property that the total size of bags is equal to the total size of jobs, it is either the case that the loads are equal to previous loads (in which case the value is unchanged) or there is at least one unassigned bag. In the latter case, no machine load decreased by more than an additive term of $3\cdot UB$, and therefore the value is at least the previous one minus $3\cdot UB$. To complete the assignment, all remaining bags are assigned arbitrarily. Since $\opt_{\kappa}\geq \frac{\rho\cdot LB}{\eps}=\frac{UB}{\eps}$ and the resulting value is at least $\opt_{\kappa}-3\cdot UB$, we find by $3\cdot UB \leq 3\eps \cdot \opt_{\kappa}$ that $\opt_{\kappa}-3\cdot UB \geq (1-3\eps)\cdot \opt_{\kappa}$.
\end{proof}

\begin{lemma}
Consider the instance of our problem when we let $q_{\kappa}=0$ for all $\kappa < k$ and for $\kappa > k_{\max}$. There exists a partition into bags where a  profit at least $\opt_\kappa$ can be obtained for any $\kappa\in[k,k_{\max}]$, the set of bags satisfies that every huge job is packed into its own bag, and each other bag consists of non-huge jobs of total size at most $2\cdot UB$.
\end{lemma}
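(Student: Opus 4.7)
The plan is to take the fixed optimal first-stage partition and locally modify it so that the two required structural properties hold while preserving every machine's work in every relevant scenario; the bag-to-machine assignments will be reconstructed at the end by lifting the original optimal assignments.

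I will fix an optimal jobs-to-bags partition $\sigma_1^*$ together with optimal scenario assignments $\sigma_{2,*}^{(\kappa)}$ that attain $\opt_\kappa$ for every $\kappa\in[k,k_{\max}]$, and then apply two modifications to $\sigma_1^*$. First, for each huge job $j$ I will remove $j$ from the bag that currently contains it and place $j$ in a new singleton bag; after doing this for all huge jobs, every non-singleton bag consists exclusively of non-huge jobs. Second, for every such residual bag whose total size still exceeds $2\cdot UB$, I will re-partition its jobs greedily into sub-bags each of total size at most $2\cdot UB$; this is always feasible since every non-huge job has size at most $UB$, so a new sub-bag can absorb at least one further job before exceeding the threshold. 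The resulting partition clearly has the structural properties required by the lemma: every huge job occupies its own bag, and every other bag consists of non-huge jobs of total size at most $2\cdot UB$.

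For the profit guarantee I will track, for every new bag, the unique bag of $\sigma_1^*$ from which it was derived (either the bag that the huge job was extracted from or the bag that was split). For each scenario $\kappa\in[k,k_{\max}]$ I will then define the bag-to-machine assignment by sending every new bag to the machine to which $\sigma_{2,*}^{(\kappa)}$ had sent its originating bag. Under this lifted assignment each machine receives exactly the same set of jobs as under $\sigma_{2,*}^{(\kappa)}$, so its work is identical; in particular, the minimum machine work in scenario $\kappa$ remains $\opt_\kappa$, which is the required profit bound.

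The only conceptually delicate point, which I expect to be the main obstacle, is that the jobs-to-bags partition is common to all scenarios and cannot depend on $\kappa$, yet the modifications above are global. The lifting argument dispenses with this concern because it uses only the fact that the new partition is a refinement of $\sigma_1^*$: splitting a bag further and sending all its pieces back to the machine of the original bag never changes any machine's work in any scenario, and so the structural modifications can be performed once and for all without sacrificing $\opt_\kappa$ for any $\kappa\in[k,k_{\max}]$.
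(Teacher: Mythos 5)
Your construction has a genuine gap: it ignores the hard constraint that the first-stage partition must use at most $m$ bags. In this problem $\sigma_1$ maps jobs into $\B=\{1,\ldots,m\}$, and the lemma is needed precisely for a partition into these $m$ bags (later the integer program enforces $\sum_{t\in\tau}y_t\leq m$). Your modification is a strict refinement of the optimal partition: every huge job is extracted into a brand-new singleton bag, and every over-full bag is split into several sub-bags. When the optimal solution uses all $m$ bags non-trivially (which is the typical situation, and may even be assumed without loss of generality since $n>m$), for instance when some bags contain a huge job together with other jobs or have total size above $2\cdot UB$, your refinement produces strictly more than $m$ bags, so it is not a feasible first-stage solution at all. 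The lifting argument correctly preserves all machine loads, but that is not where the difficulty lies; the difficulty is to obtain the structure \emph{without increasing the number of bags}, and nothing in your proposal addresses it (merging the new bags back to restore the count of $m$ could again violate the $2\cdot UB$ bound or the singleton property).

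The paper avoids refinement entirely and instead uses an exchange argument over $m$-bag partitions: among all partitions (into the given $m$ bags) that still admit profit at least $\opt_\kappa$ for every $\kappa\in[k,k_{\max}]$, pick one maximizing the number of huge jobs alone in their bags and, among those, one whose sorted bag-size vector is lexicographically minimal. If a huge job shares a bag, or if some non-huge bag exceeds $2\cdot UB$, jobs are moved into an existing bag of total size at most $UB$ (such a bag exists by Lemma \ref{abc}); the machine that loses load still retains either a huge job or total size above $UB\geq\opt_\kappa$, so the Santa Claus value of no scenario in $[k,k_{\max}]$ drops below $\opt_\kappa$, contradicting the extremal choice. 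This keeps the bag count fixed at $m$ throughout, which is exactly the point your argument misses.
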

\begin{proof}
The number of partitions into bags is finite for fixed $m,n$ (it does not exceed $m^n$). Consider the set of partitions for which a solution of profit at least $\opt_\kappa$ can be obtained for any $\kappa\in[k,k_{\max}]$. There is at least one such partition for the correct guess.
For every partition it is possible to define a vector of $m$ components, such that total sizes of bags appear in a non-increasing order. Consider the partition among the considered partitions for which the number of huge jobs that have their own bags is maximum, and out such partitions, one where the vector is lexicographically minimal. We claim that this partition satisfies the requirements.

Assume by contradiction that the number of huge jobs that do not have their own bags is not $h$. Consider a bag with a huge job that contains at least one additional job. This will be named the first bag.
Consider a bag whose total size is at most $UB$, which must exist due to Lemma \ref{abc}. This last bag does not have a huge job since the size of a huge job is above $UB$, and we call it the second bag.
Move all contents of the first bag into the second bag excluding one huge job. For any $\kappa\in[k,k_{\max}]$, since $\opt_{\kappa} \leq UB$, the assignment of bags to machines does not reduce the objective function value below $\opt_{\kappa}$. This holds because there is at most one machine whose total size was reduced (for every $\kappa\in[k,k_{\max}]$), but if such a machine exists, it still has a huge job whose size is above $UB$. Thus, the new partition is also one of the considered partitions.
The new partition has a larger number of huge jobs assigned into their own bags, because the second bag was not such a bag and the first bag became such a bag. This contradicts the choice of assignment to bags. Thus, the partition into bags consists of $h$ bags with huge jobs and $m-h \geq 1$ bags with non-huge jobs.

Next, consider only the components of the vector which do not correspond to bags of huge jobs. This vector is also minimal lexicographically out of vectors for partitions of the non-huge jobs into $m-h$ bags. Assume by contradiction that the first component is above $2\cdot UB$. Consider the bag of the first component (called the first bag now) and a bag with a total size at most $UB$ (called the second bag now). Move one job from the first bag to the second bag. The second bag now has a total size of at most $2\cdot UB$ (since a non-huge job was moved). The first bag now has a smaller total size, but still larger than $UB$. The sorted vector is now smaller lexicographically, and it is still possible to obtain a solution of profit at least $\opt_{\kappa}$ for any $\kappa\in[k,k_{\max}]$, similarly to the proof given here for huge jobs, which is a contradiction.
\end{proof}

In summary, we can focus on the scenarios interval $[k,k_{\max}]$, assume that huge jobs have their own bags, and remaining bags have total sizes not exceeding $2\cdot UB$.

\paragraph{Modifying the input of scenarios with indexes at least $\boldsymbol{k}$ and at most $\boldsymbol{k_{\max}}$.}
Motivated by the last partitioning of the original input into four parts, and the fact that only scenarios with indexes at least $k$ and at most $k_{\max}$ need to be considered, we apply the following transformation.
First, for every $\kappa <k$ we let $q_{\kappa}=0$, in the sense we can augment every solution to the remaining instance (with only a subset of scenarios) into an EPTAS for the original instance before this change to $q$. Furthermore, for every $\kappa>k_{\max}$ we let $q_{\kappa}=0$ in the sense we can ignore the profit of such scenarios.  Then, every huge job is packed into a separate bag.  Such a bag suffices to cover one machine in every remaining scenario.  Therefore, our second step in the transformation is the following one.  We delete the huge jobs from $\J$, we decrease the index of each remaining scenario by $h$ (in particular the new indexes in the support of $q$ will be in the interval $[k-h,k_{\max}-h]$), and we enforce the condition that every bag size is at most $2\cdot UB$.

As one can see, the transformations here are more complicated compared to those used in the previous section, and they have to be carefully designed and analyzed. However, now we are ready to apply methods that resemble the previous section.
Using the fact that for every remaining scenario we have that $\opt_{\kappa}$ is between $LB$ and $UB$, and the fact that $\frac{UB}{LB} = \rho \leq (\frac{1}{\eps})^{1/\eps+1}$ we are able to apply the methods we have developed for the makespan minimization problem to obtain an EPTAS for \prb, as we do next.

\subsection{Rounding bag sizes and job sizes}

\paragraph{Rounding bag sizes.}
We next provide a method to gain structure on the set of feasible solutions that we need to consider for finding a near optimal solution.  Given a feasible solution, the size of bag $i$ denoted as $P(i)$ is the total size of jobs assigned to this bag, that is, $P(i)=\sum_{j: \sigma_1(j)=i} p_j$.

\begin{lemma}\label{bigbag-santa}
There exists a solution of value at least $(1-3\eps)\cdot \opt$ in which the size of every non-empty bag is in the interval $[\eps \cdot LB, 2.5\cdot UB]$.
\end{lemma}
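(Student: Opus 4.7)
The plan is to mimic the bag-merging argument of Lemma~\ref{bigbag}, but with the re-scheduling step adapted to the Santa Claus (minimum load) objective. I work inside the transformed instance from the preprocessing, so I may start from a near-optimal solution in which every non-empty bag already has size at most $2\cdot UB$, and I call a bag \emph{small} if its size is strictly below $\eps\cdot LB$.

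Step~1 is an iterative merging to enforce the lower bound. As long as two small bags exist, I unite their job sets by modifying $\sigma_1$. Every bag produced in this way has size in $[\eps\cdot LB, 2\eps\cdot LB)$: the lower bound follows from the stopping rule, and the upper bound holds because each of its two constituents was below $\eps\cdot LB$ immediately before the last merge. At the end at most one small bag remains; its jobs are dumped into an arbitrary non-small bag, raising that bag's size from at most $2\cdot UB$ to at most $2\cdot UB + \eps\cdot LB \leq 2.5\cdot UB$ (using $LB\leq UB$ and $\eps<\frac{1}{100}$). This already produces the claimed interval of allowed bag sizes.

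Step~2 is a per-scenario re-scheduling. Fix $\kappa\in[k,k_{\max}]$ and let $S$ and $S'$ denote the original small bags and the merged bags produced in Step~1; let $s(i)$ be the total size of bags from $S$ that $\sigma_2^{(\kappa)}$ assigned to machine $i$. I keep every non-small bag on its original machine; in particular the bag that absorbed the leftover stays in place and only gains mass, which is harmless for a minimum objective. Then I redistribute the bags of $S'$ by a bin-covering style procedure, aiming to give every machine an $S'$-mass of at least $s(i) - 2\eps\cdot LB$. Because every bag in $S'$ has size below $2\eps\cdot LB$ and the total mass of $S'$ equals $\sum_i s(i)$, the desired covering can be obtained either by an informed greedy or by rounding the trivially feasible fractional assignment $x_{ij} = s(i)/\sum_{j'\in S'} p_{j'}$. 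Consequently the new load of machine $i$ is at least $(\mathrm{original\ load}) - s(i) + (s(i) - 2\eps\cdot LB) \geq \opt_\kappa - 2\eps\cdot LB$, and since $\opt_\kappa \geq LB$ for every $\kappa\in[k,k_{\max}]$, the Santa Claus value in every considered scenario becomes at least $(1 - 2\eps)\opt_\kappa$. Taking the $q$-weighted sum yields expected value at least $(1 - 2\eps)\opt$, comfortably within the claimed $(1 - 3\eps)$ factor and leaving slack to absorb the minor loss from the leftover merge and any rounding of the $s(i)$'s.

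The main obstacle is Step~2. In the makespan analog one just fills each machine greedily until one more bag would exceed $s(i) + 2\eps\cdot\opt$, and the aggregate nature of the upper-bound objective makes a per-machine overflow guarantee trivial. Here the situation is reversed: I must give a per-machine \emph{underflow} guarantee from a supply $S'$ whose total mass is exactly $\sum_i s(i)$, so a careless greedy can exhaust $S'$ early and leave some machine with a deficit far larger than $2\eps\cdot LB$. Overcoming this requires either an order-sensitive greedy or an LP-rounding argument on the transportation problem that distributes $S'$ proportionally to the $s(i)$'s, and this is the delicate new ingredient relative to the makespan case.
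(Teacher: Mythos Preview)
Your overall plan---merge small bags, reschedule the merged bags per scenario, absorb a possible leftover---is exactly the paper's proof. The gap is that you leave Step~2 as an unresolved ``delicate new ingredient,'' proposing an informed greedy or LP rounding without carrying either out. In fact no such machinery is needed, and your worry about a greedy exhausting $S'$ early is misplaced once you choose the right stopping rule.

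The paper's greedy is the direct analogue of the makespan one, just with the cap set at $s(i)$ rather than $s(i)+2\eps\cdot\opt$: process machines in order, and for machine $i$ add merged bags one at a time \emph{while the running total stays at most $s(i)$}; when the next bag would overshoot $s(i)$, move on. If you stop on machine $i$ because the next bag (of size at most $2\eps\cdot LB$) does not fit, the current total already exceeds $s(i)-2\eps\cdot LB$. If instead you run out of bags on machine $j$, then machines $1,\dots,j$ together absorbed all of $\sum_i s(i)$ while each received at most its own $s(i)$; this forces $s(i)=0$ for every $i>j$ and forces every processed machine to have received exactly $s(i)$. Either way every machine ends Step~2 with $S'$-mass at least $s(i)-2\eps\cdot LB$, and any bags left over after the sweep are simply added to machines still below $s(i)$ (which can only raise the minimum). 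So mass conservation plus the cap $s(i)$ is the whole argument---no order-sensitive analysis, no transportation LP. One minor accounting point: dumping the leftover small bag into some bag $b$ helps $b$'s machine but costs the leftover's original machine up to $\eps\cdot LB$; that additional $\eps\cdot LB$ is precisely what turns your $(1-2\eps)$ into the stated $(1-3\eps)$, and your claim that the total mass of $S'$ equals $\sum_i s(i)$ is off by this same amount if you handle the leftover before rescheduling.
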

\begin{proof}
We note that by earlier arguments, we have that it suffices to consider a feasible solution subject to $P(i) \leq 2\cdot UB$ for all $i$.  Let $S$ be the set of bags in an optimal solution (whose value is  $\opt$) of sizes smaller than $\eps\cdot LB$.  In order to prove the claim we first process the optimal solution by applying the following bag merging process.  As long as there exists a pair of bags (of indexes) $i$ and $i'$ of sizes at most $\eps\cdot \opt$, we unite such a pair of bags.

After applying this change with respect to all such pairs we apply the following modifications to the schedules of the different scenarios.
For a given scenario $\kappa$, we apply the following process on every machine $i\leq \kappa$.  We compute the total size $s(i)$ of the bags in $S$ assigned to $i$ (both the assignment as well as the size of bags are with respect to the original optimal solution). We modify $\sigma_2^{(\kappa)}$ with respect to the bags in $S$. First, all these bags are removed from the machines (for this scenario). Then, the bags are assigned back to the machines. Machine $i$ will get a subset of these bags of total size at least $s(i)-2\eps \cdot LB$ and at most $s(i)$. Bags are assigned such that one bag is added to the machine at each time, and this is done until the next bag in the list of bags in $S$ does not fit the upper bound of $s(i)$ on the total size. Since the resulting size of any bag created by uniting the bags from $S$ is at most $2\eps \cdot LB$, we conclude that the total size of bags (from $S$) that are assigned to $i$ is at least $s(i) - 2\eps \cdot LB$ unless all machines satisfy these lower bounds. In this case we assign the remaining bags as follows.
For each remaining bag, select a machine $i$ that receives a total size of bags below $s(i)$ and add the bag to the machine. Due to the total size of bags, there is always such a machine as long as not all bags are assigned. Thus, for each machine the total size was increased by at most $2\eps \cdot LB$. The profit for scenario $\kappa$ increased by at most $2\eps \cdot LB$.
Thus, all bags in $S$ are assigned again.  The bags not in $S$ are assigned exactly as they used to be assigned in $\sigma_2^{(\kappa)}$, so the resulting value in the scenario is decreased by at most $2\eps \cdot LB$ with respect to $\opt_{\kappa}$.

After applying the merging process as long as possible, there might be at most one bag of size smaller than $\eps\cdot LB$, while all other bags have sizes of at least $\eps \cdot LB$. If such a bag exists, its jobs are reassigned to an arbitrary non-empty bag.  This last step may decrease the size of one bag by at most $\eps \cdot LB$ and this decreases the value of the solution of each scenario by at most $\eps \cdot LB$, so this bounds also the decrease of the expected value of the profit of the resulting solution after the second stage (using the fact that the sum of the components of $q$ is at most $1$). It may also increase the size of one bag by an additive term of at most $\eps\cdot LB$, and thus no bag has size below $\eps\cdot LB$, and no bag has size above $2\cdot UB+\eps\cdot LB \leq 2.5\cdot UB$.
\end{proof}

In what follows, we will decrease the size of a bag to be the next value of the form $(\eps + r\eps^2)\cdot LB$ for an integer $r \geq 0$ (except for empty bags for which the allowed size remains zero). Thus, we will not use precise sizes for bags but lower bounds on sizes, and we call them ``allowed sizes'' in what follows. For bags of allowed size zero the meaning remains that such a bag is empty.
We let $P'(i)$ be this decreased (allowed) size of bag $i$, that is, $P'(i) = \max_{r: (\eps + r\eps^2) \cdot LB \leq P(i)} (\eps + r\eps^2) \cdot LB$.  The allowed size is not smaller than $\eps \cdot LB$ and it is smaller than the size by an additive term of at most $\eps^2\cdot LB$. Thus, for every subset of bags, the total allowed size of the bags in the set is at least $\frac {1}{1+\eps}$ times the total size of the bags in the subset, we conclude that this rounding of the bag sizes will decrease the value of any solution by a multiplicative factor of at most $1+\eps$ (and therefore the expected value also decreases by at most this factor), and it will not increase the value of a solution for any scenario.
Thus, in what follows, we will consider only bag sizes that belong to the set $B=\{(\eps + r\eps^2) \cdot LB : r\in \Z, r\geq 0, (\eps + r\eps^2) \cdot LB \leq 2.5\cdot UB \}\cup\{0\}$.  We use this set by Lemma \ref{bigbag-santa}.
We conclude that the following holds.

\begin{corollary}
If the guessed information vector is satisfied by an optimal solution, then there is a solution of expected value of the Santa Claus value  of at least $\frac{1-3\eps}{1+\eps} \cdot \opt$ that uses only bags with allowed sizes in $B$.
\end{corollary}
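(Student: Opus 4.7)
The plan is to chain Lemma \ref{bigbag-santa} with the allowed-size rounding described in the paragraph directly preceding the corollary. First, I would invoke Lemma \ref{bigbag-santa} to obtain a feasible first-stage assignment $\sigma_1$ together with second-stage assignments $\{\sigma_2^{(\kappa)}\}_\kappa$ whose expected Santa Claus value is at least $(1-3\eps)\cdot\opt$ and in which every non-empty bag has total size $P(i)$ lying in $[\eps\cdot LB,\, 2.5\cdot UB]$. No jobs are moved in what follows; I would merely re-label each bag by its allowed size.

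Second, for each bag $i$ I would set $P'(i)\in B$ to be the largest element of $B$ not exceeding $P(i)$. Consecutive elements of $B$ differ by $\eps^2\cdot LB$, so $P'(i)\geq P(i)-\eps^2\cdot LB$ for every non-empty bag; combining this with $P(i)\geq \eps\cdot LB$ gives the per-bag inequality $P(i)\leq (1+\eps)\cdot P'(i)$. This is the step that justifies the $1/(1+\eps)$ factor asserted just above the corollary.

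Third, fix a scenario $\kappa$ and let $W_i$ and $W'_i$ denote the work of machine $i$ computed with actual and allowed bag sizes, respectively. Applying the per-bag inequality summand-wise yields $W_i\leq (1+\eps)\cdot W'_i$ for every machine, so $\min_i W'_i\geq \tfrac{1}{1+\eps}\min_i W_i$. Since $\min_i W_i$ is exactly the Santa Claus value achieved in scenario $\kappa$ by the solution of Lemma \ref{bigbag-santa}, taking the weighted average over $\kappa$ with weights $q_\kappa$ and using linearity of expectation shows that the schedule, viewed with its allowed-size bookkeeping, achieves expected Santa Claus value at least $\tfrac{1}{1+\eps}\cdot(1-3\eps)\cdot\opt=\tfrac{1-3\eps}{1+\eps}\cdot\opt$. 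Because $P(i)\geq P'(i)$ for every bag, the actual Santa Claus value only dominates this allowed-size value, so the claim follows.

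I do not foresee a real obstacle here: the only hypothesis that needs verification is the lower bound $P(i)\geq \eps\cdot LB$ used in the per-bag rounding inequality, and this is precisely what the size interval in Lemma \ref{bigbag-santa} guarantees; everything else is bookkeeping and a single application of linearity of expectation.
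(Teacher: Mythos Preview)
Your proposal is correct and mirrors the paper's argument exactly: apply Lemma~\ref{bigbag-santa} to get bag sizes in $[\eps\cdot LB,\,2.5\cdot UB]$, then round each bag size down to its allowed size in $B$ and lose at most a factor $1+\eps$ on every machine load, hence on the expected Santa Claus value. One tiny arithmetic point: to derive the per-bag inequality $P(i)\le(1+\eps)P'(i)$ cleanly you should use $P'(i)\ge\eps\cdot LB$ (which follows because $\eps\cdot LB\in B$ and $P(i)\ge\eps\cdot LB$) rather than $P(i)\ge\eps\cdot LB$ directly, since the latter combined with $P'(i)\ge P(i)-\eps^2\cdot LB$ only yields the slightly weaker bound $P(i)\le\tfrac{1}{1-\eps}P'(i)$.
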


Note that the allowed size of a bag may be slightly smaller than the actual total size of jobs assigned to the bag. Later, the allowed size acts as a lower bound (without an upper bound), and we sometimes take the allowed size into account in the calculation of machine completion times.

\paragraph{Rounding job sizes.}
We apply a similar rounding method for job sizes.  Recall that by our transformation, every job $j$ has size at most $UB$ (since huge jobs were removed from the input).  Next, we apply the following modification to the jobs of sizes at most $\eps^2 \cdot LB$.  While there is a pair of jobs of sizes at most $\eps^2 \cdot LB$, we unite such a pair of jobs. If there is an additional job of size at most $\eps^2 \cdot LB$, we delete it from the instance, and in the resulting solution (after applying the algorithm below on the resulting instance) we add the deleted job to an arbitrary bag.  This deletion of the deleted job decreases the Santa Claus value of every scenario by at most $\eps^2 \cdot LB$, so the resulting expected value of the Santa Claus value will be decreased by at most $\eps^2\cdot LB$. Adding the job back does not decrease the value. We consider the instance without this possibly deleted job, and we prove the following.
\begin{lemma}
The optimal expected value of the Santa Claus value of the instance resulting from the above modification of the job sizes among all solutions with allowed bag sizes in the following modified set
$B'=\{(\eps + r\eps^2) \cdot LB : r\in \Z, r\geq -2, (\eps + r\eps^2) \cdot LB \leq 3 \cdot UB \}\cup\{0\}$
is at least $(1-2\eps) \cdot \frac{1-3\eps}{1+\eps} \cdot \opt$.
\end{lemma}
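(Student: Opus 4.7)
The plan is to mirror the analogous job-merging step from the \pra\ scheme but in the opposite direction, tracking lower bounds on bag sizes rather than upper bounds. Starting from the solution guaranteed by the preceding corollary---expected Santa Claus value at least $\frac{1-3\eps}{1+\eps}\cdot\opt$ with allowed bag sizes in $B$---I would construct a feasible solution for the instance produced by the job-size modification whose allowed bag sizes lie in $B'$ and whose expected value drops by a factor of at most $1-2\eps$.

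For each bag $i$ of the starting solution, let $s_i$ denote the total size of jobs of size at most $\eps^2\cdot LB$ assigned to $i$. Detach all these small jobs from $\sigma_1$, perform the pairwise merging, and reassign the resulting merged jobs to bags by a balanced greedy: process the merged jobs in arbitrary order and always place each one in a bag whose current deficit $s_i-(\text{current merged load in }i)$ is maximum. Using that every merged job has size at most $2\eps^2\cdot LB$ and that the total merged material equals $\sum_i s_i$, a two-sided exchange argument shows that each bag's final merged load lies in the interval $[s_i-2\eps^2\cdot LB,\,s_i+2\eps^2\cdot LB]$, hence every bag's actual size changes by at most $2\eps^2\cdot LB$ in either direction. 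Now redefine the allowed size of each non-empty bag by replacing $(\eps+r\eps^2)\cdot LB$ with $(\eps+(r-2)\eps^2)\cdot LB$: this is still a valid lower bound on the new actual size and belongs to $B'$ (the new exponent satisfies $r-2\geq -2$, and since sizes grow by at most $2\eps^2\cdot LB<0.5\cdot UB$, the upper limit $3\cdot UB$ of $B'$ is respected). Because every old allowed size was at least $\eps\cdot LB$, the new allowed size of every bag is at least $\frac{\eps-2\eps^2}{\eps}=1-2\eps$ times the old one. Since the Santa Claus value in every scenario is a minimum over machines of sums of allowed bag sizes, the value shrinks by at most this factor, and so does the expected value, yielding the target $(1-2\eps)\cdot\frac{1-3\eps}{1+\eps}\cdot\opt$.

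The main delicate step will be the balancing claim for the greedy reassignment. For the upper bound, if a bag ended with merged load above $s_i+2\eps^2\cdot LB$, its load was already above $s_i$ before the last item was placed there, contradicting the max-deficit rule. For the lower bound, if some bag $j$ ended with deficit exceeding $2\eps^2\cdot LB$, then by monotonicity its deficit was always above $2\eps^2\cdot LB$, so by the max-deficit rule every placement went into a bag with deficit at least as large and therefore still nonnegative after the placement (since each item has size at most $2\eps^2\cdot LB$); hence no bag ever became overloaded and the total placed load would be strictly less than $\sum_i s_i$, contradicting the fact that all merged material was placed. The at-most-one leftover small job that is deleted from the modified instance only contributes an additive $\eps^2\cdot LB$ loss in each scenario, which is absorbed into the $(1-2\eps)$ slack (and is moreover recovered at the end of the scheme by adding the job back to an arbitrary bag, which never hurts the Santa Claus value). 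Once the balancing claim is in place, the rest of the argument parallels the corresponding step for \pra.
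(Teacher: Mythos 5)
Your proposal is correct and follows essentially the same route as the paper: detach the small jobs, merge them, greedily redistribute the merged jobs so that each bag's small-job content changes by at most $2\eps^2\cdot LB$ in either direction (with the at-most-one deleted leftover absorbed in this slack), then lower each allowed size by at most two grid steps (whence $r\geq -2$ in $B'$) and charge the loss multiplicatively against the $\eps\cdot LB$ lower bound on allowed sizes to get the factor $1-2\eps$. The only cosmetic difference is that you use a max-deficit greedy with a two-sided balancing claim, whereas the paper first fills each bag up to $s_i-2\eps^2\cdot LB$ and then distributes leftovers to bags still below $s_i$; both give the same $\pm 2\eps^2\cdot LB$ guarantee (note only that your one-line justification of the upper bound should be completed by the same total-mass counting you already use for the lower bound, since placing into a bag with negative deficit is not by itself a violation of the max-deficit rule).
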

\begin{proof}
We apply the following process on the solution $\sigma_1$ of the first stage.  For every bag $i$, we calculate the total size of jobs of size at most $\eps^2 \cdot LB$ assigned to $i$, and let $s_i$ be this value.  We delete the assignment of the jobs of size at most $\eps^2 \cdot LB$ from $\sigma_1$, we start defining the assignment of the new jobs that we created one by one in the following way.  We compute a lower bound $s_i - 2\eps^2 \cdot LB$ on the total size of these new jobs that we require to assign to bag $i$.  For every new job $j$, we assign it to a bag subject to the constraint that prior to this assignment, the new total size of new jobs in the bag will not exceed the lower bound. At the end every bag size is decreased by at most $2\eps^2 \cdot LB$ but perhaps there are remaining jobs. This is so because whenever a bag $i$ cannot get a new job, the previously assigned new jobs to $i$ have total size at most $s_i$.  By summing over all $i$, and noting that before the modifications we started with a feasible solution, we conclude that after considering all jobs all bags meet these lower bounds.  The remaining new jobs are assigned to bags for which the total size is smaller than it was, that is, each time a bag $i$ is selected such that it received less than a total size of $s_i$, and one remaining new job is added to that bag. As long as there is an unassigned job, there is also such a bag.
 In this process all bag sizes were decreased by additive terms of at most $2\eps^2\cdot LB$ and hence the modification of the allowed bag sizes. On the other hand, the increase of each bag size is at most $2\eps^2\cdot LB < \frac{UB}2$, because after bag $i$ exceeds the total size $s(i)$, the bag will not get another remaining job. Since only non-empty bags were affected, and the allowed size of a bag was at least $\eps \cdot LB$, the allowed size of a bag increases by at most a factor of $1+2\eps$.
 Thus, the new sizes do not exceed $3 \cdot UB$.
The claim holds by noting that the total allowed size of each bag was decreased by a multiplicative factor of at most $1-2\eps$ since previously all allowed bag sizes were at least $\eps \cdot LB$.
\end{proof}

Next, we round down the size of every job $j$ to be the next value that is of the form $(1+\eps)^r$ for an integer $r$. The size of every job cannot increase and it may decrease by a multiplicative factor of at most $1+\eps$. Job sizes are still larger than $\eps^3 \cdot LB$ and not larger than $UB \leq (\frac 1{\eps})^{1/\eps+1}\cdot LB$. Thus, the number of different job sizes is $O(\log_{1+\eps} (\frac {1}{\eps})^{1/\eps +4}) \leq \frac{2}{\eps^3}-1$.
We decrease the allowed size of each bag by another multiplicative factor of $1+\eps$. Bags sizes are rounded down again to the next value of the form $\eps + r\eps^2$, and since the smallest allowed size was $(\eps - 2\eps^2) \cdot LB$ and $\frac{\eps - 2\eps^2}{1+\eps} \geq \eps - 3\eps^2$, the allowed bag sizes become $B''=\{(\eps + r\eps^2) \cdot LB : r\in \Z, r\geq -3, (\eps + r\eps^2) \cdot LB \leq 3 \cdot UB \}\cup\{0\}$, and the expected value of the Santa Claus value of a feasible solution decreases by a multiplicative factor of at most $\frac{1-2\eps}{1-3\eps}$.
By our guessing step and our transformation, there is a feasible solution with expected value of the Santa Claus value of at least $ (1-\eps) (1-3\eps)^2 \cdot \opt$.

\medskip

A bag is called {\it tight} if the set of jobs of this bag cannot use a bag of a larger allowed size.  Note that any solution where some bags are not tight can be converted into one where all bags are tight without decreasing the expected value of the objective.

\begin{lemma}\label{PandP2}
For any tight bag it holds that the allowed size of the bag is at least $\eps^2$ times its actual size (the total size of jobs assigned to the bag, such that their rounded sizes are considered).
\end{lemma}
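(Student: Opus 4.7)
The argument mirrors that of Lemma~\ref{PandP1} from the makespan scheme, with the lower-bound interpretation of allowed sizes replacing the upper-bound interpretation. The key fact I would extract from the tightness of bag $i$ is the gap inequality $P(i) < P'(i) + \eps^2\cdot LB$: if this failed, then the next larger grid value in $B''$ would still be a valid lower bound on $P(i)$, contradicting the choice of $P'(i)$ as the largest allowed size not exceeding $P(i)$.

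From this gap inequality I would split into cases according to the magnitude of $P'(i)$. In the principal case $P'(i) \geq \eps \cdot LB$ (equivalently, the index $r$ in the representation $P'(i) = (\eps + r\eps^2)\cdot LB$ satisfies $r \geq 0$), the additive gap $\eps^2 \cdot LB$ is at most $\eps \cdot P'(i)$, so $P(i) < (1+\eps)\cdot P'(i)$ and hence $P'(i) \geq P(i)/(1+\eps)$, which comfortably implies $P'(i) \geq \eps^2 \cdot P(i)$. In the remaining case, where $r \in \{-3,-2,-1\}$ so that $P'(i) < \eps \cdot LB$, the gap inequality forces $P(i) < \eps \cdot LB$, while the construction of $B''$ gives $P'(i) \geq (\eps - 3\eps^2)\cdot LB$; combining these yields $P'(i)/P(i) \geq 1 - 3\eps$, again much stronger than $\eps^2$ for $\eps < 1/100$.

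The only remaining situation is $P'(i) = 0$. Here I would invoke the convention from the definition of allowed sizes that $P'(i) = 0$ denotes an empty bag, so $P(i) = 0$ and the inequality is trivial. To justify that this convention is consistent with tightness, I would track the preprocessing steps: after the bag-size lower bound of Lemma~\ref{bigbag-santa}, the small-job merging, and the multiplicative rounding of job sizes, any non-empty bag satisfies $P(i) \geq (\eps - 3\eps^2)\cdot LB$, which is exactly the smallest positive element of $B''$; such a bag could therefore be assigned this positive allowed size instead of $0$, so it cannot be tight at $P'(i) = 0$. The argument presents no genuine obstacle; the only thing to watch is the cascade of small multiplicative losses from the preceding preprocessing, which is precisely what produces the coefficient $3$ in the minimum element of $B''$.
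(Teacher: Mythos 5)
Your proof is correct and follows essentially the same route as the paper: tightness gives the gap bound $P(i)<P'(i)+\eps^2\cdot LB$, and the minimum positive allowed size $(\eps-3\eps^2)\cdot LB$ in $B''$ then yields a ratio far above $\eps^2$, with allowed size zero meaning an empty bag. The only difference is cosmetic: the paper handles all positive allowed sizes in one computation via $\frac{\eps-3\eps^2}{\eps-2\eps^2}>\eps^2$, whereas you split into the cases $r\geq 0$ and $r\in\{-3,-2,-1\}$.
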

\begin{proof}
For a tight bag with allowed size at least $(\eps-3\eps^2)\cdot LB$ the claim is obvious since the total size is smaller than the allowed size plus $\eps^2\cdot LB$, and $\frac{\eps-3\eps^2}{\eps-2\eps^2}>\eps^2$.
 For tight bags of allowed size zero the property holds as well.
\end{proof}

\paragraph{Guessing an approximated histogram of the optimal Santa Claus value in all scenarios.}
For the remaining scenarios we use allowed bag sizes for computing profits of solutions. In order to find an upper bound on the modified values $\opt_{\kappa}$ (of the optimal solution for the original input) it is required to consider only steps where solutions were modified such that some bags became more loaded due to reassignment of jobs and some machines became more loaded due to reassignment of bags. Before any modifications, it holds that $\opt_{\kappa}\leq UB$. Since the profit increased by at most an additive term of $2\eps \cdot LB$ and then a multiplicative term of at most $1+2\eps$. We have $(1+2\eps)\cdot(UB +2\eps \cdot LB)<2\cdot UB$, and we use the upper bound $\opt_{\kappa}<2\cdot UB$. On the other hand, the modifications do not decrease any value $\opt_{\kappa}$ below $\frac{LB}2$.  That is, there exists a near optimal solution for which in every scenario $\kappa$, the value of the second stage solution is in $[\frac{LB}{2},2\cdot UB )$.
We also bound that number of bags assigned to a machine. We can assume that the starting time of a bag for scenario $\kappa$ is at most $\opt_{\kappa}$ since otherwise the bag can be moved to start at an earlier time. For each machine, at most one bag can start at time $opt_{\kappa}$. The number of bags that start earlier is below $\frac{2\cdot UB}{\eps\cdot LB}$, since the size of a bag is at most $\eps \cdot LB$. This number is at most $\frac{2\rho}{\eps}$ and together with at most one additional bag, the number is at most $\frac{3\rho}{\eps}$.

Our next guessing step is motivated by linear grouping of the $\opt_{\kappa}$ values for $\kappa\in [k,k_{\max}]$.  Once again, we consider a histogram. Here, the histogram corresponds to the values of the fixed optimal solution $\opt$ of $\prb$ for which the four guessed values correspond.
The width of the bar of scenario $\kappa$ is $q_{\kappa}$ and its height is the Santa Claus value in this scenario (that is, the value of $\opt_{\kappa}$).  By an earlier assumption,  when $\kappa$ is increased the optimal Santa Claus value does not increase. Thus, by plotting the bars in increasing order of $\kappa$ we get a monotone non-decreasing histogram, where the total area of the bars is the expected value of the Santa Claus values of the optimal solution and the total width of all the bars is denoted as $Q$ and it is at most $1$.
We sort the indexes of scenarios with (strictly) positive $q_{\kappa}$ in the interval $[k,k_{\max}]$ and we denote by $K$ the resulting sorted list.  For every $\kappa\in K$ we compute the total width of the bars with indexes at most $\kappa$ (using the ordering in $K$, i.e., in increasing indexes and considering only the indexes of the scenarios), and denote this sum by $Q_{\kappa}$ and let $Q'_{\kappa}=Q_{\kappa}-q_{\kappa}$.  Thus, $Q_{\kappa}$ is the total probability of scenarios in $\{k, k+1, \ldots,\kappa \}$, and $Q'_{\kappa}$ is the total probability of scenarios in $\{k, k+1,\ldots,{\kappa}-1\}$. The set $K$ does not contain scenarios with probability zero, but if $q_{{\kappa}-1},q_{\kappa}>0$, it holds that $Q_{{\kappa}-1}=Q'_{\kappa}$. According to the definitions, it holds that $Q_{\kappa}-Q'_{\kappa}=q_{\kappa}$, and the interval $[Q'_{\kappa},Q_{\kappa})$ on the horizontal axis corresponds to scenario ${\kappa}$.  Furthermore, $Q=Q_{k_{\max}}$.

Next, we compute a lower bound histogram $W$ as follows first by selecting a sublist $K'$ of $K$.  An index $\kappa \in K$ belongs to $K'$ if there exists an integer $\ell$ such that $Q'_{{\kappa}}\leq \frac{\eps^2}{\rho} \cdot \ell \cdot Q< Q_{{\kappa}}$. The motivation is that we would like to consider points which are integral multiples of $\frac{\eps^2}{\rho} \cdot Q$ and the set $K'$ contains all values of ${\kappa} \in K$ for which such special values belong to the interval $[Q'_{\kappa},Q_{\kappa})$. Values of Santa Claus value will be decreased such that the Santa Claus value function will still be piecewise constant, but it will have a smaller number of image values.

The new lower bound histogram is defined based on $K'$. For $\kappa \in K'$,  the value $\opt_{{\kappa}'}$ is used for all points with horizontal value between $Q_{\kappa}$ and up to $Q_{{\kappa}'}$ where ${\kappa}'>{\kappa}$ is the index just after ${\kappa}$ in the sublist $K'$, and if $\kappa$ is the largest element of $K'$, the value will be zero up to the last point where the histogram is defined (that is, up to $Q$).  Since the original histogram was monotone non-increasing, the new histogram is pointwise not larger than the original histogram. The possible modification is in the interval $[Q_{\kappa},Q'_{{\kappa}'})$ if ${\kappa}$ is not the maximum value is $K'$, and there may be a change for ${\kappa}+1,\ldots,{\kappa}'-1$ (that is, if ${\kappa}'\geq {\kappa}+2$). If ${\kappa}$ is the largest element of $K'$ but not of $K$, there may be a change for all ${\kappa}'\in K$ such that ${\kappa}'\geq {\kappa}+1$.
Thus, a lower bound on the total area below the histogram $W$ is not larger than the expected value of the Santa Claus value of the optimal solution.

We use the modified histogram $W$ to obtain another bound. For that we see $W$ as a step function whose values are the corresponding points in the upper edge of the histogram.  We define a new histogram $W'$ by letting it be $W(x-\frac{\eps^2}{\rho} \cdot Q)$ for all $x\in [0,Q\cdot(1+\frac{\eps^2}{\rho})]$ (and when $W$ is undefined due to a negative argument, we define it to be equal to $2\cdot UB$).  In this way we add a segment of length $\frac{\eps^2}{\rho}\cdot Q$ to $W$ and we shift the resulting histogram to the right to obtain $W'$. Since the Santa Claus value for every scenario is at least $\frac{LB}2$, we have that $\frac{LB}{2} \cdot Q$ is a lower bound on the area of the first histogram. On the other hand,  we added a segment of width of $\frac{\eps^2}{\rho} \cdot Q$ and height $2\cdot UB$, so the area of the segment is  $2\eps^2\cdot Q \cdot LB$.  The resulting histogram $W'$ is pointwise at least the original histogram (and thus also not smaller than $W$), and thus its area is also at least $\frac{LB}{2} \cdot Q$. For the added segment, this holds since the Santa Claus value for every scenario never exceeds $2\cdot UB = 2\cdot \rho\cdot LB$.
The relation between the three histograms holds also since every value $\opt_{\kappa'}$ was extended to the left for an interval not longer than $\frac{\eps^2}{\rho} \cdot Q$. Thus, if we consider $W$ instead of the original histogram, we decrease the value of the solution by a multiplicative factor smaller than $1+\eps$.

The guessing step that we use is motivated by this function $W$.  We let $W_{\kappa}$ be the height of the histogram $W$ in the bar corresponding to the scenario ${\kappa}$.  The relevant values ${\kappa}$ are those in $K'$, and the histogram $W$ only has values of the form $\opt_{\kappa}$ for ${\kappa} \in K'$. We guess the histogram $W$.  This means to guess the optimal Santa Claus value of $O(\frac{\rho}{\eps^2})$ scenarios.
Each of these has at most $\frac{2\rho}{\eps^2}$ different values since all allowed bag sizes are integer multiples of $\eps^2\cdot LB$ and the Santa Claus value of each scenario is at most $2\cdot UB = 2\rho \cdot LB$.  Thus, the number of possibilities of this guessing step is upper bounded by a function of $\frac{1}{\eps}$ which is $O({{(\frac{\rho}{\eps^2})}^{O(\frac{\rho}{\eps^2})}})$.
In what follows, we consider the iteration of the algorithm below when we use the value of the guess corresponding to the optimal solution.

\paragraph{The template-configuration integer program.}  A {\em template} of a bag is a multiset of job sizes assigned to a common bag.  We consider only multisets for which the total size of jobs is at most $3 \cdot UB $ due to the upper bound on allowed sizes of  bags.  Since the number of distinct job sizes (in the rounded instance) is upper bounded by $\frac{2}{\eps^3}-1$ and there are at most $\frac{3\rho}{\eps^2}$ jobs assigned to a common bag (since the rounded size of each job is above $\eps^2\cdot LB$), we conclude that the number of templates is at most $(\frac{2}{\eps^3 })^{(3\rho/\eps^2)}$, which is a constant depending only on $\frac{1}{\eps}$.
The reason is that each bag has $\frac{3\rho}{\eps^2}$ slots for jobs, such that each one may be empty or contain a job of one of the sizes.  We are going to have a (non-negative) counter decision  variable $y_t$ for every template $t$ that stands for the number of bags with template $t$.  Let $\tau$ be the set of templates, and assume that a template is a vector whose $\ell$-th component is the number of jobs with size equal to the $\ell$-th size which are packed into a bag with this template.  In order for such an assignment of the first stage to be feasible we have the following constraints where $n_{\ell}$ denotes the number of jobs in the rounded instance whose size is the $\ell$-th size of this instance. Recall that $m$ stands for the number of bags after the removal of some bags due to the packing of huge jobs.

$$ \sum_{t\in \tau} y_t \leq m \ , $$
$$ \sum_{t\in \tau} t_{\ell} \cdot y_t = n_{\ell}  \  , \forall \ell .$$

Observe that the number of constraints in the last family of constraints is a constant depending on $\eps$ (it is at most $\frac{2}{\eps^3}$), and all coefficients in the constraint matrix are (non-negative) integers not larger than $\frac{3\rho}{\eps^2}$.

We augment $K'$ with the maximum index in $K$ if this index does not already belong to $K'$, in order to satisfy the condition that for every index of $K$ there is some index of $K'$ that is not smaller.
Consider a scenario $\kappa\in K'$.  Recall that in scenario ${\kappa}$  the number of non-empty bags assigned to each machine is at most  $\frac{3\rho}{\eps}$. Define a configuration of a machine in scenario $\kappa$ to be a multiset of templates such that the multiset has at most $\frac{3\rho}{\eps}$ templates (taking copies into account) and the total size of the templates encoded in the multiset is at least $W_{\kappa}$ (the guess of a value of the histogram $W$). The number of configurations is at most a constant that depends only on the value of $\eps$, and this is also an upper bound on the number of suitable configurations. Since our mathematical program will not limit the Santa Claus value of any scenario using linear inequalities, we use a lower bound for it by not allowing configurations whose total allowed sizes is smaller than the planned Santa Claus value for the scenario. We let $C^{({\kappa})}$ denote the set of configurations for scenario ${\kappa}$, where $c\in C^{({\kappa})}$ is a vector of $|\tau|$ components where component $c_t$ for $t\in \tau$ is the number of copies of template $t$ assigned to configuration $c$.  Components are non-negative integers not larger than $\frac{3\rho}{\eps}$.
For scenario ${\kappa}\in K'$, we will have a family of (non-negative) decision variables $x_{c,{\kappa}}$ (for all $c\in C^{({\kappa})}$) counting the number of machines with this configuration.

For each such scenario ${\kappa}\in K'$, we have a set of constraints each of which involves only the template counters (acting as a family of global decision variables) and the family of the ${\kappa}$-th local family of decision variables, namely the ones corresponding to this scenario.  The family of constraints for the scenario ${\kappa}\in K'$ are as follows.

$$ \sum_{c\in C^{({\kappa})}} x_{c,{\kappa}}= {\kappa} \ ,$$
$$ \sum_{c\in C^{({\kappa})}} c_t\cdot x_{c,{\kappa}} - y_t = 0 \ , \ \forall t\in \tau \ .$$

Observe that the number of constraints in such a family of constraints is upper bounded by a constant depending only on $\eps$ (which is the number of possible templates plus $1$) and the coefficients in the constraint matrix are again all integers of absolute value at most a constant depending only on $\eps$ (at most $\frac{3\rho}{\eps}$.).

All decision variables are forced to be non-negative integers and we would like to solve the feasibility integer program defined above (with all constraints and decision variables).
The right hand side vector consists of integers that are at most $n$ (using $m <n$).
Such an integer linear program is again solved by Lenstra's algorithm whose time complexity is upper bounded (in our case) by some computable function of $\frac{1}{\eps}$ times a polynomial in $n$.

We apply such an algorithm and obtain a feasible solution $(x,y)$ if such a solution exists.  Based on such a feasible solution, we assign jobs to bags using the $y$ variables.  That is, for every $t\in \tau$, we schedule $y_t$ bags using template $t$.  By the constraint $ \sum_{t\in \tau} y_t \leq m$ there are at most $m$ bags, and the other bags will be empty.  Next, for every bag and every size $p$ of jobs in the rounded instance, if the template assigned to the bag has $\alpha$ jobs of this size, we will assign $\alpha$ jobs of this size to this bag.  Doing this for all sizes and all bags is possible by the constraints $ \sum_{t\in \tau} t_{\ell} \cdot y_t = n_{\ell} \  , \forall \ell$, and these constraints ensure that all jobs are assigned to bags. Note that the modification for jobs consisted of merging small jobs. When such a merged job is assigned, this means that a subset of jobs is assigned.  Next consider the assignment of bags to machines in each scenario.  Consider a scenario $\kappa' \in K$, and let $\kappa$ be the smallest index in $K'$ such that  $\kappa\geq \kappa'$. Assign $x_{c,\kappa}$ machines to configuration $c$ (for all $c\in C^{(\kappa)}$).  It is possible by the constraint  $ \sum_{c\in C^{({\kappa})}} x_{c,{\kappa}}= {\kappa}\geq \kappa'$.  If a configuration $c$ assigned to machine $i$ is supposed to pack $c_t$ copies of template $t$, we pick a subset of $c_t$ bags whose assigned template is $t$ and assign these bags to machine $i$.  We do this for all templates and all machines.  In this way we assign all bags by the constraints $ \sum_{c\in C^{({\kappa})}} c_t\cdot x_{c,{\kappa}} - y_t = 0  \ , \ \forall t\in \tau$.  In the case $\kappa'<\kappa$, bags that are supposed to be assigned to non-existent machines are assigned arbitrarily.
Since the configurations we used in scenario ${\kappa}$ have a total size of jobs of at least $W_{\kappa}$, we conclude the following.

\begin{corollary}
For every value of the guessed information for which the integer program has a feasible solution, there is a linear time algorithm that transform the feasible solution to the integer program into a solution to the rounded instance of \prb\ of Santa Claus value  at least $\sum_{q} q_k \cdot W_k$.
\end{corollary}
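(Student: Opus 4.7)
The plan is to verify each claim of the corollary in order: feasibility of the first‑stage assignment, feasibility of the second‑stage assignment for every scenario, the lower bound on the Santa Claus value, and the linear running time. Most of the construction is already spelled out in the paragraph preceding the statement; the proposal just formalizes why that construction does what is claimed.

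First I would check first‑stage feasibility. Unpacking templates into bags through the $y_t$ variables, the bound $\sum_{t\in\tau} y_t\le m$ produces at most $m$ bags, and padding with empty bags gives exactly $m$; the family $\sum_{t}t_\ell y_t=n_\ell$ guarantees that the correct number of jobs of each rounded size class $\ell$ is placed into bags, so all rounded jobs (and, undoing the small‑job merges inside each bag, all original jobs) are assigned.

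Second I would verify the second‑stage assignment per scenario. For $\kappa'\in K$, let $\kappa$ be the smallest index of $K'$ with $\kappa\ge\kappa'$; this exists because $K'$ was augmented to contain the maximum index of $K$. The constraint $\sum_{c}x_{c,\kappa}=\kappa\ge\kappa'$ provides enough configurations to cover the $\kappa'$ real machines, while $\sum_{c}c_t x_{c,\kappa}=y_t$ says that the total demand of each template across the chosen configurations matches the available supply of bags of that template. Hence one assigns $\kappa'$ of the configurations to the real machines and dumps the remaining $\kappa-\kappa'$ configurations' bag‑multisets arbitrarily on top of existing machines; this reassignment only increases loads and so cannot lower the minimum machine load.

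The value bound is the central step. Every configuration $c\in C^{(\kappa)}$ was defined to have total allowed size at least $W_\kappa$, and by Lemma \ref{PandP2} together with the convention (adopted throughout the \prb\ section) that the allowed size of a bag is a lower bound on its actual rounded size, every real machine receives bags whose actual total size is at least $W_\kappa$. The histogram construction of $W$ assigns to every $\kappa'\in K$ whose horizontal interval $[Q'_{\kappa'},Q_{\kappa'})$ falls in the block extended from the previous $K'$‑point up to $\kappa$ exactly the value $W_{\kappa'}=W_\kappa=\opt_\kappa$; this is precisely the block picked out by our choice of $\kappa$ as the smallest $K'$‑index $\ge\kappa'$. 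Hence the Santa Claus value in scenario $\kappa'$ is at least $W_{\kappa'}$, and taking expectation over the nonzero $q_{\kappa'}$ yields the claimed $\sum_{\kappa'} q_{\kappa'}W_{\kappa'}$.

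Linear running time is immediate once the IP solution $(x,y)$ is at hand: a single pass over the templates turns $y$ into a job‑to‑bag assignment, and for each of the at most $m$ scenarios a single pass over the at‑most‑$\kappa$ configurations produces a bag‑to‑machine assignment, where all hidden constants depend only on $1/\varepsilon$. The main obstacle is conceptual rather than technical, namely keeping straight (i) the match between the shifted, grouped histogram $W$ and scenarios served by configurations tied to $K'$, and (ii) the lower‑bound role of allowed sizes in \prb\ as opposed to their upper‑bound role in Section~\ref{sec:eptas1}; both are handled by appealing to the definitions already set up.
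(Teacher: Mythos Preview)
Your proposal is correct and follows essentially the same construction and justification as the paragraph the paper places immediately before the corollary: unpack $y$ into a bag assignment, for each $\kappa'\in K$ borrow the configuration solution of the nearest $\kappa\in K'$ with $\kappa\ge\kappa'$, and observe that every configuration in $C^{(\kappa)}$ has total template size at least $W_\kappa=W_{\kappa'}$ while the surplus bags can only raise loads. One minor remark: the appeal to Lemma~\ref{PandP2} is superfluous, since configurations are defined directly via total \emph{job} size of their templates being at least $W_\kappa$, so the lower bound on each machine's load follows immediately without going through allowed sizes.
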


Our scheme is established by noting that an optimal solution satisfying the assumptions of the guessed information provides us a multiset of templates all of which are considered in $\tau$ and a multiset of configurations (and all of them have total size of templates not smaller than $W$) for which the corresponding counters satisfy the constraints of the integer program.  Thus, we conclude our second main result.

\begin{theorem}
Problem \prb\ admits an EPTAS.
\end{theorem}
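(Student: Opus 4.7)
The plan is to mirror the scheme for \pra\ but with several substantial adaptations forced by the Santa Claus objective, where the range of $\opt_\kappa$ over scenarios with nonzero probability can be arbitrarily large (unlike the makespan case where all $\opt_\kappa$ are within a bounded multiplicative range of $\opt$). First I would guess $k_{\max}$, the largest index in the support of $q$ with $\opt_{k_{\max}}\geq \eps\cdot\opt$, together with the rounded value $LB$ of $\opt_{k_{\max}}$. Then I would apply Lemma \ref{yy} to guess a ratio $\rho$ and a pair $k' < k \leq k_{\max}$ such that the scenarios with $\opt_\kappa$ much larger than $UB := \rho\cdot LB$ (indexes up to $k'$) and the scenarios with $\opt_\kappa \leq \rho\cdot LB$ (indexes from $k$ to $k_{\max}$) can be handled separately, while the scenarios strictly between contribute at most an $\eps$-fraction of $\opt$ and can be safely zeroed out.

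Next, I would separate the input into a huge portion and a remainder. Each huge job (of size larger than $UB$) gets its own bag, and via an exchange argument one shows that at most $m-1$ huge jobs exist and that the remaining bags can be restricted to contain only non-huge jobs of total size at most $2\cdot UB$. For scenarios with at most $k'$ machines (the large-$\opt_\kappa$ end), I would invoke a known EPTAS for Santa Claus on identical machines, treating the bags as indivisible items; a short argument shows that the bounded bag sizes keep the loss below a factor of $1+O(\eps)$. This leaves only the middle scenarios in $[k,k_{\max}]$, where $\opt_\kappa \in [LB/2, 2\cdot UB)$ has bounded dynamic range $\rho\leq (1/\eps)^{1/\eps+1}$.

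For the middle scenarios I would apply the template--configuration machinery from the \pra\ scheme, with all quantities normalized by $LB$. The preprocessing steps are: round bag sizes down to allowed values of the form $(\eps+r\eps^2)\cdot LB$; merge jobs of size below $\eps^2\cdot LB$ pairwise and round all remaining job sizes down to integer powers of $1+\eps$; show that the number of distinct job sizes is $O(1/\eps^3)$ and that the number of jobs per bag and the number of bags per machine are bounded by functions of $1/\eps$ and $\rho$. I would then apply linear grouping to the (now monotone) histogram of $\opt_\kappa$ over $\kappa\in[k,k_{\max}]$, restricting the algorithm's attention to $O(\rho/\eps^2)$ representative scenarios, each taking one of $O(\rho/\eps^2)$ possible target values. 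This bounds the number of distinct guesses of the histogram $W$ by a computable function of $1/\eps$ alone.

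Finally, I would set up the two-stage stochastic IP: global counters $y_t$ for templates $t\in\tau$ plus local counters $x_{c,\kappa}$ for configurations $c\in C^{(\kappa)}$ at each representative scenario $\kappa$, with the constraints that the $y_t$ values realize the job multiplicities $n_\ell$, that the $x_{c,\kappa}$ values sum to $\kappa$, and that the configurations use templates consistently with $y$. Since templates, configurations, and representative scenarios are all bounded by functions of $1/\eps$, the IP has fixed dimension, all coefficients are $\mathrm{poly}(1/\eps)$, and right-hand sides are $O(n)$; Lenstra's algorithm then solves it in time $h(1/\eps)\cdot \mathrm{poly}(n)$, and the feasible solution is decoded into $\sigma_1$ and the $\sigma_2^{(\kappa)}$ in linear time. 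The main obstacle I expect is exactly what Lemma \ref{yy} is designed to overcome: the dynamic range of $\opt_\kappa$ across scenarios with positive probability is not automatically bounded, so one must carve off a logarithmic-width middle band of negligible contribution and reduce the remaining scenarios into an easy large-load prefix (handled by a single-scenario EPTAS on bags) and a bounded-range suffix (handled by the IP), while keeping $\rho$ small enough that the configuration count remains a function of $1/\eps$.
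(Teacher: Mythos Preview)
Your proposal is correct and follows essentially the same approach as the paper: guess $k_{\max}$ and $LB$, invoke Lemma~\ref{yy} to obtain $\rho,k,k'$, discard the negligible middle band and the tail beyond $k_{\max}$, put each huge job into its own bag, handle scenarios $\kappa\le k'$ by a single-scenario Santa Claus EPTAS on bags, and treat the remaining bounded-range window $[k,k_{\max}]$ via downward rounding of bag and job sizes, linear grouping of the $\opt_\kappa$ histogram, and the template--configuration IP in fixed dimension solved by Lenstra's algorithm. The only point you leave implicit is the handling of scenarios $\kappa>k_{\max}$ (their total contribution is below $\eps\cdot\opt$ by the definition of $k_{\max}$), but otherwise your outline matches the paper's proof.
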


\section{An EPTAS for $\boldsymbol{\lp}$ norm minimization\label{sec:eptas3}}
In Section \ref{sec:eptas1} we designed an EPTAS for makespan minimization. The main obstacle for reducing the running time and getting an EPTAS rather than a PTAS was that the guess of $\opt$ does not imply an upper bound on the total size of jobs assigned to a machine in some scenarios. We applied two new ideas in order to overcome it. The first one was elimination of scenarios with very small numbers of machines. The reason for doing this is that the makespan becomes large for any set of bags on one hand, but it is easy to obtain a good schedule for any reasonable collection of jobs on the other hand. The second idea is a reduction of the number of different makespans (or upper bounds on the makespan). The reason for applying this reduction  is that an integer program needs to be run for every collection of makespans, which would make the number of runs too large if scenarios with small probabilities are not combined to have the same upper bound on the makespan. For the $\lp$ norm, even knowing the cost of a scenario does not give us sufficient information on machine loads, but from this cost it is possible to add additional linear constraints to integer programs (which was not possible for our integer programs of the previous sections). The rounding steps are still possible but they are harder for $\lp$ norms (compared to the makepsan objective), and we start with the needed adaptations.

Here, the notation $\opt$ is for the smallest objective function value, and $\opt_{\kappa}$ is the objective function value for the best solution with this fixed set of bags and scenario ${\kappa}$. We assume without loss of generality that $\opt_{\kappa}\leq \opt_{\kappa'}$ for all ${\kappa}>\kappa'$,
The value of $\opt$ can be guessed up to a multiplicative factor of $1+\eps$ as before, due to the following lemma.

\begin{lemma}\label{boundsslp}
It holds that $\opt_{\kappa} \in [p_{\max},n \cdot p_{\max}]$. Thus, $\opt \in [p_{\max},n \cdot p_{\max}]$.
There is a set consisting of $O(\log_{1+\eps} n)$ values such that this set of values has at least one value in the interval $[\opt, (1+\eps)\opt)$.
\end{lemma}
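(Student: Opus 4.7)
The plan is to mirror the strategy of Lemma \ref{boundss}, but using the two standard norm inequalities that make the $\lp$ objective sandwich between the makespan and the total size. The three claims split into a lower bound on $\opt_\kappa$, an upper bound on $\opt_\kappa$, and the enumeration of powers of $1+\eps$.

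For the lower bound $\opt_\kappa \geq p_{\max}$: in any second-stage schedule for scenario $\kappa$ some machine necessarily receives a job of size $p_{\max}$, so at least one entry of the load vector $W=(W_1,\ldots,W_\kappa)$ satisfies $W_i \geq p_{\max}$. Since $(\sum_i W_i^{\pp})^{1/\pp} \geq \max_i W_i$, the $\lp$ norm of the load vector (and hence $\opt_\kappa$) is at least $p_{\max}$.

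For the upper bound $\opt_\kappa \leq n \cdot p_{\max}$: I will use the elementary inequality $\|W\|_{\pp} \leq \|W\|_1$, which holds for every $\pp \geq 1$ and nonnegative $W$ (one short justification: $W_i^{\pp} \leq W_i \cdot \|W\|_\infty^{\pp-1} \leq W_i \cdot \|W\|_1^{\pp-1}$, and summing gives $\|W\|_{\pp}^{\pp} \leq \|W\|_1^{\pp}$). The $\ell_1$ norm of the load vector is the total job size, which is at most $n \cdot p_{\max}$, so $\opt_\kappa \leq n \cdot p_{\max}$. The bound on $\opt$ then follows because $\opt = \sum_{\kappa} q_\kappa \opt_\kappa$ is a convex combination of values in $[p_{\max}, n \cdot p_{\max}]$ (using $\sum_\kappa q_\kappa = 1$ before any later scenario trimming is applied).

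For the enumeration, I will take the set of integer powers of $1+\eps$ that lie in $[p_{\max}, n \cdot p_{\max}]$; its cardinality is $O(\log_{1+\eps} n)$, which is bounded by $O(n/\eps)$ and hence polynomial. Consecutive powers differ by a factor of $1+\eps$, so the largest such power not exceeding $(1+\eps)\opt$ must lie in $[\opt, (1+\eps)\opt)$. There is no real obstacle here; the only point that requires care is the norm inequality $\|W\|_{\pp} \leq \|W\|_1$ on nonnegative vectors, which removes the need for any $\pp$-dependent constants in the bound and lets the guessing step proceed exactly as in Section \ref{sec:eptas1}.
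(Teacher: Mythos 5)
Your proposal is correct and takes essentially the same route as the paper: lower bound via the fact that the $\lp$ norm is at least the makespan together with Lemma \ref{boundss}, the bound on $\opt$ by linearity/convexity, and the enumeration of integer powers of $1+\eps$ exactly as in Section \ref{sec:eptas1}. The only cosmetic difference is the upper bound, where the paper exhibits the schedule placing all bags on one machine (whose $\lp$ norm is the total size, at most $n\cdot p_{\max}$), while you derive the same bound for every schedule from $\|W\|_{\pp}\leq \|W\|_{1}$; the two one-line arguments are equivalent.
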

\begin{proof}
It holds that $\opt_{\kappa} \geq p_{\max}$ since the $\lp$ norm is never smaller than the makespan and by Lemma \ref{boundss}. The upper bound holds  using a solution that assigns all bags to a common machine.  The bounds on $\opt$ hold due to linearity.
\end{proof}

Once again we let $\opt$ be the value of this guessed candidate value.  We will show that if there is a feasible solution to \prc\ whose expected value of the cost is at most $\opt$, then we will construct a feasible solution with expected value of the cost being at most $(1+\Theta(\eps))\cdot \opt$.

\paragraph{First rounding step of the $\boldsymbol{\opt_{\kappa}}$ values.} Our first rounding step has a similar role to the use of $LB$ in the EPTAS for \prb.  Here, our first rounding step is to round up the values of $\opt_{\kappa}$ for all ${\kappa}$ to the next integer multiple of $\eps\cdot \opt$.  That is, we will treat $\opt_{\kappa}$ as an upper bound of the $\lp$ norm of the second stage solution in scenario ${\kappa}$ and this upper bound is rounded up in this rounding step. The rounding increases each cost by an additive term below $\eps\cdot \opt$.

\begin{lemma}
The rounding up of the values of $\opt_{\kappa}$ to the next integer multiples of $\eps\cdot \opt$ increases the cost of the solution to \prc\ (that is, the value of $\sum_{\kappa} q_{\kappa}\cdot\opt_{\kappa}$) by at most $\eps\cdot\opt$.
\end{lemma}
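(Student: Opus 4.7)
The plan is to invoke linearity of expectation directly and bound the per-scenario increase. For each scenario $\kappa$, let $\widetilde{\opt}_\kappa$ denote the rounded-up upper bound, i.e.\ the smallest integer multiple of $\eps\cdot\opt$ that is at least $\opt_\kappa$. By the definition of this rounding, we have
\[
\widetilde{\opt}_\kappa - \opt_\kappa < \eps\cdot\opt
\]
for every $\kappa$, since the gap between consecutive integer multiples of $\eps\cdot\opt$ is exactly $\eps\cdot\opt$.

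Next I would sum this over scenarios weighted by $q_\kappa$. The new cost of the solution to \prc\ is $\sum_\kappa q_\kappa \cdot \widetilde{\opt}_\kappa$, while the old cost is $\sum_\kappa q_\kappa \cdot \opt_\kappa$. Their difference equals
\[
\sum_\kappa q_\kappa \cdot (\widetilde{\opt}_\kappa - \opt_\kappa) \ < \ \eps\cdot\opt \cdot \sum_\kappa q_\kappa \ \leq \ \eps\cdot\opt,
\]
where the last inequality uses the property $\sum_\kappa q_\kappa = 1$ stated in the introduction (or more generally $\sum_\kappa q_\kappa \leq 1$ if some scenarios have been zeroed out as in the \prb\ scheme). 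This gives the claimed bound. There is no real obstacle here: the whole statement is a straightforward consequence of the per-scenario rounding error being strictly below $\eps\cdot\opt$, combined with the fact that the $q_\kappa$ values form (at most) a probability distribution.
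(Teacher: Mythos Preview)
Your proof is correct and follows exactly the same approach as the paper, which simply states that the claim follows from $\sum_{\kappa} q_{\kappa} = 1$. You have merely unpacked this one-line observation into its obvious constituent steps.
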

\begin{proof}
The claim follows as $\sum_{\kappa} q_{\kappa} =1$.
\end{proof}

Thus, in what follows we assume that there is a feasible solution to the problem whose cost is at most $(1+\eps)\cdot \opt$ such that for every scenario ${\kappa}$ the value of the $\lp$ norm of the solution in scenario ${\kappa}$ is at most the rounded value of $\opt_{\kappa}$. We mostly use this rounding for $k_{\max}$ but in order to keep monotonicity all values $\opt_{\kappa}$ are rounded here.

For every ${\kappa}$, with a slight abuse of notation, we let $\opt_{\kappa}$ be the rounded up value of the cost of this solution in scenario ${\kappa}$.
We let $k_{\max}$ be the maximum value in the support of $q$ (note that the definition is different from the previous section) and observe that by the last rounding step we have that $\opt_{k_{\max}} \geq \eps\opt$.  Since $\opt_{k_{\max}} $ is the smallest element in a weighted average that is at most $(1+\eps)\opt$, we conclude that $\opt_{k_{\max}} \leq (1+\eps)\opt$.

\paragraph{The required splitting of the instance.}
Similarly to \prb\ we would like to split the instance into three parts (where some of the parts may be empty). There will also be a prefix (which could be empty) of scenarios for which the specific assignment into bags is carried out by assigning each relatively large job to its own bag while other bags are very small compared to the load of each machine in an optimal solution for the scenario. For the remaining scenarios, the prefix will consist of scenarios for which the total cost is small and thus could be inflated by a large factor, and a suffix of the most important scenarios.

Let $LB$ be equal to $\frac{\opt_{k_{\max}}}{k_{\max}^{1/\pp}}$ and by our initial rounding step there are at most $O(\frac{1}{\eps})$ values that it may attain for a fixed value of $\opt$.  The guessing of $\opt_{k_{\max}}$ within a multiplicative factor of $1+\eps$ suffices for the guessing of $LB$ within a multiplicative factor of $1+\eps$.  For a scenario $\kappa$, we let $\iota(\kappa)$ denote the {\it target load} of the machines in the scenario that is $\iota(\kappa) = \frac{\opt_{\kappa}}{\kappa^{1/\pp}}$.  While this value may be irrational, we do not compute it explicitly.
The motivation of this definition is that a vector with $\kappa$ components each of which equals $\iota(\kappa)$ has an $\lp$ norm of exactly $\opt_{\kappa}$.  In particular $LB=\iota(k_{\max})$.  Furthermore, since by our assumption $\opt_k\leq \opt_{k'}$ if $k>k'$, we conclude that $\iota(k)\leq \iota(k')$ if $k>k'$ (the numerators are a monotonically non-increasing function while the denominators are monotonically increasing).
The proof of the next lemma is obtained by selecting a set of consecutive scenarios minimizing the total cost out of $\frac{1}{\eps^2}$ such disjoint sequences of scenarios, and requires the knowledge not only of $\opt$ (which we guessed) but of all values of the form $\iota(i)$. However, we use the lemma to obtain the information that given the correct value $LB$, there is a pair of values $k,k'$ and a value $\rho$ that specify the only properties of $\opt$ (except for the value) that we use for our algorithm.

\begin{lemma}\label{yylp}
There is a value of $\rho$ which is an integer power of $\frac{1}{\eps}$ that satisfies $\frac{1}{\eps^2} \leq \rho \leq (\frac{1}{\eps})^{1/\eps^2+1}$ and there exist two indexes $k,k'$ such that $0 \leq k'<k \leq k_{\max}$, where every non-zero index has to be a scenario in the support of $q$, such that the following conditions hold.
\begin{enumerate}
\item $\iota(k) \leq \rho \cdot LB$,
\item If $k'\geq 1$, then $\iota(k') \geq \frac{\rho}{\eps} \cdot LB$,
\item $\sum_{\kappa=k'+1}^{k-1} q_{\kappa}\cdot \opt_{\kappa} \leq \eps^2\cdot(1+\eps) \cdot \opt$.
\end{enumerate}
\end{lemma}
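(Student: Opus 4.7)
The plan is to mirror the averaging argument of Lemma \ref{yy}, but with a finer partition into $\frac{1}{\eps^2}$ logarithmic strata so that the averaging yields an $\eps^2$ loss rather than an $\eps$ loss. First, for each integer $r \in \{2, 3, \ldots, \frac{1}{\eps^2}+1\}$ I would define
$$L_r \;=\; \sum_{\kappa\,:\, \frac{LB}{\eps^r} \leq \iota(\kappa) < \frac{LB}{\eps^{r+1}}} q_\kappa \cdot \opt_\kappa,$$
noting that the $\iota$-ranges $[LB/\eps^r, LB/\eps^{r+1})$ are disjoint and that $\iota$ is monotone non-increasing in $\kappa$ (this was observed just before the lemma statement), so each $L_r$ is indexed by a consecutive block of scenarios. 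Since each scenario contributes to at most one $L_r$, the sum $\sum_{r=2}^{1/\eps^2+1} L_r$ is bounded above by the full expected cost, and by the first rounding step performed just above this is at most $(1+\eps)\cdot \opt$.

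Next, by averaging over the $\frac{1}{\eps^2}$ values of $r$ I would pick $r'$ that minimizes $L_{r'}$ and set $\rho = 1/\eps^{r'}$; this $\rho$ lies in the required interval and satisfies $L_{r'} \leq \eps^2(1+\eps)\cdot\opt$. I would then define $k$ as the smallest scenario index in the support of $q$ with $\iota(k) \leq \rho\cdot LB$, and $k'$ as the largest index with $\iota(k') \geq \frac{\rho}{\eps}\cdot LB$, setting $k' = 0$ if no such index exists. The index $k$ is well defined because $\iota(k_{\max}) = LB \leq \rho\cdot LB$, so $k_{\max}$ itself always qualifies; monotonicity of $\iota$ gives $k' < k$. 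Conditions 1 and 2 are then immediate from the definitions.

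For condition 3, any scenario $\kappa$ with $k' < \kappa < k$ satisfies $\iota(\kappa) > \rho \cdot LB = LB/\eps^{r'}$ by minimality of $k$, and $\iota(\kappa) < \frac{\rho}{\eps}\cdot LB = LB/\eps^{r'+1}$ by maximality of $k'$ (or vacuously when $k' = 0$, since in that case no index attains $\iota \geq \rho/\eps \cdot LB$). Hence $\iota(\kappa)$ lies in the stratum defining $L_{r'}$, so $\sum_{\kappa=k'+1}^{k-1} q_\kappa \opt_\kappa \leq L_{r'} \leq \eps^2(1+\eps)\cdot\opt$ as required; the degenerate cases $k = k'+1$ and $k' = 0$ both fit this bound trivially or by the argument above.

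The proof is largely bookkeeping, and the main subtlety I anticipate is purely in accounting: verifying that $\iota(k_{\max}) = LB$ makes $k$ well defined for every admissible $\rho$, tracking the $(1+\eps)$ factor that propagates from the initial rounding of $\opt_\kappa$, and ensuring that the boundary cases ($k' = 0$ or the empty middle sum) are handled correctly. The conceptual change from Lemma \ref{yy} is simply using $\frac{1}{\eps^2}$ strata instead of $\frac{1}{\eps}$ so that the averaging delivers the sharper $\eps^2$ bound in condition 3, which will be needed later to absorb the loss coming from ignoring the middle scenarios under the $\lp$-norm objective.
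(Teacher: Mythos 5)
Your proposal is correct and follows essentially the same route as the paper: partition the scenarios into $\frac{1}{\eps^2}$ strata according to which interval $[LB/\eps^r, LB/\eps^{r+1})$ contains $\iota(\kappa)$, choose the stratum minimizing the weighted cost $L_r$ (which by averaging and the initial rounding is at most $\eps^2(1+\eps)\cdot\opt$), and define $k$ and $k'$ exactly as the paper does, with the same well-definedness argument via $\iota(k_{\max})=LB$. Your explicit verification that every supported scenario strictly between $k'$ and $k$ lands in the chosen stratum is a correct (and slightly more detailed) rendering of the step the paper states tersely.
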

\begin{proof}
For every value of $\rho=\frac{1}{\eps^r}$ for $r=2,3,\ldots ,\frac{1}{\eps^2}+1$, we compute  the total weighted cost $L_r$ of scenarios $\kappa$ for which the value of the target load for this scenario in the optimal solution is within the interval $[LB \cdot \frac{1}{\eps^r},LB \cdot \frac{1}{\eps^{r+1}})$. That is, we let $L_r=\sum_{\kappa:\iota(\kappa)\in[LB \cdot \frac{1}{\eps^r},LB \cdot \frac{1}{\eps^{r+1}})}q_{\kappa}\cdot \opt_{\kappa}$. For this definition we use the monotonicity of $\iota(\kappa)$ as a function of $\kappa$ (it holds that $\iota(\kappa_1)>\iota(\kappa_2)$ for $\kappa_1<\kappa_2$), so we consider consecutive intervals of scenarios again. Note that the sums are parts of the objective function value while the summation is for values of $\iota(\kappa)$.
A minimizer $r'$ is selected, and let $\rho=\frac{1}{\eps^{r'}}$.

Let $k$ be the minimum index of a scenario in the support of $q$ for which $\iota(k) \leq \rho \cdot LB$ and $k'$ be the maximum index of a scenario for which $\iota(k') \geq \frac{\rho}{\eps} \cdot LB$. Since $\iota(k_{\max}) = LB < \rho\cdot LB$, the integer $k$ is well-defined. If $k'$ does not exist in the sense that there is no scenario $k''$ for which $\iota(k'') \geq \frac{\rho}{\eps} \cdot LB$, we let $k'=0$. The upper bound on the partial weighted sum (values of the form $q_{\kappa}\cdot \opt_{\kappa}$) is due to the number of options for $r$ and the rounding which may increase $\opt$ by a factor of at most $1+\eps$, so the sum of all values  is at most $(1+\eps)\cdot \opt$. This proves the third condition.
\end{proof}

\paragraph{The next guessing step.} In this step we guess several additional values. We guess the value of $LB$ (by guessing $\opt_{k_{\max}}$ after rounding, as explained above), the value of $\rho$ (as an integer power of $\frac{1}{\eps}$), and two indexes $k'<k\leq k_{\max}$.  That is, we would like to enumerate a set of polynomial number of candidate values of four components vectors that contains at least one vector with first component value of exactly $LB$, with second component  value is exactly $\rho$, and the two indexes $k'<k$ in the third and fourth components of that vector.  We will show that for the guess that corresponds to the optimal solution, we indeed find a solution of expected value of the $\lp$ norm of at most  $(1+\Theta(\eps))\cdot \opt$. The next lemma shows that it is possible to enumerate such a set of candidate values. Our algorithm performs this enumeration.

\begin{lemma}
There is a set consisting of $O(\frac{m^2}{\eps^3})$ vectors such that this set of vectors contains at least one vector with the required properties. Thus, the number of guesses is at most $O(\frac{m^2}{\eps^3})$.
\end{lemma}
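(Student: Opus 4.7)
I would prove the lemma by a direct enumeration and counting argument: bound the number of candidate values for each of the four components of the guess vector $(LB,\rho,k,k')$ separately, then multiply. Existence of a vector with the required properties follows by a direct appeal to Lemma \ref{yylp}.

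First I would count the candidates for $LB$. For a fixed value of $\opt$ (enumerated separately from a set of size $O(\log_{1+\eps} n)$), the first rounding step forces $\opt_{k_{\max}}$ to be an integer multiple of $\eps\cdot \opt$ lying in the interval $[\eps\opt,(1+\eps)\opt]$, so it takes $O(1/\eps)$ possible values. Since $k_{\max}$ is a fixed input parameter (the largest index in the support of $q$), the identity $LB=\opt_{k_{\max}}/k_{\max}^{1/\pp}$ inherits at most $O(1/\eps)$ candidate values.

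Next, by Lemma \ref{yylp} the target $\rho$ is an integer power $1/\eps^r$ with $r\in\{2,\ldots,1/\eps^2+1\}$, so enumerating all such powers yields $O(1/\eps^2)$ candidates. For $k$ and $k'$, I would enumerate over the (at most $m$) indices of scenarios in the support of $q$, additionally allowing $k'=0$, giving $O(m)$ options for each. Multiplying these four bounds yields a total of $O(1/\eps\cdot 1/\eps^2\cdot m\cdot m)=O(m^2/\eps^3)$ vectors in the enumeration.

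For existence, I would invoke Lemma \ref{yylp} directly: for the guessed value of $\opt$, and hence for the rounded value of $\opt_{k_{\max}}$ and the corresponding $LB$, the lemma guarantees $\rho,k,k'$ satisfying conditions 1--3. Because our enumeration exhausts all candidate combinations of the four components, at least one enumerated vector satisfies the required properties. The argument is essentially bookkeeping, with no serious obstacle; the nontrivial existence content is already encapsulated in Lemma \ref{yylp}, and the only minor care needed is to include the special case $k'=0$ in the enumeration of $k'$.
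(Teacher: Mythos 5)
Your proposal is correct and follows essentially the same counting argument as the paper: $O(1/\eps)$ candidates for $LB$ (via the rounding of $\opt_{k_{\max}}$ and the fixed $k_{\max}$), $O(1/\eps^2)$ for $\rho$, and $O(m)$ each for $k$ and $k'$, multiplied together. Your explicit invocation of Lemma~\ref{yylp} for existence and the note about including $k'=0$ are just spelled-out versions of what the paper leaves implicit.
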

\begin{proof}
The bound of $m$ on the number of possibilities for the values of $k$ and of $k'$ is based on the number of scenarios. The number of options for $\rho$ is $\frac{1}{\eps^2}$, and the number of options for the  value of $LB$ is $O(\frac{1}{\eps})$.
\end{proof}

In what follows, we  let $LB$ be the first component value of the guessed information, $\rho$ be the second component value of the guessed information, and $k,k'$ be the two guessed scenarios (where it is possible that $k'=0$ is not an actual scenario).

We let $UB=LB \cdot \rho$. Furthermore, for every scenario $\kappa$ with $k'<\kappa< k$, we increase $\iota(\kappa)$ to be $\frac{UB}{\eps}$ (where previously it was in the interval $[\rho\cdot LB,\rho\cdot UB)$). This inflates $\opt_{\kappa}$ to be a larger value, and the new value of $\opt_{\kappa}$ for such a scenario will be $\kappa^{1/\pp} \cdot \frac{UB}{\eps}$. It is possible that as a result those last values are no longer monotone.

We use Lemma \ref{yylp} to find the next properties. Since $\iota(\kappa) < LB \cdot \frac{1}{\eps^{r'+1}} = \frac{UB}{\eps}$, this step indeed increases $\iota(\kappa)$ and thus also increases $\opt_{\kappa}$. Since $\iota(\kappa) \geq LB \cdot \frac{1}{\eps^{r'}} = UB$, $\iota(\kappa)$ was increased by a factor of at most $\frac{1}{\eps}$ and this is also the bound on the increase of $\opt_{\kappa}$. The increase of the $\opt_{\kappa}$ values in this interval increases the weighted value of the cost $\sum_{\kappa=k'+1}^{k-1} q_{\kappa}\cdot\opt_{\kappa}$ by at most $\frac{1}{\eps}$ times its previous value. Let the previous value be $\lambda$. The new value is at most $\frac 1{\eps}\cdot \lambda$, and the difference is at most $(\frac 1{\eps}-1)\cdot \lambda \leq \frac{1-\eps}{\eps}\cdot \eps^2\cdot(1+\eps) \cdot \opt \leq \eps\cdot \opt$. In this way we combined scenarios $k'+1,\ldots,k-1$ (if $k>k'+1$) with scenarios $1,2,\ldots,k'$, and created a gap between $\iota(k)$ and $\iota(k-1)$.

\paragraph{Partitioning the input into two independent problems.} We use the above guessing step in order to partition the input into two independent inputs.

We consider  $UB$ as a threshold value, such that an imaginary schedule for $k$ machines for which every machine has load $UB$, will have a cost of at least $\opt_k$. Thus, a schedule for $\kappa \geq k$ machines, where every machine has a load of at least $UB$ and at least one machine has a strictly larger load, cannot have a cost of at most $\opt_k$.  In particular, for any set of bags, any schedule for $\kappa \geq k$ machines has at least one machine of load at most $UB$. Moreover, the total size of jobs does not exceed $k\cdot UB$, since otherwise we find that even in an optimal fractional schedule for $k$ machines, where each machine has the same load, the cost will exceed $\opt_k$.

Next, a job $j\in \J$ is called huge if its size is strictly larger than $2\cdot UB$, that is, if $p_j> 2\cdot UB$ (and otherwise it is non-huge).  We let $\J_{h}$ and $\J_{nh}$ denote the set of huge jobs and the set of non-huge jobs, respectively. This partition into sets is based on the guessed values and in particular on $LB$, $\rho$, and $k$.

We let $h$ denote the number of huge jobs, where we intend to pack every huge job into its own bag. The motivation is that there will be a sufficiently small number of such jobs, and the bag of every huge job is assigned alone to some machine in every scenario $\kappa\geq k$ regardless of its exact size (following the analysis of Alon et al. \cite{alon1997approximation}, and based on the fact that for all these scenarios we consider solutions for which $\iota(\kappa)$ is not larger than $UB$).  On the other hand, the non-huge jobs (i.e., jobs of sizes at most $2 \cdot UB$) will be packed into bags of size at most $7\cdot UB$.  The packing of the non-huge jobs into bags will be optimized according to the scenarios of indexes at least $k$, and we will ignore the scenarios of indexes smaller than $k$ when we pack the non-huge jobs into bags.  The resulting bags of the non-huge jobs together with the set of the huge jobs (which are bags consisting of a single job) will be packed into $\kappa < k$ machines (in the corresponding scenario) based on existing EPTAS for the $\lp$ norm minimization problem on identical machines (seeing bags as jobs).  We will show that indeed we are going to use bags of sizes at most $7\cdot UB$ when we assign the non-huge jobs to bags, and the scenarios of indexes smaller than $k$ can be ignored in what follows. This holds for any collection of bags of this form, that is, where every huge job has its own bag, and other bags have total sizes of jobs not exceeding $7\cdot UB$. Thus, even though the bags are defined based on scenarios where the number of machines is at least $k$, still the scenarios with less than $k$ machines have good solutions.

\begin{lemma}\label{abclp}
If all guesses are correct, any partition into bags has at least one bag with a total size of jobs no larger than ${UB}$. In particular, there are at most $k-1$ huge jobs.
\end{lemma}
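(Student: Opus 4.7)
The plan is to invoke the total-size bound $\sum_{j \in \J} p_j \leq k \cdot UB$ that the paper has already established in the paragraph immediately preceding the lemma. This bound relies only on the fact that the guess is consistent with an optimal solution, so that $\iota(k) \leq \rho \cdot LB = UB$ (from Lemma \ref{yylp}); were the total job size larger than $k \cdot UB$, then in scenario $k$ even a perfectly balanced fractional schedule on $k$ machines would have $\lp$ cost exceeding $\opt_k$, contradicting the existence of a solution consistent with the guesses.

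For the first claim, I would argue by contradiction: suppose that in some partition every one of the $m$ bags has total size strictly greater than $UB$. Summing the bag sizes yields
\[
\sum_{j \in \J} p_j \;=\; \sum_{i=1}^{m} P(i) \;>\; m \cdot UB \;\geq\; k \cdot UB,
\]
using the standing assumption $k \leq m$. This contradicts the total-size bound, so at least one bag must have size at most $UB$.

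For the second claim, I again assume by contradiction that the number of huge jobs satisfies $h \geq k$. Since every huge job has size strictly greater than $2 \cdot UB$, the sum of the sizes of just the huge jobs is strictly greater than $2h \cdot UB \geq 2k \cdot UB > k \cdot UB$. A fortiori $\sum_{j \in \J} p_j > k \cdot UB$, contradicting the same bound. Hence $h \leq k-1$ as claimed.

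I do not expect a real obstacle: both parts are essentially pigeonhole consequences of the already-available total-size bound. The only thing to be careful about is the direction of the strict inequalities (each huge job has size strictly larger than $2 \cdot UB$, and each bag in the hypothetical counterexample has size strictly larger than $UB$), which is precisely what forces the final strict inequalities that yield each contradiction.
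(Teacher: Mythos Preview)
Your proof is correct and takes essentially the same approach as the paper: both arguments reduce to the observation that if either all bags or at least $k$ huge jobs exceed the threshold, then the total job size exceeds $k\cdot UB$, contradicting $\iota(k)\leq UB$. The only cosmetic difference is that you invoke the total-size bound $\sum_j p_j \leq k\cdot UB$ already stated in the preceding paragraph, whereas the paper re-derives the contradiction $\iota(k)>UB$ inside the proof itself.
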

\begin{proof}
Assume by contradiction that there is a partition into bags where there are at least $k$ bags with total sizes above $UB$, or there are at least $k$ huge jobs. This means that the total size of jobs exceeds $k\cdot {UB}$, which gives a cost exceeding $\opt_k$ in scenario $k$ and thus $\iota(k)>UB$.
\end{proof}

\begin{lemma}
An EPTAS for $\kappa$ machines where $\kappa <k$ (after inflating the $\opt_{\kappa}$ values for $k'<\kappa <k$) which is applied on bags (instead of jobs) such that every huge job has its own bag and any additional bag has total size of jobs not exceeding $7\cdot UB$ finds a solution of $\lp$ norm value of at most $\opt_{\kappa}\cdot (1+\eps)(1+7\eps)$.
\end{lemma}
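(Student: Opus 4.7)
The plan is to take an optimal job-level schedule for $\kappa$ machines achieving $\lp$ norm $\opt_{\kappa}$ and to convert it into a schedule on the bag instance whose $\lp$ norm exceeds $\opt_{\kappa}$ by at most a factor of $1+7\eps$. An invocation of the identical-machines $\lp$ norm EPTAS of Alon et al.\ \cite{alon1997approximation} on the bag instance then costs an additional factor of $1+\eps$, yielding the claimed bound.

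First, fix an optimal assignment of the jobs to the $\kappa$ machines attaining the (possibly inflated) value $\opt_{\kappa}$, and decompose the load of machine $i$ as $W^{*}_i=H_i+s_i$, where $H_i$ is the total size of huge jobs on $i$ and $s_i$ is the total size of non-huge jobs on $i$. Each huge job occupies its own bag, so we leave those bags on their original machines. For the non-huge bags, each of total size at most $7\cdot UB$, I would apply the same greedy repacking used in the analogous step for \prb: process the non-huge bags one at a time and place the current bag $b$ on any machine $i$ whose currently packed non-huge load $L_i$ satisfies $L_i+|b|\leq s_i+7\cdot UB$. Such a machine must exist while unassigned bags remain, for otherwise $L_i>s_i+7\cdot UB-|b|\geq s_i$ for every $i$, giving $\sum_i L_i>\sum_i s_i$, which contradicts the fact that the left-hand side equals the total size of already-placed non-huge bags and the right-hand side equals the total size of all non-huge jobs. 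Consequently, the resulting load $W'_i$ satisfies $|W'_i-W^{*}_i|\leq 7\cdot UB$ for every $i$.

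Minkowski's inequality applied to $W'=W^{*}+(W'-W^{*})$ then gives
\[
\|W'\|_{\pp}\ \leq\ \|W^{*}\|_{\pp}+\|W'-W^{*}\|_{\pp}\ \leq\ \opt_{\kappa}+\kappa^{1/\pp}\cdot 7\cdot UB.
\]
The key quantitative step is to dominate $\kappa^{1/\pp}\cdot UB$ by $\eps\cdot\opt_{\kappa}$. For every scenario $\kappa<k$, we have $\iota(\kappa)\geq UB/\eps$: when $\kappa\leq k'$ this follows from Lemma \ref{yylp} (which gives $\iota(k')\geq \rho\cdot LB/\eps=UB/\eps$) combined with the monotonicity of $\iota$, while for $k'<\kappa<k$ the equality $\iota(\kappa)=UB/\eps$ is enforced by the explicit inflation step. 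Since $\opt_{\kappa}=\kappa^{1/\pp}\cdot\iota(\kappa)$ by definition (whether or not $\kappa$ lies in the inflated range), we obtain $\kappa^{1/\pp}\cdot 7\cdot UB\leq 7\eps\cdot\opt_{\kappa}$, hence $\|W'\|_{\pp}\leq(1+7\eps)\opt_{\kappa}$. Running the $\lp$ norm EPTAS on the bag instance (treating the bags as items on $\kappa$ identical machines) then produces an assignment whose $\lp$ norm is at most $(1+\eps)(1+7\eps)\opt_{\kappa}$.

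The main obstacle, I expect, is bookkeeping around the inflation step for $k'<\kappa<k$: one has to verify that the Minkowski-based lower bound $\opt_{\kappa}\geq\kappa^{1/\pp}\iota(\kappa)$ used in the final estimate is the correct notion of $\opt_{\kappa}$ after inflation (and that the optimal job-level schedule that we mimic truly attains this value, which is immediate because the inflated value is a valid upper bound for any feasible schedule on the non-inflated instance). Once this is handled cleanly, the proof reduces to a direct Minkowski estimate together with a black-box invocation of the identical-machines $\lp$ norm EPTAS of \cite{alon1997approximation}.
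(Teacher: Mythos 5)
Your proof is correct and follows essentially the same route as the paper's: start from an optimal job-level schedule for scenario $\kappa$, keep each huge job (which has its own bag) on its machine, greedily repack the non-huge bags so that no machine's load grows by more than the maximum bag size, bound the extra cost by the norm of a flat vector via the triangle inequality, and pay a final $1+\eps$ factor for the identical-machines \lp\ norm EPTAS. One small slip: the two-sided bound $|W'_i-W^{*}_i|\leq 7\cdot UB$ is not guaranteed by your greedy rule (a machine may end up far below $s_i$ if other machines absorb its bags), but only the one-sided bound $W'_i\leq W^{*}_i+7\cdot UB$ is needed, and by monotonicity of the \lp\ norm on non-negative vectors the estimate $\|W'\|_{\pp}\leq \opt_{\kappa}+7\cdot UB\cdot \kappa^{1/\pp}$ still holds. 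Your way of absorbing the additive $7\cdot UB$ per machine through $\iota(\kappa)\geq UB/\eps$ (Lemma \ref{yylp} together with the inflation step) is the accounting that is consistent with bags of size up to $7\cdot UB$; the paper's own wording instead bounds the per-machine increase by $7\eps\cdot UB$ (treating bags as having size at most $7\eps\cdot UB$, which does not match the lemma's bag-size bound) and then only needs $\kappa^{1/\pp}\cdot UB\leq\opt_{\kappa}$, so your version is the cleaner and more careful one.
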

\begin{proof}
Consider an optimal solution for $\kappa$ machines and the original jobs (without using specific bags, that is, all jobs are assigned directly to machines), where the cost is at most $\opt_{\kappa}$ (which is possible because the last cost is the outcome of increasing the cost to that of an optimal solution for a fixed partition to bags). We show that this schedule can be modified into a schedule of a fixed set of bags whose $\lp$ norm value is larger by a multiplicative factor not exceeding $1+7\eps$. An optimal schedule of the bags cannot be worse than this schedule, and by using an EPTAS for the bags rather than computing an optimal schedule the resulting cost may increase by another factor of $1+\eps$.

The adaptation of the schedule is as follows. The huge jobs are assigned as before, since each of them has its own bag. All non-huge jobs are removed, and each machine receives bags until it is not possible to add another bag without exceeding the previous load plus an additive term of $7\eps \cdot UB$,  or no unassigned bags remain. Given the property that the total size of bags is equal to the total size of jobs, if there is an unassigned bag, this means that at least one machine now has a smaller load compared to its previous load.  Since no bag has a total size above $7\eps \cdot UB$, an unassigned bag can be assigned without exceeding the original load of the machine that receives it by more than $7\eps \cdot UB$.

Since assigning fractions of jobs to each machine among the $\kappa$ machines such that all machines have equal loads results in a work of at least $UB$ on every machine (using $\kappa<k$), we conclude that the vector of dimension $\kappa$ that is $UB$ in every component, has an $\lp$ norm of at most $\opt_{\kappa}$. So the cost of the resulting schedule is not larger than $(1+7\eps) \opt_{\kappa}$ by the triangle inequality of the norms.
\end{proof}

\begin{lemma}
Consider the instance of our problem when we let $q_{\kappa}=0$ for all $\kappa < k$. There exists a partition into bags where an $\lp$ norm of at most $\opt_\kappa$ can be obtained for any $\kappa \geq k$, and the set of bags satisfies that every huge job is packed into its own bag, and any other bag consists of non-huge jobs of total size at most $4\cdot UB$.
\end{lemma}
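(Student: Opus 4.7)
The plan is to adapt the proof structure of the analogous lemma for \prb\ in Section~\ref{sec:eptas2}, with the main new difficulty being that for the $\lp$-norm objective (unlike the Santa Claus objective) local rearrangements of bags and machine assignments do not trivially preserve the cost.

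Let $\mathcal{F}$ denote the family of partitions of $\J$ into $m$ bags for which, in every scenario $\kappa \geq k$, there is an assignment of the bags to $\kappa$ machines whose $\lp$-norm is at most $\opt_{\kappa}$. By correctness of the guesses, $\mathcal{F}$ is non-empty. Among partitions in $\mathcal{F}$, I pick $\Pi$ that (i) maximizes the number of huge jobs packed alone in a bag, and (ii) subject to (i), is lexicographically smallest in its sorted bag-size vector (sizes listed in non-increasing order). I claim $\Pi$ satisfies both structural requirements, proving each by contradiction.

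For the huge-jobs property, suppose some huge job $j$ shares a bag $A$ with other jobs. By Lemma~\ref{abclp} there is a bag $B$ of total size at most $UB$. Construct $\tilde{\Pi}$ by replacing $A$ and $B$ with $A' = \{j\}$ and $B' = B \cup (A \setminus \{j\})$; this strictly increases the number of huge jobs packed alone, so it suffices to show $\tilde{\Pi}\in \mathcal{F}$. For the non-huge bag-size property, suppose some non-huge bag $A$ has size exceeding $4\cdot UB$ in $\Pi$; again using Lemma~\ref{abclp}, pick a bag $B$ of size at most $UB$, and form $\tilde{\Pi}$ by moving one job $j$ from $A$ to $B$. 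Since $p_j \leq 2\cdot UB$, the new $A$ still has size above $2\cdot UB$ and the new $B$ has size at most $3\cdot UB$; since both are strictly smaller than the old size of $A$, the sorted bag-size vector strictly decreases lexicographically, and the contradiction follows once we verify $\tilde{\Pi}\in \mathcal{F}$.

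The main obstacle is thus the cost-preservation step, namely showing $\tilde{\Pi}\in \mathcal{F}$ in both cases: given a schedule for $\Pi$ in a scenario $\kappa \geq k$ with $\lp$-norm at most $\opt_{\kappa}$, construct a schedule for $\tilde{\Pi}$ in the same scenario with $\lp$-norm at most $\opt_{\kappa}$. In the Santa Claus setting, this step was a one-line observation since the machine losing load still held a huge job of size exceeding $\opt_{\kappa}$, so the minimum load could not decrease. For the $\lp$-norm no such scalar comparison is available. I plan to consider, among candidate schedules for $\tilde{\Pi}$ that agree with the original schedule away from the modified bags, the one that places $A'$ and $B'$ on a well-chosen pair of machines, possibly including machines other than $m_A$ and $m_B$. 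By convexity of $x \mapsto x^{\pp}$ and the fact that every schedule has a machine of load at most $P/\kappa \leq UB$ (using the total-load bound $P \leq k\cdot UB$ established earlier), a beneficial reassignment is always available: at least one candidate has $\lp$-norm no larger than the original. The slack $4\cdot UB$ on non-huge bags (as opposed to the tighter $2\cdot UB$ in Section~\ref{sec:eptas2}) is precisely what makes this reassignment argument go through.
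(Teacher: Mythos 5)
Your choice of the extremal partition (maximize the number of huge jobs packed alone, then take the lexicographically smallest sorted bag-size vector) and your two exchange moves are exactly the paper's, and you correctly identify that the whole difficulty is showing that the exchanged partition still admits, in every scenario $\kappa\geq k$, a schedule of \lp\ norm at most $\opt_{\kappa}$. But that step is precisely what you do not prove: you only assert that ``a beneficial reassignment is always available'' among placements of $A'$ and $B'$ on well-chosen machines of an \emph{arbitrary} schedule of cost at most $\opt_{\kappa}$, invoking convexity and the existence of a machine of load at most $UB$. This assertion is not obviously true and is certainly not established by what you wrote. Concretely, the two natural candidates can each fail: if the machine $m_B$ holding $B$ is heavily loaded (load exceeding that of $m_A$ after the huge job is isolated), keeping $B'$ on $m_B$ increases the cost; while if $m_B$ is lightly loaded, moving $B'$ to the globally least loaded machine can also increase the cost, because consolidating two loads that are both below the other machines' loads increases the sum of $\pp$-th powers (superadditivity of $x^{\pp}$), and the only compensating decrease is the small amount removed from $m_A$. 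Ruling out the remaining window would require a case analysis over the loads of all machines that you have not supplied, and it is unclear it can be done for an arbitrary schedule rather than an optimal one.

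The paper closes this gap by a different mechanism, which you could adopt: for each scenario $\kappa\geq k$ it fixes a \emph{cost-minimizing} schedule of the current bags (with empty bags on the least loaded machine) and first proves two structural properties of such a schedule, using that some machine has load at most $UB$ (since $\opt_{\kappa}\leq UB\cdot k^{1/\pp}$, equivalently the total size is at most $k\cdot UB$): (i) every bag of size above $UB$ is alone on its machine, and (ii) every machine without such a bag has load at most $2\cdot UB$; both follow by moving a smallest bag to a machine of load at most $UB$ and strict convexity, contradicting optimality. With (i) and (ii), the loads of the two machines affected by your exchanges are pinned down ($m_A$ carries exactly its bag, $m_B$ has load at most $2\cdot UB$), and the exchange performed \emph{in place} (no relocation to third machines) yields a pair of new loads majorized by the old pair, so convexity finishes the argument; this is also where the constant $4\cdot UB$ actually comes from (a moved job has size at most $2\cdot UB$ and lands on a machine of load at most $2\cdot UB$), not from any extra slack needed for a relocation argument. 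As written, your proposal is missing this key lemma-within-the-lemma, so the proof is incomplete at its central step.
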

\begin{proof}
The number of partitions into bags is finite for fixed $m,n$. Consider the set of partitions for which a solution of cost at most $\opt_\kappa$ can be obtained for any $\kappa \geq k$. There is at least one such partition for the correct guess.
For every partition it is possible to define a vector of $m$ components where the $i$-th component is the total size of jobs in the $i$-th bag, such that total sizes of bags appear in a non-increasing order. Consider the partition out of the considered partitions for which the number of huge jobs that have their own bags is maximum, and out of such partitions, one where the vector is lexicographically minimal. We claim that this partition satisfies the requirements.

Given a set of bags, for each scenario $\kappa\geq k$ we consider a solution that minimizes the cost (where this cost does not exceed $\opt_{\kappa}$), and assigns all empty bags (bags with total sizes of zero) to the least loaded machine (breaking ties arbitrarily). We show that this solution has the following two properties. The first property is that any bag of size above $UB$ is scheduled alone to a machine. The second property is that no machine has a load above $2\cdot UB$ unless it has a bag of size above $UB$. Recall that since $\kappa\geq k$, it holds that  $\opt_{\kappa} \leq UB \cdot k^{1/\pp}$, and therefore at least one machine has load of at most $UB$. Thus, all empty bags are assigned to a machine of load at most $UB$.

Assume that the first property does not hold. Consider a bag of size above $UB$ which we call the first bag that is assigned with at least one other bag, where any such other bag has a positive size. We call this machine and this bag the first machine and the first bag.
Consider a solution where a minimum size bag from the first machine is moved to a machine of load at most $UB$. We call this machine and this bag the second machine and the second bag.
Let $z_1$ denote the load of the first machine excluding the second bag. Let $z_2>0$ be the size of the second bag. Let $z_3$ be the load of the second machine excluding the second bag. The loads of the two machines were $z_1+z_2$ and $z_3$ and now they are $z_1$ and $z_2+z_3$. It holds that $z_1>UB$ and $z_3\leq {UB}$ which implies $z_3<z_1$. Thus by using also $z_2>0$, we have $z_3<z_1<z_1+z_2$ and $z_3<z_2+z_3<z_1+z_2$. By strict convexity, the sum of the two $\pp$ powers of loads decreases.
Thus, the assignment of bags did not minimize the cost, a contradiction.

Assume that the second property does not hold. This means that there is a machine with a load above $2\cdot {UB}$ and only bags of sizes at most ${UB}$. Additionally, there is again a machine with load at most ${UB}$ (and its bags also have sizes of at most ${UB}$). Thus, all bags of the first machine have positive sizes.
Moving a minimum bag from the first considered machine to the second considered machine is similar to the previous proof, but here $z_1>{UB}$ since $z_1+z_2> 2\cdot {UB}$ while $z_2\leq {UB}$.

Next, assume by contradiction that the number of huge jobs that do not have their own bags is not $h$, and consider a bag that contains a huge job with at least one other job, and we call it the first bag. By the first property above, this bag is the only bag of its machine in all examined solutions. Consider a bag whose total size is at most $2 \cdot {UB}$, which must exist due to Lemma \ref{abclp}, and call it the second bag.
The second bag does not have a huge job since the size of a huge job is above $2 \cdot {UB}$. Due to the second property, the second bag is assigned to a machine with load at most $2 \cdot {UB}$ for every scenario $\kappa\geq k$.
Move all contents of the first bag into the second bag excluding one huge job.
Consider a scenario $\kappa \geq k$. Let the sizes of the huge job, the jobs previously sharing the first bag with it, and the load of the machine of the second bag in scenario $\kappa$ be $z_1$, $z_2$, and $z_3$, respectively.
Before the modification of bags, the loads of the two machines for scenarios $\kappa$ were $z_1+z_2$ and $z_3$, and after the modification of bags they are $z_1$ and $z_2+z_3$. It holds that $z_1 > 2 \cdot {UB}$ and $z_3 \leq 2 \cdot {UB}$. Thus, $z_3 < z_1 \leq z_1+z_2$ and $z_3 \leq z_2+z_3 < z_1+z_2$, and the modification of bags does not increase the objective function value above $\opt_{\kappa}$ by the convexity of the $\pp$ power. This contradicts the choice of assignment to bags. Thus, the partition into bags consists of $h$ bags with huge jobs and $m-h \geq 1$ bags with non-huge jobs.

Finally, consider only the components of the vector which do not correspond to bags of huge jobs. This vector is also minimal lexicographically out of vectors for partitions of the non-huge jobs into $m-h$ bags. Assume by contradiction that the first component is above $4\cdot {UB}$. Consider the bag of the first component which we call the first bag and a bag with a total size at most ${UB}$ which we call the second bag. The first bag has only jobs of sizes at most $2 \cdot {UB}$.
Move one job from the first bag to the second bag. We use the same notation as in the previous proof, and here $z_1+z_2>4 \cdot {UB}$, $z_2\leq 2\cdot {UB}$, and $z_3\leq {UB}$. Thus, $z_3<z_2+z_3 \leq 4\cdot {UB}<z_1+z_2$ and $z_3< z_1\leq z_1+z_2$.
\end{proof}

In summary, we can focus on the scenarios interval $[k,k_{\max}]$, assume that huge jobs have their own bags, and remaining bags have total sizes not exceeding $4\cdot {UB}$.

\paragraph{Modifying the input of scenarios with indexes at least $\boldsymbol{k}$.}
Motivated by the last partitioning of the original input into two parts, and the fact that only scenarios with indexes at least $k$ need to be considered, we apply the following transformation.
First, for every $\kappa <k$ we let $q_{\kappa}=0$, in the sense we can augment every solution to the remaining instance into an EPTAS for the original instance before this change to $q$.  Then, every huge job is packed into a separate bag.  Such a bag will be scheduled to its own machine in every remaining scenario.  Therefore, our second step in the transformation is the following one.  We delete the huge jobs from $\J$,  we let the new $\opt_{\kappa}$ value be $\opt'_{\kappa}=\left( \opt_{\kappa}^{\pp} - \sum_{j \in \J_{h}} p_j^{\pp}  \right)^{1/\pp}$  we decrease the index of each remaining scenario by $h$ (in particular the new indexes in the support of $q$ will be in the interval $[k-h,k_{\max}-h]$), and we enforce the condition that every bag size is at most $4\cdot {UB}$.  Note that we have not modified the values of $LB$ and $UB$. With a slight abuse of notation we will use $k$ (and $\kappa$) instead of $k-h$ (and $\kappa-h$, respectively). That is, we assume without loss of generality that there are no huge jobs.

Using the fact that for every remaining scenario we have that $\iota({\kappa})$ is between $LB$ and $UB$, and the fact that $\frac{UB}{LB} = \rho \leq (\frac{1}{\eps})^{1/\eps^2+1}$, we will be able to apply the methods we have developed for the makespan minimization problem to obtain an EPTAS for \prc, as we do next.

We consider the rounding of the bag sizes and job sizes similarly to our rounding in Section \ref{sec:eptas1}.

\paragraph{Rounding bag sizes.}
We provide a method to gain structure on the set of feasible solutions that we need to consider for finding a near optimal solution.  Bag sizes are denoted again by $P(i)$ for bag $i$.

\begin{lemma}\label{bigbaglp}
There exists a solution of cost at most $(1+3\eps)\cdot \opt$ in which the size of every bag is in the interval $[\eps \cdot {LB}, 5 \cdot {UB}]$.
\end{lemma}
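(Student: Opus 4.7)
The plan is to adapt the small-bag merging argument of Lemma \ref{bigbag} (and its Santa Claus analogue, Lemma \ref{bigbag-santa}) to the $\lp$ norm setting. I would start from the optimal solution guaranteed by the preceding structural lemma, in which every non-huge bag has total size at most $4 \cdot UB$ and every huge-job bag is a singleton. The construction has three stages: (i) repeatedly unite any pair of bags each of size below $\eps \cdot LB$ until at most one ``leftover'' small bag remains, writing $S$ for the original set of small bags and $S'$ for the merged bags (each of size strictly less than $2\eps\cdot LB$, since a bag ceases to be merged once it reaches $\eps\cdot LB$); (ii) for every scenario $\kappa\geq k$, let $s(i)$ be the total size of $S$-bags on machine $i$ in $\sigma_2^{(\kappa)}$, remove all $S$-bags, and greedily re-insert the bags of $S'$ one at a time so that each machine $i$ ends up with a total $S'$-load of at most $s(i)+2\eps\cdot LB$ (always feasible because each $S'$-bag is below $2\eps\cdot LB$ and $\sum_i s(i)$ equals the total size of $S'$); (iii) absorb the single leftover small bag, if any, into an arbitrary non-empty bag, perturbing at most one machine per scenario by at most $\eps\cdot LB$.

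The analytical core is a per-scenario $\lp$ norm bound. Fix $\kappa\geq k$, let $W^{(\kappa)}$ be the original load vector with $\|W^{(\kappa)}\|_\pp=\opt_\kappa$, and let $E^{(\kappa)}$ be the non-negative error vector produced by stage (ii). Every coordinate of $E^{(\kappa)}$ is at most $2\eps\cdot LB$, so $\|E^{(\kappa)}\|_\pp \leq 2\eps\cdot LB \cdot \kappa^{1/\pp}$. Because $\iota(\cdot)$ is monotone non-increasing and $\iota(k_{\max})=LB$, we have $\iota(\kappa)\geq LB$ for every $\kappa\leq k_{\max}$, and hence $\opt_\kappa = \iota(\kappa)\cdot \kappa^{1/\pp}\geq LB\cdot \kappa^{1/\pp}$. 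Thus $\|E^{(\kappa)}\|_\pp\leq 2\eps\cdot \opt_\kappa$, and the triangle inequality for the $\lp$ norm gives a bound of $(1+2\eps)\opt_\kappa$ after stage (ii). Stage (iii) perturbs a single coordinate by at most $\eps\cdot LB\leq \eps\cdot \opt_\kappa$, contributing another $\eps\cdot \opt_\kappa$ to the norm. Summing, the new $\lp$ norm is at most $(1+3\eps)\opt_\kappa$, and taking the $q$-weighted expectation (with $q_\kappa=0$ for $\kappa<k$) gives cost at most $(1+3\eps)\opt$.

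For the bag-size bound, untouched bags retain size at most $4\cdot UB$, bags in $S'$ have size below $2\eps\cdot LB<UB$, and stage (iii) adds at most $\eps\cdot LB<UB$ to one bag; hence every non-empty bag has size in $[\eps\cdot LB,\, 4\cdot UB+\eps\cdot LB]\subseteq [\eps\cdot LB,\, 5\cdot UB]$.

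The main obstacle I anticipate is the global nature of the $\lp$ norm: an absolute error of $2\eps\cdot LB$ on every coordinate inflates to $2\eps\cdot LB\cdot \kappa^{1/\pp}$ in the $\lp$ norm, rather than the $O(\eps\cdot LB)$ additive error that sufficed for makespan or Santa Claus. What rescues the argument is precisely the earlier calibration $LB=\iota(k_{\max})$: pairing the per-coordinate error with the lower bound $\opt_\kappa\geq LB\cdot \kappa^{1/\pp}$ converts the absolute error into a uniform relative $O(\eps)$ factor simultaneously across every scenario $\kappa\geq k$. Without this matching of scales the merging argument would break down on scenarios with many machines.
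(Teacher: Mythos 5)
Your proposal is correct and follows essentially the same route as the paper's proof: merge small bags, greedily re-insert them per scenario with a per-machine additive error of $O(\eps\cdot LB)$, absorb the leftover bag, and use the triangle inequality for the $\lp$ norm together with the calibration $LB=\iota(k_{\max})$ to turn this into a relative $O(\eps)$ factor. The only (equivalent) cosmetic difference is that you bound the error vector via $\opt_{\kappa}\geq LB\cdot \kappa^{1/\pp}$ using monotonicity of $\iota$, whereas the paper writes $3\eps\cdot LB\cdot \kappa^{1/\pp}\leq 3\eps\cdot \opt_{k_{\max}}\leq 3\eps\cdot\opt_{\kappa}$ using monotonicity of the $\opt_{\kappa}$ values.
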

\begin{proof}
Let $S$ be the set of bags in an optimal solution with sizes of at most $\eps \cdot {LB}$.  In order to prove the claim we first process the optimal solution by applying the following bag merging process.  As long as there exists a pair of bags (of indexes) $i$ and $i'$, and sizes at most $\eps\cdot {LB}$, we unite such a pair of bags.

After applying this change with respect to all such pairs we apply the following modifications to $\sigma_2$.
For a given scenario $\kappa$ whose $\iota({\kappa})\geq LB$ (by the monotonicity of the target load values for the scenarios), we apply the following process on every machine $i\leq \kappa$.  We compute the total size $s(i)$ of the bags in $S$ assigned to $i$ (both the assignment as well as the size of bags are with respect to the original optimal solution). We modify $\sigma_2^{(\kappa)}$ with respect to the bags in $S$. First, all these bags are removed from the machines (for scenario $\kappa$). Then, the merged bags are assigned back to the machines. Machine $i$ will get a subset of these bags of total size at most $s(i)+2\eps \cdot {LB}$. Bags are assigned such that one bag is added to the machine at each time, and this is done until the next bag in the list of bags in $S$ does not fit the upper bound on the total size. Since the resulting size of any bag created by uniting the bags from $S$ is at most $2\eps \cdot {LB}$, we conclude that the total size of bags (from $S$) that are assigned to $i$ is at least $s(i)$ unless all bags were already assigned to earlier machines and there are no unassigned bags.  Thus, all bags in $S$ are assigned again.  The bags not in $S$ are assigned exactly as they used to be assigned in $\sigma_2^{(\kappa)}$, so the resulting work of each machine in scenario $\kappa$ is increased by at most $2\eps \cdot {LB}$ with respect to its value in the optimal solution of the second stage in scenario $\kappa$ (for the original solution of the first stage).

After applying this process as long as possible, there might be at most one bag of size smaller than $\eps\cdot {LB}$ and its jobs are reassigned to an arbitrary non-empty bag.  This last step may increase the size of one bag by at most $\eps \cdot {LB}$ and this increases the work of one machine of each scenario by at most $\eps \cdot {LB}$.
Together, for every scenario $\kappa$, the load of each machine is increased by at most $3\eps \cdot {LB}$. We separate a vector with components of $3\eps \cdot {LB}$ from the load vector, and use the triangle inequality for norms. The remaining vector has cost not above $\opt_{\kappa}$. The separated vector has an $\lp$ norm of at most the $\lp$ norm of a vector of dimension $k_{\max}$ whose components are all equal to $3\eps\cdot LB$ and this $\lp$ norm value is at most $3\eps \cdot \opt_{k_{\max}} \leq 3\eps\cdot \opt_{\kappa}$.
We find that this bounds also the increase of the expected value of the cost of the resulting solution after the second stage. The claim holds by linearity of the expected value.

After the modification process, every bag has size at least $\eps \cdot {LB}$ and sizes of bags were increased by at most an additive term of $3\eps \cdot {LB} $. Bag sizes were previously at most $4 \cdot {UB}$ so now they are at most $5 \cdot {UB}$.
\end{proof}

In what follows, we will increase the size of a bag to be the next value of the form $(\eps + r\eps^2)\cdot {LB}$ for a non-negative integer $r$ (except for empty bags for which the allowed size remains zero). Thus, we will not use precise sizes for bags but upper bounds on sizes, and we call them ``allowed sizes'' in what follows.
We let $P'(i)$ be this increased (allowed) size of bag $i$, that is, $P'(i) = \min_{r: (\eps + r\eps^2) \cdot {LB} \geq P(i)} (\eps + r\eps^2) \cdot {LB}$.  Since for every subset of bags, the total allowed size of the bags in the set is at most $1+\eps$ times the total size of the bags in the subset, we conclude that this rounding of the bag sizes will increase the cost of any solution by a multiplicative factor of at most $1+\eps$ (using the linearity of the $\lp$ norm, and therefore the expected value also increases by at most this factor).
Thus, in what follows, we will consider only bag sizes that belong to the set $B=\{(\eps + r\eps^2) \cdot {LB} : r\geq 0, (\eps + r\eps^2) \cdot {LB} \leq 5\cdot {UB} \}\cup\{0\}$.
We use this set by Lemma \ref{bigbaglp}.
We conclude that the following holds.

\begin{corollary}
If the guessed information is correct, then there is a solution of expected value of the $\lp$ norm of at most $(1+\eps)(1+3\eps) \cdot \opt$ that uses only bags with allowed sizes in $B$.
\end{corollary}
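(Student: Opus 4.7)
The plan is to obtain the corollary by composing two approximation losses: the $(1+3\eps)$ loss already established in Lemma \ref{bigbaglp}, and a further $(1+\eps)$ loss incurred when the bag sizes are rounded up to the discrete grid $B$. Concretely, I would start from the feasible solution furnished by Lemma \ref{bigbaglp}, whose expected $\lp$ cost is at most $(1+3\eps)\cdot \opt$ and in which every non-empty bag has size in $[\eps\cdot LB,\, 5\cdot UB]$. The first-stage assignment $\sigma_1$ and all second-stage assignments $\sigma_2^{(\kappa)}$ are kept unchanged; only the size of each bag is replaced by its allowed size $P'(i)\in B$, which by definition is the smallest element of $B$ that is at least $P(i)$.

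The key step is the pointwise bound $P'(i)\le (1+\eps)\,P(i)$ for every non-empty bag. By construction, $P'(i) - P(i) < \eps^2\cdot LB$, while Lemma \ref{bigbaglp} guarantees $P(i)\ge \eps\cdot LB$, so $P'(i) \le P(i)+\eps^2\cdot LB \le (1+\eps)\,P(i)$. (Empty bags have $P'(i)=P(i)=0$, so the inequality is trivial there.) Summing this bound over any subset of bags assigned to a common machine $i$ in scenario $\kappa$, the ``allowed work'' $W_i'$ of that machine (computed using allowed sizes as upper bounds on actual sizes) satisfies $W_i' \le (1+\eps)\, W_i$, where $W_i$ is the true work.

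From this, the per-scenario cost inflates by at most $(1+\eps)$: since the allowed work vector is bounded componentwise by $(1+\eps)$ times the true work vector, positive homogeneity of the $\lp$ norm gives $\bigl(\sum_{i\le\kappa}(W_i')^{\pp}\bigr)^{1/\pp} \le (1+\eps)\bigl(\sum_{i\le\kappa}W_i^{\pp}\bigr)^{1/\pp}$ in every scenario $\kappa$. Taking the expectation over $q$ and applying linearity of expectation, the expected $\lp$ cost of the rounded solution is at most $(1+\eps)$ times the expected $\lp$ cost of the solution from Lemma \ref{bigbaglp}, which is at most $(1+\eps)(1+3\eps)\cdot \opt$. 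Since every $P'(i)$ belongs to $B$ by construction, this solution uses only allowed bag sizes in $B$, establishing the corollary.

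There is no real obstacle beyond bookkeeping: the only place where the argument could fail is on bags whose size is smaller than $\eps\cdot LB$ (where the additive rounding error $\eps^2\cdot LB$ would not translate into a $(1+\eps)$ multiplicative error), and Lemma \ref{bigbaglp} was precisely designed to rule out such bags. The use of $P'$ as an upper bound on actual work is legitimate because we only need to upper bound the cost of the solution, and for the $\lp$ norm an entrywise upper bound on the work vector propagates into an upper bound on the cost.
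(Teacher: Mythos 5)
Your proof is correct and follows essentially the same route as the paper: both derive the pointwise bound $P'(i)\le(1+\eps)P(i)$ from the lower bound $P(i)\ge\eps\cdot LB$ on non-empty bags together with the grid step $\eps^2\cdot LB$, then propagate this per-machine to a $(1+\eps)$ inflation of each scenario's $\lp$ norm, and finish by linearity of expectation. (Your phrasing ``positive homogeneity'' is actually the more accurate term than the paper's ``linearity'' for why a componentwise $(1+\eps)$ factor on the load vector yields the same factor on the norm.)
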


Note that the allowed size of a bag may be slightly larger than the actual total size of jobs assigned to the bag. Later, the allowed size acts as an upper bound (without a lower bound), and we take the allowed size into account in the calculation of machine completion times.

\paragraph{Rounding job sizes.}
We apply a similar rounding method for job sizes.  Recall that by the partitioning rule, every job $j$ is non-huge and thus has size at most $2\cdot {UB}$.  Next, we apply the following modification to the jobs of sizes at most $\eps^2 \cdot {LB}$.  Whenever there is a pair of jobs, each of which has size at most $\eps^2 \cdot {LB}$, we unite the two jobs.  This is equivalent to restricting our attention to solutions of the first stage where we add the constraint that the two (deleted) jobs must be assigned to a common bag.  We repeat this process as long as there is such a pair.  If there is an additional job of size at most $\eps^2 \cdot {LB}$, we delete it from the instance, and in the resulting solution (after applying the steps below on the resulting instance) in every scenario we add the deleted job to bag $1$.  This addition of the deleted job increases the $\lp$ norm of every scenario by at most $\eps^2 \opt_{k_{\max}}$ that is at most $\eps^2$ times the optimal cost of the corresponding scenario. So the resulting expected value of the $\lp$ norm will be increased by at most $\eps^2\opt$.  We consider the instance without this possible deleted job, and we prove the following.

\begin{lemma}
The optimal expected value of the \lp\ of the instance resulting from the above modification of the job sizes among all solutions with allowed bag sizes in the following modified set
$B' =\{(\eps + r\eps^2) \cdot {LB} : r\geq 0, (\eps + r\eps^2) \cdot {LB} \leq 6 \cdot {UB} \}\cup\{0\}$    is at most
$(1+2\eps) (1+\eps) (1+3\eps) \cdot \opt$.
\end{lemma}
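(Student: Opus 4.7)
I would take a feasible solution $(\sigma_1,\sigma_2)$ achieving expected $\lp$-cost at most $(1+\eps)(1+3\eps)\cdot \opt$ whose allowed bag sizes lie in $B$; such a solution exists by the preceding corollary. I would modify only the first-stage assignment $\sigma_1$, reshuffling the new (merged) jobs among the bags, while leaving each second-stage schedule $\sigma_2^{(\kappa)}$ unchanged. Feasibility in every scenario is then automatic, and the only thing that needs to be tracked is how much the size of each bag (and hence the load of each machine in each scenario) grows.

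For each bag $i\in\B$, let $s_i$ denote the total size of jobs of size at most $\eps^2\cdot LB$ originally placed in bag $i$; these are exactly the jobs affected by the merging step. I delete all of them from $\sigma_1$ and then reinsert the new merged jobs one by one via the following greedy rule: place the next merged job in any bag whose accumulated mass of newly-placed merged jobs is currently strictly less than $s_i$. Since every merged job has size at most $2\eps^2\cdot LB$, this invariant keeps the accumulated mass below $s_i + 2\eps^2\cdot LB$ at all times. The rule is never blocked, for otherwise the total accumulated mass would already be at least $\sum_i s_i$, i.e., the total mass of all merged jobs, leaving nothing to place; this is exactly the averaging argument used in the makespan section.

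Consequently every non-empty bag grows in size by an additive term of at most $2\eps^2\cdot LB$. Since the original allowed size of every non-empty bag was at least $\eps\cdot LB$, this growth is a multiplicative factor of at most $1+2\eps$, and after re-rounding up to the nearest value of the form $(\eps+r\eps^2)\cdot LB$ the new allowed size is still at most $5\cdot UB\cdot(1+2\eps)<6\cdot UB$, so it belongs to $B'$. Because the second-stage assignment $\sigma_2^{(\kappa)}$ is left untouched, every machine load in every scenario $\kappa$ is scaled entrywise by a factor of at most $1+2\eps$; monotonicity and positive homogeneity of the $\lp$ norm then scale the cost of each scenario by at most $1+2\eps$, and by linearity of expectation the overall expected cost is scaled by at most $1+2\eps$. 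Combining with the bound $(1+\eps)(1+3\eps)\cdot\opt$ we started from yields the claimed $(1+2\eps)(1+\eps)(1+3\eps)\cdot\opt$.

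The one subtlety, which I view as the main obstacle to a clean proof, is ensuring that the load increase is a genuine multiplicative factor rather than an additive one. This relies on two observations: bags that were originally empty have $s_i=0$ and thus remain empty under the greedy rule, so the $(1+2\eps)$ factor applies uniformly to every bag that contributes anything to a machine's load; and the non-emptiness guarantees that the additive increment $2\eps^2\cdot LB$ can be absorbed into $(1+2\eps)$ times the previous allowed size. Once this is in place, positive homogeneity of $\lp$ handles the rest, in contrast to the Santa Claus case where the analogous argument required a sharper lower bound on $\opt_\kappa$ to absorb the decrease.
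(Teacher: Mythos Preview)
Your proposal is correct and follows essentially the same approach as the paper: start from the solution guaranteed by the preceding corollary, compute $s_i$ for each bag, greedily reinsert the merged jobs so that each bag gains at most an additive $2\eps^2\cdot LB$, and convert this to a multiplicative $(1+2\eps)$ factor using the lower bound $\eps\cdot LB$ on non-empty allowed bag sizes. Your explicit invocation of positive homogeneity of the $\lp$ norm and the observation that empty bags stay empty are a bit more detailed than the paper's version, but the argument is the same.
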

\begin{proof}
We apply the following process on the solution $\sigma_1$ of the first stage.  For every non-empty bag $i$, we calculate the total size of jobs of size at most $\eps^2 \cdot {LB}$ assigned to $i$, and let $s_i$ be this value.  We delete the assignment of the jobs of size at most $\eps^2 \cdot {LB}$ from $\sigma_1$, we start defining the assignment of the new jobs that we created one by one in the following way.  We compute an upper bound $s_i+2\eps^2 \cdot {LB}$ on the total size of these new jobs that we allow to assign to bag $i$.  For every new job $j$, we assign it to a bag subject to the constraint that after this assignment, the new total size of new jobs in the bag will not exceed the upper bound.  This is possible, because whenever a bag $i$ cannot get a new job, the previously assigned new jobs to $i$ have total size at least $s_i$.  By summing over all $i$, and noting that before the modifications we started with a feasible solution, we conclude that there is always a feasible bag to use.  In this process all non-empty bag sizes were increased by an additive terms of $2\eps^2 \cdot {LB}$ which increases sizes by a factor of at most $1+2\eps$ since previously all allowed bag sizes were at least $\eps \cdot {LB}$, and hence the modification of the allowed bag sizes. The change in the cost of the solution holds due to the same reason as well as the increased upper bound on the bag sizes.
\end{proof}

We next round up the size of every job $j$ to be the next integer power of $1+\eps$. The size of every job cannot decrease and it may increase by a multiplicative factor of at most $1+\eps$. Job sizes are still larger $\eps^2 \cdot {LB}$ and at most $2(1+\eps)\cdot {UB}$. Thus, the number of different job sizes is $O(\log_{1+\eps} \frac {2\rho}{\eps^2}) \leq O(\frac{1}{\eps^4})$.
We increase the allowed size of each bag by another multiplicative factor of $1+\eps$ and round it up again to the next value of the form $(\eps + r\eps^2) \cdot {LB}$.
Thus, the expected value of the \lp\ norm of a feasible solution increases by a multiplicative factor of at most $(1+\eps)^2$ and the maximum bag size increases by a multiplicative factor of at most $(1+\eps)^2\leq \frac 76$.
So by our guessing step, there is a feasible solution with expected value of the \lp\ norm of at most $(1+\eps)^3(1+2\eps)(1+3\eps) \opt$ that uses only bags with allowed size in  $B'' = \{(\eps + r\eps^2)\cdot {LB} :  r\geq 0, (\eps + r\eps^2) \cdot {LB} \leq 7 \cdot {UB} \} \cup\{0\}$.

\medskip

A bag is called {\it tight} if the set of jobs of this bag cannot use a bag of a smaller allowed size.  Note that any solution where some bags are not tight can be converted into one where all bags are tight without increasing the cost. The proof of the next lemma is identical to the proof for \pra.

\begin{lemma}\label{PandP3}
For any tight bag it holds that the allowed size of the bag is at most $\frac{1}{\eps}$ times its actual size (the total size of jobs assigned to the bag, such that their rounded sizes are considered).
\end{lemma}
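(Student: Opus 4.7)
The plan is to imitate the proof of Lemma~\ref{PandP1} verbatim, with the single substitution $\opt \mapsto LB$. The structure of the allowed sizes is exactly parallel: in \pra\ they live in $\{(\eps + r\eps^2)\opt : r \geq 0\} \cup \{0\}$, while here the set $B''$ consists of values of the form $(\eps + r\eps^2)\cdot LB$ together with $0$. The rounded job sizes are bounded below by $\eps^2\cdot LB$ (playing the role of $\eps^2\cdot \opt$). Since a tight bag by definition cannot be replaced by one of strictly smaller allowed size, the gap between the allowed size and the actual size is bounded by the spacing of the allowed-size grid, namely $\eps^2\cdot LB$.

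I would split the argument into the same three cases. First, when the allowed size is $(\eps + r\eps^2)\cdot LB$ with $r\geq 1$, tightness forces the actual size to exceed the previous grid value $(\eps + (r-1)\eps^2)\cdot LB$, which is at least $\eps\cdot LB$. A direct comparison then gives $\eps \cdot [(\eps + r\eps^2)\cdot LB] \leq (\eps + (r-1)\eps^2)\cdot LB <$ actual size, using $\eps < 1/100$ so that $\eps^2 + r\eps^3 \leq \eps + (r-1)\eps^2$. Second, when the allowed size is $\eps \cdot LB$, tightness rules out the empty allowed size $0$, so the bag is non-empty and contains at least one job of rounded size strictly greater than $\eps^2\cdot LB$; hence the actual size is at least $\eps \cdot (\eps \cdot LB) = \eps \cdot$ (allowed size). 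Third, the case of allowed size $0$ is vacuous since then the bag is empty and both sides of the claimed inequality are $0$.

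I do not expect any genuine obstacle, as every ingredient (the spacing of the grid, the lower bound on rounded job sizes, and the definition of tightness) has been carried over unchanged from the \pra\ setting, modulo the rescaling by $LB$ in place of $\opt$. The paper itself notes that the proof is identical to the one of Lemma~\ref{PandP1}, and the above verification confirms that the constants and the inequality $\eps < 1/100$ suffice to close the computation without any new idea.
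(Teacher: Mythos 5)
Your proposal is correct and follows exactly the route the paper intends: the paper states that the proof is identical to that of Lemma~\ref{PandP1}, and your case analysis (grid spacing $\eps^2\cdot LB$, minimum nonzero allowed size $\eps\cdot LB$, rounded job sizes exceeding $\eps^2\cdot LB$, empty bags trivial) is precisely that argument rescaled from $\opt$ to $LB$, with the small inequality verified correctly.
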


\paragraph{Guessing an approximated histogram of the optimal \lp\ norm values on all scenarios.}  Our next guessing step is motivated by linear grouping of the $\opt_{\kappa}$ values.  Consider a histogram corresponding to the costs of a fixed optimal solution satisfying all above structural claims, where the width of the bar corresponding to scenario $\kappa \geq k$ is $q_{\kappa}$ and its height is the \lp\ norm value in this scenario (that is, $\opt_{\kappa}$).  Observe that when $\kappa$ is increased, the optimal \lp\ norm value does not decrease, so by plotting the bars in an increasing order of $\kappa$ we get a monotone non-decreasing histogram, where the total area of the bars is the expected value of the \lp\ norm of the optimal solution and the total width of all the bars is at most $1$. We sort the indexes at least $k$ of scenarios with (strictly) positive $q_{\kappa}$ and we denote by $K$ the resulting sorted list.  For every $\kappa\in K$ we compute the total width of the bars with indexes at least $k$ and at most $\kappa$ (using the ordering in $K$), and denote this sum by $Q_{\kappa}$ and let $Q'_{\kappa}=Q_{\kappa}-q_{\kappa}$.  Observe that the minimum height of a point in the histogram is $\opt_{k_{\max}}$ and its maximum value is not larger than $2\opt_k$ (since in each scenario $\kappa$ we have perhaps increased the $\opt_{\kappa}$, but the rounding steps could increase the costs for every scenario by a small multiplicative factor not exceeding $2$).  Here, we have the following bound on their ratio $$\frac{2\opt_k}{\opt_{k_{\max}}} = \frac{2\iota(k) \cdot k^{1/\pp}}{\iota(k_{\max}) \cdot k_{\max}^{1/\pp}} \leq \frac{2\iota(k)}{\iota(k_{\max})} \leq 2 \rho ,$$ so the number of integer powers of $1+\eps$ in the interval between these values is a function of $1+\eps$.

Next, we compute an upper bound histogram $W$ as follows first by selecting a sublist $K'$ of $K$.  An index ${\kappa}\in K$ belongs to $K'$ if there exists an integer $\ell$ such that $Q'_{{\kappa}}\leq \frac{\eps^2}{\rho} \cdot \ell < Q_{{\kappa}}$. Values of \lp\ norm will be increased such that the \lp\ norm  function will still be piecewise linear, but it will have a smaller number of image values.

The new upper bound histogram is defined as $\opt_{\kappa}$ for all points with horizontal value between $Q'_{\kappa}$ and up to $Q'_{{\kappa}'}$ where ${\kappa}'>{\kappa}$ is the index just after ${\kappa}$ in the sublist $K'$ or up to the last point where the histogram is defined, $Q_{k_{\max}}$ if ${\kappa}$ is the largest element of $K'$.  Since the original histogram was monotone non-increasing, the new histogram is pointwise not smaller than the original histogram.

We use the modified histogram $W$ to obtain another bound. For that we see $W$ as a step function whose values are the corresponding points in the upper edge of the histogram.  We define a new histogram by letting it be $W(x+\frac{\eps^2}{\rho})$ for all $x\in [0,1]$.  In this way we delete a segment of length $\eps^2/\rho $ from $W$ and we shift the resulting histogram to the left. Since the \lp\ norm for every scenario never exceeds $2\cdot \opt_k$, every point in $W$ has a height of at most $2  \cdot \rho $  times its minimum height and we deleted a segment of width of $\eps^2/\rho $.  Thus, the total area we delete is at most $\eps \cdot \opt_{k_{\max}}$.  The resulting histogram is pointwise at most the original histogram (and thus also not larger than $W$).  This property holds since every value $\opt_{\kappa}$ was extended to the right for an interval not longer than $\eps^2/\rho$. Thus, if we consider $W$ instead of the original histogram, we increase the cost of the solution by a multiplicative factor not larger than $1+3\eps$.

The guessing step that we use is motivated by this function $W$.  We let $W_{\kappa}$ be the height of the histogram $W$ in the bar corresponding to the scenario $\kappa$ rounded up to the next integer power of $1+\eps$.  The relevant values ${\kappa}$ are those in $K'$, and the histogram $W$ only has values of the form $(1+\eps)^{\lceil \log_{1+\eps} \opt_{\kappa}\rceil}$ for ${\kappa} \in K'$ (this means that $\opt_{\kappa}$ was rounded up to a power of $1+\eps$ and multiplied by at most $1+\eps$). We guess the histogram $W$.  This means to guess an approximated value of the optimal \lp\ norm value of $O(\frac{\rho}{\eps^2})$ scenarios. As explained above, the number of possibilities for each value of an approximated $\lp$ norm  for a scenario in $K'$ is upper bounded by a function of $\frac{1}{\eps}$.  Thus, the number of possibilities of this guessing step is upper bounded by a function of $\frac{1}{\eps}$.
In what follows, we consider the iteration of the algorithm below when we use the value of the guess corresponding to the optimal solution.

\paragraph{The template-configuration integer program.}  A {\em template} of a bag is a multiset of job sizes assigned to a common bag.  We consider only multisets for which the total size of jobs is at most $7 \cdot {UB}$ since the largest allowed size of any bag is not larger.  Since the number of distinct job sizes (in the rounded instance) is upper bounded by $\frac{2\rho}{\eps^3}-1$ and there are at most $\frac{7\rho}{\eps^2}$ jobs assigned to a common bag (since the rounded size of each job is above $\eps^2\cdot {LB}$), we conclude that the number of templates is at most $(\frac{2\rho}{\eps^3})^{(7\rho/\eps^2)}$, which is a constant depending only on $\frac{1}{\eps}$.
We are going to have a non-negative counter decision variable $y_t$ for every template $t$ that stands for the number of bags with template $t$.  Let $\tau$ be the set of templates, and assume that a template is a vector whose $\ell$-th component is the number of jobs with size equal to the $\ell$-th size are packed into a bag with this template.  In order for such assignment of the first stage to be feasible we have the following constraints where $n_{\ell}$ denotes the number of jobs in the rounded instance whose size is the $\ell$-th size of the instance.

$$ \sum_{t\in \tau} y_t \leq m \ , $$
$$ \sum_{t\in \tau} t_{\ell} \cdot y_t = n_{\ell} \  , \forall \ell .$$

We augment $K'$ with the minimum index in $K$ if this index does not already belong to $K'$, similarly to our action for the makespan objective.
Consider a scenario $\kappa \in K'$.  Recall that in scenario $\kappa$  the number of non-empty bags assigned to each machine is at most  $\frac{4\rho}{\eps}$. Define a configuration of a machine in scenario $\kappa$ to be a multiset of templates such that the multiset has at most $\frac{4\rho}{\eps}$ templates (including copies of templates).  For such a configuration $c$, we let $s(c)$ denote the total size of the templates in the multiset encoded by $c$.

The number of configurations is at most the number of templates to the power of $\frac{4\rho}{\eps}$.
Here, unlike the earlier schemes we will not screen the configurations but we will use a linear inequality in the integer program to upper bound the total \pp\ power of the $s(c)$ values. This is done because there is no threshold load for machines on one hand, but the cost is based on a sum on the other hand.

 We let $C^{({\kappa})}$ denote the set of configurations for scenario $\kappa$, where $c\in C^{(\kappa)}$ is a vector of $|\tau|$ components where component $c_t$ for $t\in \tau$ is the number of copies of template $t$ assigned to configuration $c$.  Components are non-negative integers not larger than $\frac{4\rho}{\eps}$.
For scenario $\kappa\in K'$, we will have a family of non-negative decision variables $x_{c,\kappa}$ (for all $c\in C^{(\kappa)}$) counting the number of machines with this configuration.

For each such scenario $\kappa \in K'$, we have a set of constraints each of which involve only the template counters (acting as a family of global decision variables) and the family of the $\kappa$-th local family of decision variables, namely the ones corresponding to this scenario.  The family of constraints for the scenario $\kappa$ are as follows.

$$ \sum_{c\in C^{(\kappa)}} x_{c,\kappa}= \kappa \ ,$$
$$ \sum_{c\in C^{(\kappa)}} c_t\cdot x_{c,\kappa} - y_t = 0 \ , \ \forall t\in \tau \ .$$
$$ \sum_{c\in C^{(\kappa)}} (s(c))^{\pp} \cdot x_{c,\kappa} \leq (W_{\kappa})^{\pp} \ . $$
Observe that the number of constraints in such a family of constraints is upper bounded by a constant depending only on $\eps$ (which is the number of possible templates plus $2$).  Observe that for a given scenario there are only a constant number of decision variables corresponding to the scenario and since $K'$ has a constant cardinality, we have a constant number of decision variables.

All decision variables are forced to be non-negative integers and we would like to solve the feasibility integer program defined above (with all constraints and decision variables).
We use Lenstra's algorithm \cite{lenstra1983integer,Kan83} for integer programming in fixed dimension to solve the integer program we have exhibited.
We apply the algorithm and obtain a feasible solution $(x,y)$ if such a solution exists.
In order to solve the above integer program in strongly polynomial time, we note that all coefficients of the constraint matrix and the right hand side are bounded by a constant depending on $1/\eps$ or by a polynomial in $m+n$ except for the new constraints we have introduced for upper bounding the total \pp-power of the size of the configurations is at most the guessed value of the histogram to the power of \pp. In order to achieve the situation where the coefficients of these constraints are replaced by a (single-) exponential function in $m+n$, we can use a constructive reducibility bound due to Frank and Tardos \cite{frank1987application} (see also \cite{EHKKLO23}). This result allows to replace the original linear constraint with a different constraint where all coefficients are bounded by $2^{d^3} \cdot n^{d^2}$ so that an integer non-negative vector in $\R^d$ that is at most $n$ in every component, satisfies the old inequality if and only if it satisfies the new inequality. In our case, $d$ is a function of $\eps$. Then, using the new inequality instead of the old one (and doing that for every scenario $\kappa\in K'$) would result a feasibility integer program for which Lenstra's algorithm \cite{lenstra1983integer,Kan83} would run in strongly polynomial time in $n$ and $m$ (times a computable function of $\frac{1}{\eps}$).

Based on such a feasible solution, we assign jobs to bags using the $y$ variables.  That is, for every $t\in \tau$, we schedule $y_t$ bags using template $t$.  By the constraint $ \sum_{t\in \tau} y_t \leq m$ there are at most $m$ bags, and the other bags will be empty.  Next, for every bag and every size $p$ of jobs in the rounded instance, if the template assigned to the bag has $\alpha$ jobs of this size, we will assign $\alpha$ jobs to this bag.  Doing this for all sizes and all bags is possible by the constraints $ \sum_{t\in \tau} t_{\ell} \cdot y_t = n_{\ell} \  , \forall \ell$, and these constraints ensure that all jobs are assigned to bags.  Next consider the assignment of bags to machines in each scenario.  Consider a scenario $\kappa'\in K$.  We find the minimum value $\kappa\in K'$ such that $\kappa\leq \kappa'$.  We will exhibit a solution with only $\kappa$ identical machines. We assign $x_{c,\kappa}$ machines to configuration $c$ (for all $c\in C^{(\kappa)}$).  It is possible by the constraint  $ \sum_{c\in C^{(\kappa)}} x_{c,\kappa}= \kappa$.  If a configuration $c$ assigned to machine $i$ is supposed to pack $c_t$ copies of template $t$, we pick a subset of $c_t$ bags whose assigned template is $t$ and assign these bags to machine $i$.  We do this for all templates and all machines.  In this way we assign all bags by the constraints $ \sum_{c\in C^{(\kappa)}} c_t\cdot x_{c,\kappa} - y_t = 0 \ , \ \forall t\in \tau$. Note that the modification for jobs consisted of merging small jobs. When such a merged job is assigned, this means that a subset of jobs is assigned. Since the configurations we used in scenario $\kappa$ satisfies that $ \sum_{c\in C^{(\kappa)}} (s(c))^{\pp} \cdot x_{c,\kappa} \leq (W_{\kappa})^{\pp} $, we conclude that the resulting schedule in this scenario has an \lp\ norm of at most $W_{\kappa}$.  Therefore, we have established the following.

\begin{corollary}
For every value of the guessed information for which the integer program has a feasible solution, there is a linear time algorithm that transform the feasible solution to the integer program into a solution to the rounded instance of \prc\ of cost at most $\sum_{q} q_k \cdot W_k$.
\end{corollary}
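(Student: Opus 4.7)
The plan is to read off explicit first-stage and second-stage assignments from the feasible integer solution $(x,y)$ and then check, scenario by scenario, that the resulting load vectors have $\lp$ norm bounded by the corresponding histogram value $W_\kappa$.

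For the first stage I would open $y_t$ bags of type $t$ for every template $t\in\tau$. The constraint $\sum_t y_t\leq m$ keeps the total number of bags within budget (any shortfall becomes empty bags), while the family $\sum_t t_\ell\cdot y_t=n_\ell$ asserts that, size class by size class, the demand created by the chosen templates matches the supply of rounded jobs; a greedy one-size-at-a-time fill therefore produces $\sigma_1$. Where a template slot is filled by a merged small job, we unpack it back into the original small jobs placed in the same bag. For the second stage, given a realization $\kappa'\in K$, I take the largest $\kappa\in K'$ with $\kappa\leq\kappa'$ (this exists since $K'$ contains the minimum index of $K$) and assign $x_{c,\kappa}$ machines to configuration $c\in C^{(\kappa)}$; the tally is correct by $\sum_c x_{c,\kappa}=\kappa$. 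For each such machine I pick $c_t$ bags of template $t$ for every $t$, which is globally consistent because $\sum_c c_t\cdot x_{c,\kappa}=y_t$ equates the demand for template-$t$ bags across configurations with the supply opened in the first stage. Any $\kappa'-\kappa$ extra machines are left empty, contributing zero to every norm.

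For the cost bound, the allowed bag sizes upper-bound the actual (rounded) bag sizes, so a machine with configuration $c$ carries load $W_i\leq s(c)$. Raising to the $\pp$-th power, summing over machines, and applying the new linear constraint of the integer program yields
\[
\sum_{i=1}^{\kappa} W_i^{\pp}\;\leq\;\sum_{c\in C^{(\kappa)}}(s(c))^{\pp}\cdot x_{c,\kappa}\;\leq\;(W_\kappa)^{\pp},
\]
so the $\lp$ norm in scenario $\kappa'$ is at most $W_\kappa$. Taking expectation over the realization, with $W_\kappa$ charged to every $\kappa'\in K$ represented by $\kappa\in K'$, gives expected cost at most $\sum_\kappa q_\kappa\cdot W_\kappa$. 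The key conceptual point — already built into the integer program rather than arising in this corollary itself — is the linearization of the $\pp$-th-power objective through the $(s(c))^\pp$ coefficients; once that is in place, the transformation step here is purely combinatorial bookkeeping, and I expect no real obstacle, with the $\pp$-th-power chain above being the only non-routine line.
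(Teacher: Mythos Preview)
Your proposal is correct and follows essentially the same approach as the paper: open $y_t$ bags per template, fill them size-class by size-class, then for each $\kappa'\in K$ use the representative $\kappa\in K'$ with $\kappa\leq\kappa'$ to place configurations on $\kappa$ machines (leaving the surplus empty), and read off the $\lp$ bound from the constraint $\sum_c (s(c))^{\pp} x_{c,\kappa}\leq (W_\kappa)^{\pp}$. One small remark: $s(c)$ in the paper is defined as the total \emph{job} size of the templates in $c$, so in the rounded instance the machine load equals $s(c)$ exactly; your inequality $W_i\leq s(c)$ is still valid, but the justification via ``allowed bag sizes upper-bound actual sizes'' is not the relevant one here.
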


Our scheme is established by noting that an optimal solution satisfying the assumptions of the guessed information provides us a multiset of templates all of which are considered in $\tau$ and a multiset of configurations  for which the corresponding counters satisfy the constraints of the integer program.  Thus, we conclude our third main result.

\begin{theorem}
Problem \prc\ admits an EPTAS.
\end{theorem}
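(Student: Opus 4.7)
The plan is to chain the guessing, partitioning, rounding, and integer-programming steps developed in this section, mirroring the EPTAS constructions for \pra\ and \prb\ but adapting each step to the $\lp$ objective. First I would guess $\opt$ within a factor of $1+\eps$ by Lemma~\ref{boundsslp}, round the values $\opt_\kappa$ up to multiples of $\eps\cdot \opt$, and enumerate the $O(m^2/\eps^3)$ candidate quadruples $(LB,\rho,k,k')$ supplied by Lemma~\ref{yylp}. For the correct guess, scenarios with $k'<\kappa<k$ contribute a small weighted cost that is absorbed by inflating their $\iota(\kappa)$ to $UB/\eps$, while scenarios with $\kappa<k$ are dispatched separately by placing each huge job (those with $p_j>2\cdot UB$) in its own bag and invoking a known EPTAS for $\lp$-norm minimization on identical machines over the resulting bags; the structural lemmas above guarantee that a near-optimal partition with non-huge bags of size at most $7\cdot UB$ exists. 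The remaining work concentrates on scenarios $\kappa\in[k,k_{\max}]$.

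For these scenarios I would apply the bag-size and job-size rounding rules developed above, using $\iota(\kappa)\in[LB,UB]$ and $UB/LB=\rho\leq(1/\eps)^{1/\eps^2+1}$ to bound, by a function of $\eps$ alone, the number of distinct job sizes, the maximum number of jobs per bag, the number of bags per machine in any near-optimal schedule, and hence the cardinalities of the template and configuration sets. I would then reduce the number of distinct per-scenario cost targets by linear grouping, selecting a representative sublist $K'\subseteq K$ of $O(\rho/\eps^2)$ scenarios and guessing an upper-bound piecewise-constant histogram $W$ on $K'$ whose number of possible values is again bounded by a function of $\eps$. For each such guess I would write a feasibility integer program in template counters $y_t$ and configuration counters $x_{c,\kappa}$ with the familiar constraints $\sum_t y_t\leq m$, $\sum_t t_\ell y_t=n_\ell$, $\sum_c x_{c,\kappa}=\kappa$, $\sum_c c_t x_{c,\kappa}=y_t$, augmented by the new per-scenario bound
$$\sum_{c\in C^{(\kappa)}} (s(c))^{\pp}\, x_{c,\kappa}\leq (W_\kappa)^{\pp}\qquad \forall\,\kappa\in K'.$$
Unlike in the makespan and Santa Claus schemes, where the per-machine bound could be enforced simply by screening which configurations were allowed, the additive nature of the $\lp$ objective forces the bound to appear as one linear sum per scenario, with $s(c)$ and $W_\kappa$ acting as constants once the guesses are fixed. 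The total dimension of the IP is a function of $\eps$ alone.

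The main obstacle is ensuring that this IP can be solved in strongly polynomial time. Lenstra's algorithm~\cite{lenstra1983integer,Kan83} handles IPs of fixed dimension, but the new coefficients $(s(c))^{\pp}$ and right-hand sides $(W_\kappa)^{\pp}$ are $\pp$-th powers of input quantities and may have bit-length polynomial in $\pp\log n$ rather than in $\log n$, so the naive running time is not strongly polynomial in $n$ and $m$. I would remove this obstacle by invoking the constructive reducibility theorem of Frank and Tardos~\cite{frank1987application}: on the bounded integer domain $\{0,1,\ldots,n\}^d$ relevant to the IP, each offending constraint can be replaced by an equivalent one whose coefficients are bounded by $2^{d^3}\cdot n^{d^2}$, where $d$ is a function of $\eps$ alone. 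After this substitution Lenstra's algorithm runs in time $f(1/\eps)\cdot\mathrm{poly}(n,m)$, and the standard decoding of a feasible $(x,y)$ into $\sigma_1$ and the family $\sigma_2^{(\kappa)}$ for $\kappa\in K$ (transporting the $K'$-schedules to all of $K$ via the histogram $W$) produces a schedule of expected $\lp$-norm at most $(1+\Theta(\eps))\cdot\opt$; scaling $\eps$ down at the outset then yields the desired $(1+\eps)$-approximation.
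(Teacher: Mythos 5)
Your proposal is correct and follows essentially the same route as the paper's own proof: the same guessing of $\opt$, $(LB,\rho,k,k')$, the same huge-job/scenario split with a separate EPTAS for $\kappa<k$, the same bag/job rounding and linear-grouping histogram over a representative sublist $K'$, the same template-configuration IP with the per-scenario constraint $\sum_{c}(s(c))^{\pp}x_{c,\kappa}\leq (W_\kappa)^{\pp}$, and the same Frank--Tardos reduction so that Lenstra's algorithm runs in strongly polynomial time. No gaps to report.
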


\bibliographystyle{abbrv}

\end{document}